\newcolumntype{L}{>{\centering\arraybackslash}m{4cm}} % tables
  \setlist[itemize]{leftmargin=*}
  \setlist[enumerate]{leftmargin=*}
\DeclareMathOperator*{\E}{\mathbb{E}}
\newcommand{\eps}{\varepsilon}
\renewcommand{\epsilon}{\varepsilon}
\newcommand{\N}{\mathbb{N}}
\newcommand{\F}{\mathbb{F}}
\newcommand{\R}{\mathbb{R}}
\newcommand{\C}{\mathbb{C}}
\newcommand{\seq}{\subseteq}
\newcommand{\poly}{\mathrm{poly}}
\renewcommand{\sp}[1]{{\rm span}\{#1\}}
\renewcommand{\hat}{\widehat}
\renewcommand{\sp}{\mathsf{span}}
\newcommand{\ALG}{\mathsf{ALG}}
\newcommand{\Spec}{{\mathrm{Spec}}}
\renewcommand{\ip}[1]{{\langle #1 \rangle}}
\newcommand{\qft}{\mathsf{QFT}}
\newcommand{\ee}{\mathrm{e}}
\newcommand{\ii}{\mathrm{i}}
\newtheorem{mtheorem}{Theorem}
\newtheorem{mcorollary}[mtheorem]{Corollary}
\newtheorem{openproblem}{Open Problem}
\newtheorem{theorem}{Theorem}[section]
\newtheorem{lemma}[theorem]{Lemma}
\newtheorem{claim}[theorem]{Claim}
\newtheorem{corollary}[theorem]{Corollary}
\newtheorem{definition}[theorem]{Definition}
\newtheorem{remark}[theorem]{Remark}
\newtheorem*{theorem*}{Theorem}
\title{Quantum Worst-Case to Average-Case Reductions\\ for All Linear Problems}
\author{
Vahid R. Asadi\thanks{University of Waterloo. Email: \texttt{vrasadi@uwaterloo.ca}.}
\and
Alexander Golovnev\thanks{Georgetown University. Email: \texttt{alexgolovnev@gmail.com}.}
\and
Tom Gur\thanks{University of Warwick. Email: \texttt{tom.gur@warwick.ac.uk}. Tom Gur is supported by the UKRI Future Leaders Fellowship MR/S031545/1 and an EPSRC New Horizons Grant EP/X018180/1.}
\and
Igor Shinkar\thanks{Simon Fraser University. Email: \texttt{ishinkar@sfu.ca}.}
\and
Sathyawageeswar Subramanian\thanks{University of Warwick. Email: \texttt{Sathya.Subramanian@warwick.ac.uk}.}
}
\date{}
\begin{document}
\maketitle

\begin{abstract}
We study the problem of designing worst-case to average-case reductions for quantum algorithms. For all linear problems, we provide an explicit and efficient transformation of quantum algorithms that are only correct on a small (even sub-constant) fraction of their inputs into ones that are correct on \emph{all} inputs. This stands in contrast to the classical setting, where such results are only known for a small number of specific problems or restricted computational models. En route, we obtain a tight $\Omega(n^2)$ lower bound on the average-case quantum query complexity of the Matrix-Vector Multiplication problem. 

Our techniques strengthen and generalise the recently introduced \emph{additive combinatorics} framework for classical worst-case to average-case reductions (STOC 2022) to the quantum setting. We rely on quantum singular value transformations to construct quantum algorithms for linear verification in superposition and learning Bogolyubov subspaces from noisy quantum oracles. We use these tools to prove a quantum local correction lemma, which lies at the heart of our reductions, based on a noise-robust probabilistic generalisation of Bogolyubov's lemma from additive combinatorics.

\end{abstract}

% \section*{Future version}
% \begin{enumerate}
% \item Future directions: Matrix-matrix multiplication? Groups? Robust quasi-polynomial Bogolyubov-Ruzsa?
% \item for arxiv - subconstant $\alpha$? $\alpha=\omega(1/\sqrt[3]{n})$?
% \end{enumerate}

\pagenumbering{roman}
\thispagestyle{empty}
\newpage
\setcounter{tocdepth}{2}
\tableofcontents
\newpage
\pagenumbering{arabic}

\newpage
\section{Introduction}
\label{sec:intro}
Average-case complexity is a central area of research in the theory of computing, which studies algorithms that solve problems on average inputs. This notion provides a paradigm for designing efficient algorithms that work on many relevant inputs, even if the worst-case complexity of the problem is high (cf., the standard textbooks \cite{goldreich_2008,arora2009computational}).

Worst-case to average-case reductions are transformations of algorithms that are correct on a~small fraction of their inputs into algorithms that are correct \emph{on all} inputs. That is, given an algorithm $\ALG$ for computing a function $f$ that satisfies $\Pr_x[\ALG(x)=f(x)] \geq \alpha$, the goal is to boost the \emph{success rate} $\alpha$ (i.e., the fraction of inputs upon which the algorithm is correct) to~$1$ without significantly increasing the algorithm’s complexity. We stress that for general problems even if there is an efficient way to verify the output of $\ALG$ (e.g., if $\ALG$ outputs a flag indicating that it has succeeded), it is still unclear if such a goal is achievable at all.

We can view worst-case to average-case reductions both as a means for deriving average-case hardness results from worst-case lower bounds, and also as a paradigm for designing worst-case algorithms by first constructing algorithms that are only required to succeed on a small fraction of their inputs and then using the reduction to obtain algorithms that are correct on all inputs.

In this work, we study \emph{efficient} worst-case to average-case reductions where the success rate is low (e.g., where the average-case algorithm is only correct on 1\% of the inputs\footnote{We stress that the 1\% regime is far more challenging. Indeed, in the 99\% regime, simple self-correction can be used to obtain fine-grained worst-case to average-case reductions for a number of problems \cite{BlumLR90}.}) in the setting of \emph{quantum computing}. On the one hand, the quantum setting is more complex and poses significant challenges since we need to transform a much larger class of average-case algorithms. On the other hand, the quantum setting also allows us to use powerful quantum procedures in the design of the worst-case algorithms to avoid classical bottlenecks.

This paper deals with the following fundamental question regarding quantum average-case complexity:
\begin{tcolorbox}
\begin{center}
    Is it possible to transform efficient quantum algorithms that are only correct on $1\%$ of their inputs into efficient quantum algorithms that are correct \emph{on all} inputs?
\end{center}
\end{tcolorbox}

\subsection{Our contributions}
We provide a strong, positive answer to the question above. In fact, we show that not only are such transformations possible, but that we can construct explicit and efficient quantum worst-case to average-case reductions \emph{for all linear problems} with only constant blowup in the complexity.

This stands in stark contrast to the case of classical algorithms, where such reductions are only known for a small number of specific problems or restricted models. Furthermore, our reduction not only supports average-case algorithms in the $1\%$ regime but also algorithms where the success rate $\alpha$ tends to zero; that is, algorithms that are only correct on a vanishing fraction of their inputs.

% To state our result, we first establish basic notation. A linear problem is specified by a family of matrices $M:=\{M_n\in\F^{n\times n}\}_{n\in\N}$, where on input $v\in\F^n$ the solution to the problem is the vector $Mv$, omitting the subscript of $M$ for readability. We consider quantum algorithms with oracle access to $M$ and $v$ and refer to their query complexity as the total number of calls made to both oracles.

In the following, generalising the definition of $\mathbf{BQP}$ algorithms to fine-grained search problems in the standard way, we define an average-case quantum algorithm as a uniformly generated set of quantum circuits $\{C_n\}_{n \in \N}$ which, upon measurement, output correctly with probability $\alpha$, where the probability is taken over \emph{both} the random input and the measurements. (See formal definitions in \cref{sec:main}.)

A linear problem is characterised by a family of matrices $\mathcal{M}:=\{M_n\in\F^{n\times n}\}_{n\in\N}$, where on input $v\in\F^n$ the solution to the problem is the vector $Mv$, omitting the subscript on $M$ for readability. Linear problems constitute one of the most fundamental classes of problems, generalising many important computational tasks such as polynomial evaluation, computing discrete Fourier transformations, homology, and various computational tasks for error correcting codes.

Our main result is a worst-case to average-case reduction which shows that for all linear problems, an efficient quantum algorithm that is only successful on a small fraction of its inputs can be explicitly transformed into a similarly efficient quantum algorithm that is correct with high probability \emph{on all inputs}. $\Pi_x$ below denotes an orthogonal projection on the output register of $\ALG$ that represents measuring the outcome $x\in\F^n$ in the standard basis.

\begin{mtheorem}[Informally stated; see \cref{thm:main-circuit}]
\label{thm:circuit-informal}
    Let $\F$ be a finite field, $M:=\{M_n\in\F^{n\times n}\}_{n\in\N}$ be any linear problem, and $\ALG$ be an average-case quantum algorithm of (gate) complexity $T$ for $M$ satisfying 
    \begin{equation*}
        \Pr_{v,\ALG}[\ALG(v) = Mv]  =
        \E_{v\in\F^n}\left[\norm{\Pi_{Mv}{\ALG}\ket{v}\ket{0}}^2\right] \geq \alpha\;.
    \end{equation*}
    % where $\Pi_{Mv}$ is an orthogonal projection on the output register of $\ALG$ onto the subspace of correct answer.
    %that indicates whether the algorithm outputs the correct answer.
    
    Then, for every constant $\delta>0$, there exists a worst-case quantum algorithm ${\ALG'}$ of (gate) complexity $(T+n^{3/2})\cdot\poly(1/\alpha)$ that succeeds over all inputs with high probability, i.e., 
    \begin{equation*}
    \forall v \in \F^n \quad
        \Pr_{\ALG'}[\ALG'(v) = Mv]  =
        \norm{\Pi_{Mv}{\ALG'}\ket{v}\ket{0}}^2 \geq 1 - \delta\;.
    \end{equation*}
    % The worst-case algorithm $\ALG'$ makes $O(1/\alpha^{4.5})$ calls to $\ALG$, makes $O(n^{1.5}/\alpha^{3.5})$ calls to a unitary implementation of $M$, and uses $\widetilde{O}(n^{1.5}/\alpha^{1.5})$ additional elementary gates and $\widetilde{O}(n)$ ancillary qubits.
\end{mtheorem}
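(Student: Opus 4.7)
The plan is to lift the additive-combinatorics framework for worst-case to average-case reductions into the quantum setting. I would first define the success set $S := \{v \in \F^n : \norm{\Pi_{Mv}\ALG\ket{v}\ket{0}}^2 \geq \alpha/2\}$, which by Markov has density $|S|/|\F|^n \geq \alpha/2$ in $\F^n$. Bogolyubov's lemma then supplies a linear subspace $H \leq \F^n$ of codimension $\poly(1/\alpha)$ inside the fourfold sumset $S + S - S - S$. The central structural claim I aim to establish is a noise-robust, probabilistic refinement of this lemma: for every worst-case input $v \in \F^n$, one can efficiently sample $s_1, s_2, s_3, s_4 \in S$ and a correction $h \in H$ satisfying $s_1 + s_2 - s_3 - s_4 = v - h$, with each marginal $s_i$ close to uniform on $S$. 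By $\F$-linearity of $M$, $Mv = Ms_1 + Ms_2 - Ms_3 - Ms_4 + Mh$, so the worst-case query would reduce to four calls of $\ALG$ on good inputs from $S$ together with one evaluation of $M$ on $H$.

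The next step addresses the fact that a single invocation of $\ALG$ on $v \in S$ outputs $Mv$ only with probability $\alpha/2$, which is too small to union-bound over four correlated calls. I would invoke the quantum singular value transformation here: first implement a linear verifier in superposition that, given $(v,y)$, coherently decides whether $y = Mv$ by reflecting about $\ALG\ket{v}\ket{0}$ on the output register, and then drive fixed-point QSVT amplification with this verifier. The resulting subroutine $\widetilde{\ALG}$ should output $Mv$ with probability at least $1-\delta/10$ on every $v \in S$, using only $\poly(1/\alpha)$ invocations of the original $\ALG$. This boosted oracle realises the quantum local correction primitive at the heart of the reduction.

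To evaluate the Bogolyubov correction term $Mh$ on the fly, I must identify $H$ explicitly and store enough of $M|_H$ using only noisy access through $\widetilde{\ALG}$. The plan is to prepare approximate superpositions supported on $S$ entangled with their images under $\widetilde{\ALG}$, apply the quantum Fourier transform over $\F^n$, and extract dual characters annihilating $H$ via Grover-type amplification; $\poly(1/\alpha)$ such Fourier samples would produce a basis of $H^\perp$. Subsequent calls to $\widetilde{\ALG}$ on Bogolyubov-adjusted basis vectors of $H$ populate a classical lookup table that returns $Mh$ for arbitrary $h \in H$ in $O(n)$ time. The combined cost of this Grover-assisted learning and the surrounding linear-algebraic bookkeeping is what contributes the $n^{3/2}$ term in the final complexity.

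The worst-case algorithm $\ALG'$ would then, on input $v$, classically compute the coset of $v$ modulo $H$, draw a tuple $(s_1, s_2, s_3, s_4, h)$ from the probabilistic Bogolyubov distribution, invoke $\widetilde{\ALG}$ on each $s_i$, and output $Ms_1 + Ms_2 - Ms_3 - Ms_4 + Mh$; a union bound over four boosted calls with individual failure $\delta/10$ pushes the total error below $\delta$. The hardest part of the argument, and the main obstacle, will be proving the noise-robust probabilistic Bogolyubov lemma itself: one must simultaneously control the Fourier-analytic codimension of $H$, the amplitude distortion introduced by QSVT boosting, and the statistical quality of the sampled tuples, so that four correlated calls to $\widetilde{\ALG}$ all remain comfortably in its high-success regime. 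If these three error budgets can be made compatible, the final gate complexity comes out to $(T + n^{3/2}) \cdot \poly(1/\alpha)$, as claimed.
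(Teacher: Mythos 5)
Your high-level plan — Markov to get a dense success set, a Bogolyubov sumset decomposition, QSVT-based boosting of the average-case algorithm on good inputs, and Fourier sampling to learn the annihilator of the Bogolyubov subspace — matches the architecture the paper actually uses. But two of the load-bearing steps in your sketch would fail as written.

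The verifier you propose is unsound. You say to "coherently decide whether $y = Mv$ by reflecting about $\ALG\ket{v}\ket{0}$," but even on $v \in S$ the state $\ALG\ket{v}\ket{0}$ has only $\sqrt{\alpha/2}$ overlap with $\ket{Mv}$, so a reflection about it gives essentially no information on whether $y=Mv$. Worse, amplifying against $\ALG$'s own output simply drives you towards whatever $\ALG$ is most likely to emit, which need not be $Mv$ — an unreliable oracle cannot serve as its own ground truth. The paper's verification (\cref{lem:qVerify-simple}, \cref{lem:qVerification}) does not consult $\ALG$ at all: it queries the matrix oracle $U_M$ directly and runs a quantum search with fixed-point amplitude amplification for a coordinate $i$ with $(Mv)_i \neq y_i$, at cost $O(n^{3/2})$. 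This is the genuinely quantum ingredient (classical verification costs $\Omega(n^2)$), it is what actually produces the $n^{3/2}$ in the bound, and it is what makes the boosting and the indicator-oracle construction sound. Without a correct verifier there is nothing to amplify towards.

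The decomposition step is also stated in the wrong direction and, read literally, is infeasible. Bogolyubov places $H \subseteq 4S$, so what is decomposable as $s_1+s_2-s_3-s_4$ with $s_i\in S$ is a vector \emph{inside} $H$; to cover arbitrary $v$ you need a shift $h$ with $v-h\in H$, i.e.\ $h$ lives in a $\poly(1/\alpha)$-dimensional \emph{complement} of $H$, not in $H$ itself. Your proposed "lookup table for arbitrary $h \in H$" would require $M$ on a basis of $H$, which is $\Theta(n^2)$ data and blows the budget. The paper's fix (\cref{lem:quantum-local-correction}) is to row-reduce the learned annihilator so that the shift can be taken to be a \emph{sparse} vector $s=\sum_{j=1}^{t}\ip{y,b_j}\,\vec{e}_{k_j}$ supported on $t=O(1/\alpha^2)$ coordinates; then $Ms$ is computable in $O(n/\alpha^2)$ queries to $U_M$. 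You should also pin down the "noise-robust" qualifier: the Fourier spectrum you recover comes from a polynomial approximation of $1_S$ that is unreliable on a potentially very dense "wasteland" slice of inputs, so the robust Bogolyubov lemma (\cref{lem:apx-bogolyubov}) together with a random choice of threshold is required to make the learned subspace actually usable.
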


Note that every linear problem can be trivially solved in time $O(n^2)$. Our result shows that any non-trivial (subquadratic) average-case quantum algorithm can be transformed into a non-trivial (sub-quadratic) quantum algorithm that works for \emph{all} inputs. We also remark that constructing worst-case to average-case reductions for linear problems becomes significantly harder as we consider smaller fields, as we discuss in the technical overview (\cref{sec:overview}). We stress that our reductions also hold for small fields, including $\F_2$.

Our proof of \cref{thm:circuit-informal} builds upon a machinery that we develop in the quantum query model. This allows us to obtain the following worst-case to average-case reduction for the fundamental and well-studied problem of Matrix-Vector Multiplication in the quantum query model \cite{BuhrmanSpalek04,KothariThesis}.

\begin{mtheorem}[Informally stated; see \cref{thm:main-query}]
\label{thm:query-informal}
    Let $\ALG$ be an average-case quantum query algorithm with oracle access to a matrix $M$ and a vector $v$ over a finite field $\F$. Suppose that $\ALG$ makes $q$ queries and satisfies
    \begin{equation*}
        \Pr_{\substack{M,v,\\ \ALG}}[\ALG^{M,v} = Mv]  =
        \E_{\substack{M \in \F^{n\times n}\\v\in\F^n}}\left[\norm{\Pi_{Mv}{\ALG}^{M,v}\ket{0}}^2\right] \geq \alpha\;.
    \end{equation*}
    % where $\Pi_{Mv}$ is an orthogonal projection on the output register of $\ALG$ that indicates whether the algorithm outputs the correct answer.
    %to achieve success rate $\alpha>0$ taken over \emph{both} $M$ and $v$.
    Then, for every constant $\delta>0$, there exists a worst-case quantum query algorithm $\ALG'$ with query complexity $(q+n^{3/2})\cdot\poly(1/\alpha)$ that succeeds on \emph{all inputs} with high probability, i.e.,
    \begin{equation*}
    \forall M \in \F^{n\times n}, v \in \F^n \quad
        \Pr_{\ALG'}[(\ALG')^{M,v} = Mv]  =
        \norm{\Pi_{Mv}{(\ALG')}^{M,v}\ket{0}}^2 \geq 1 - \delta\;.
    \end{equation*}
\end{mtheorem}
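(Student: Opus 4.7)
The plan is to reduce Theorem \ref{thm:query-informal} to the query-model worst-case-to-average-case reduction for linear problems on the vector argument (the core machinery underlying Theorem \ref{thm:circuit-informal}), by randomizing the matrix oracle so as to convert the joint average-case guarantee over $(M,v)$ into an average-case-in-$v$ guarantee for every fixed matrix $M$.

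First, I will construct a derived quantum query algorithm $\ALG_1$ as follows. Given oracle access to a matrix $M$ and a vector $v$, $\ALG_1$ samples $R \in \F^{n \times n}$ uniformly at random (classical internal randomness, stored by the algorithm), simulates an oracle for $M + R$ at the cost of one query to $M$ per access (each query to $(M+R)[i,j]$ is answered by querying $M[i,j]$ and adding the known $R[i,j]$), invokes $\ALG^{M+R, v}$ to obtain a putative answer $z$, then queries the $n$ entries of $v$ into fresh registers (in superposition) and uses the stored $R$ to coherently compute $Rv$, and finally outputs $z - Rv$. This uses $q + n$ queries. For any fixed $M$, the random variable $M + R$ is uniform on $\F^{n \times n}$ independently of $M$, so
\begin{equation*}
\Pr_{R, v, \ALG}\bigl[\ALG_1^{M, v} = Mv\bigr] = \Pr_{R, v, \ALG}\bigl[\ALG^{M+R, v} = (M+R)v\bigr] \geq \alpha,
\end{equation*}
which says that $\ALG_1^{M, \cdot}$ is an average-case-in-$v$ quantum query algorithm for the linear problem $v \mapsto Mv$, for every fixed $M$.

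Next, I will invoke the paper's worst-case-to-average-case reduction for linear problems in the query model on $\ALG_1$. Since this reduction accesses $\ALG_1$ only as a black box and $\ALG_1$ in turn accesses $M$ through its oracle, the resulting algorithm $\ALG'$ is uniform in $M$ and satisfies $\Pr_{\ALG'}[\ALG'^{M, v} = Mv] \geq 1 - \delta$ for every $(M,v)$. The total query cost is $(q + n + n^{3/2})\cdot\poly(1/\alpha)$, which fits within the claimed $(q + n^{3/2})\cdot\poly(1/\alpha)$ since $n \leq n^{3/2}$.

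The main obstacle is ensuring that the $v$-reduction applies uniformly when $M$ is provided as a quantum oracle rather than as a fixed part of the problem specification. This hinges on the fact that the noise-robust Bogolyubov subspace learning and local correction machinery developed in the paper treats its average-case oracle as a black box and operates via quantum singular value transformations that do not require structural knowledge of the underlying linear operator. Verifying this uniformity---checking that no component of the reduction implicitly depends on knowing $M$ in advance---is the chief technical step; once it is established, the simple matrix-randomization in step one suffices to bridge the joint average-case setting to the fixed-matrix average-case-in-$v$ setting handled by the $v$-reduction.
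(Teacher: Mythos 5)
Your matrix-shift idea is essentially the same one the paper uses, but there is a genuine gap in how you handle the randomness in $R$. For the reduction of \cref{lem:main} to apply, the algorithm fed into it must be a fixed quantum \emph{unitary} $\ALG^M$ (all of the singular-value-transformation machinery behind $\ALG_{\mathsf{verified}}$, the indicator oracle, and the Bogolyubov-subspace learner requires this) satisfying $\E_v\bigl[\norm{\Pi_{Mv}\ALG^M\ket{v}\ket{0}}^2\bigr] \geq \alpha$ for the specific $M$ in hand. Your $\ALG_1$ with a once-sampled classical $R$ satisfies this only \emph{in expectation over $R$}: the equality $\Pr_{R,v,\ALG}[\cdot]\geq\alpha$ does not give $\E_v[\cdot]\geq\alpha$ for each fixed $R$. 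By averaging, the precondition holds (with $\alpha$ replaced by $\alpha/2$) only for an $\Omega(\alpha)$ fraction of $R$'s, so a single invocation of \cref{lem:main} on $\ALG_1$ does not yield success probability $1-\delta$. Putting $R$ in coherent superposition to restore unitarity would require, on each query, a multiplexed lookup of $R_{ij}$ with superposed indices $(i,j)$ over an $n^2$-entry register; this blows up the ancilla count far beyond the $O(\alpha^{-2}\,n\log n)$ budget of \cref{lem:main} and was not intended by your construction (you describe $R_{ij}$ as ``known'').

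The paper resolves this by placing the $R$-randomization \emph{outside} \cref{lem:main}: each of $O(1/\alpha)$ iterations samples a fresh classical $R$, applies \cref{lem:main} to the fixed, explicitly known matrix $M' = M - R$ (so each $U_{M'}$ query costs one $U_M$ query), adds back $Rv$, and verifies the candidate answer via the $O(n^{3/2})$-query $\mathsf{MvPV}$ procedure (\cref{lem:qVerify-simple}). Because $M'$ is a ``good'' matrix with probability $\geq \alpha/2$ over $R$, and verification suppresses bad iterations, this boosts the overall success probability to $1-\delta$. This outer loop is precisely where the extra $\alpha^{-1}$ factor in \cref{thm:main-query} (exponent $-9/2$) versus \cref{lem:main} (exponent $-7/2$) comes from. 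Your proposal needs this outer loop with verification; its absence is the concrete gap.
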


In the query model, it is known that the worst-case quantum query complexity of Matrix-Vector Multiplication has a tight lower bound of $\Theta(n^2)$ (see, e.g., \cite{KothariThesis}). Hence as an immediate corollary of \cref{thm:query-informal}, we obtain a tight unconditional average-case lower bound for Matrix-Vector multiplication, showing that the problem remains hard even if the quantum algorithm is only required to succeed on a small fraction of the inputs.

\begin{mcorollary}
\label{cor:lowerbound}
For every constant $\alpha>0$, every average-case quantum query algorithm for Matrix-Vector Multiplication with success rate $\alpha$ must make $\Omega(n^2)$ queries. 
\end{mcorollary}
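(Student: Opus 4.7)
\textbf{Proof plan for \cref{cor:lowerbound}.}
The plan is to derive the average-case lower bound directly from the worst-case lower bound via the contrapositive of \cref{thm:query-informal}. Suppose, toward a contradiction, that for some constant $\alpha>0$ there is an average-case quantum query algorithm $\ALG$ for Matrix-Vector Multiplication with success rate at least $\alpha$ and query complexity $q=o(n^2)$.

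Apply \cref{thm:query-informal} to $\ALG$ with any fixed constant $\delta>0$ (say $\delta = 1/3$). This produces a worst-case quantum query algorithm $\ALG'$ that, on every input $(M,v)\in\F^{n\times n}\times \F^n$, outputs $Mv$ with probability at least $1-\delta$, and whose query complexity is
\begin{equation*}
    (q + n^{3/2})\cdot \poly(1/\alpha) \;.
\end{equation*}
Since $\alpha$ is a constant, $\poly(1/\alpha)=O(1)$, so the worst-case query complexity of $\ALG'$ is $O(q+n^{3/2}) = o(n^2)$.

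This contradicts the known $\Omega(n^2)$ lower bound on the worst-case quantum query complexity of Matrix-Vector Multiplication (e.g., \cite{KothariThesis}). The only potentially delicate step is ensuring the output format and success probability of $\ALG'$ match the model in which the $\Omega(n^2)$ lower bound is stated, but since worst-case lower bounds of this type are robust to bounded-error success probabilities via standard amplification (and $\delta$ is constant), this is immediate. Hence no such $\ALG$ can exist, and every average-case algorithm with constant success rate must use $\Omega(n^2)$ queries.
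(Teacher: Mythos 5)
Your proof is correct and matches the paper's own reasoning: the paper derives \cref{cor:lowerbound} by combining \cref{thm:query-informal} with the known $\Omega(n^2)$ worst-case quantum query lower bound for Matrix-Vector Multiplication, exactly as you do via the contrapositive. The only difference is cosmetic—the paper calls it an immediate corollary while you spell out the parameter accounting.
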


\subsection{Related work}
The study of the average-case complexity originates in the work of Levin~\cite{L86}, and follow-up works such as
\cite{BCGL92}. A long line of works established various barriers to designing worst-case to average-case reductions for $\mathbf{NP}$-complete problems (see, e.g., \cite{ImpagliazzoL90,I11} and references therein). We refer the reader to the classical surveys by Impagliazzo~\cite{I95}, Bogdanov and Trevisan~\cite{BT06}, and Goldreich~\cite{Goldreich2011} on this topic. 

On the other hand, there are known worst-case to average-case reductions for certain problems~\cite{L91,FF93,BFNW93,A1996,STV01}. For example, the problems underlying the classical number-theoretic cryptography (such as the RSA, discrete logarithm, and quadratic residuosity problems) are random self-reducible, and, therefore, admit efficient worst-case to average-case reductions (for fixed parameters). The celebrated work of Shor~\cite{S1994} gave a polynomial time quantum algorithm breaking the number-theoretic cryptosystems, which sparked interest in post-quantum cryptography, i.e., cryptography secure even against (polynomial time) quantum adversaries. A number of quantum and classical worst-case to average-case reductions~\cite{A1996,MR04,R09,LPR13,LS15,G10} allowed us to base the security of (post-quantum) lattice-based cryptography on the \emph{worst-case} quantum hardness assumptions for certain computational problems. Another interesting example of a (quantum) worst-case to average-case reduction was recently given in~\cite{LdW21} for problems related to phase estimation.

Recently, the study of fine-grained complexity~\cite{V18} of algorithmic problems sparked interest in designing \emph{efficient} worst-case to average-case reductions, i.e., reductions that do not suffer a polynomial overhead in the running time. Such reductions are often motivated by fine-grained cryptographic applications~\cite{BRSV17, BRSV18, GR18, LLW19, BBB21,DLV20}.  %Despite much effort, there are still no known constructions of one-way functions and public-key cryptography primitives from well-established fine-grained assumptions. which the question of constructing efficient worst-case to average-case reductions for other fine-grained problems still attracts much attention.

In a recent work \cite{AGGS22}, a new framework for showing efficient worst-case to average-case reductions was introduced. This framework uses the quasi-polynomial Bogolyubov-Ruzsa lemma to show reductions in the classical setting that support the low-agreement regime, where the average-case algorithm is only guaranteed to succeed on 1\% of the inputs. This framework was used to obtain reductions for a few specific problems or restricted models of classical computation. 

% In the quantum setting, worst-case to average case reductions were mostly studied in the context of (non-fine-grained) cryptography \cite{} and quantum supremacy \cite{}. These reductions are for a handful specific problem and are in the 99\% regime, as opposed to the harder 1\% regime that is the focus of this work. 

\subsection{Open problems}
Our work opens up several new directions of investigation. Below, we highlight three open problems of particular interest. Recall that in \cref{thm:circuit-informal}, we have shown fine-grained \emph{quantum} worst-case to average-case reductions, where the success rate is arbitrarily small, \emph{for all linear problems}. These reductions make crucial use of quantum procedures that speed up classical computational tasks such as linear verification and learning of Bogolyubov subspaces from approximate indicators encoded in noisy quantum oracles.

Interestingly, unlike the general result above for quantum algorithms, in the classical setting, the aforementioned computational tasks constitute a complexity bottleneck, and in turn such worst-case to average-case reductions are only known for a small number of specific problems or restricted computational models \cite{AGGS22}. It remains open whether such general results for all linear problem can also be obtained in the classical setting.

\begin{openproblem}
    Are there efficient transformations of classical algorithms (or circuits) for general linear problems, which are only correct on 1\% of their inputs, into similarly efficient worst-case classical algorithms (or circuits)?
\end{openproblem}

Returning to the quantum setting, it is natural to ask whether the framework we constructed can be extended beyond the class of linear problems.

\begin{openproblem}
    Can efficient average-case quantum algorithms with success rate 1\% for large classes of \emph{non-linear} problems be transformed into efficient worst-case quantum algorithms?
\end{openproblem}

The last open problem we would like to raise refers to quantum algorithms that act on data of exponential size, encoded in the amplitudes of a quantum state, as in the celebrated HHL algorithm \cite{Harrow2009}. It would be highly appealing to extend the framework presented in this paper to this setting, where quantum algorithms are extremely powerful.

\begin{openproblem}
    Can we obtain fine-grained worst-case to average-case reductions for quantum algorithms such as the HHL algorithm, in which the data (and the output) are encoded in the amplitudes of a quantum state?
\end{openproblem}

% We would like to further investigate the phenomena of quantum worst-case to average-case reductions for problems that are not known to admit classical worst-case to average-case reductions (see, e.g., \cite{R09}). In this work, we constructed such reductions for all linear problems, while this question remains open in both classical uniform and non-uniform settings (see Open problems in~\cite{AGGS22}). What other problems admit worst-case to average-case reductions in the quantum setting?

% circuit lower bounds for linear problems, along the lines of counting arguments (\href{http://arxiv.org/abs/quant-ph/9508006}{Knill 1995})?

\subsection*{Acknowledgements}
We thank Richard Cleve for discussions that inspired this work. We also thank the anonymous referees for their very helpful suggestions.

\subsection*{Organisation}
The rest of the paper is organised as follows. In \cref{sec:overview}, we provide an overview of our techniques and a high-level overview of the proof of \cref{thm:query-informal}. In \cref{sec:preliminaries}, we give the necessary background material. In \cref{sec:main}, we state a general technical lemma (\cref{lem:main}) and show how to use it to derive our worst-case to average-case reductions (\cref{thm:circuit-informal} and \cref{thm:query-informal}). The subsequent sections construct the components that are necessary to prove the aforementioned main technical lemma, as follows.
% \cref{sec:main} contains the statements and proofs of the main results, assuming the main technical lemma of the paper: \cref{lem:main}.

In \cref{sec:quantumtools}, we provide a toolkit of quantum algorithms for local correction; our techniques employ the quantum singular value transformation machinery, which we discuss in \cref{app:qsvt}. Then, in \cref{sec:AC}, we prove a robust and probabilistic version of Bogolyubov's lemma from additive combinatorics, and we use it together with our toolkit of quantum algorithms to obtain a quantum local correction lemma, which will play a key role in our worst-case to average-case reductions. Finally, in \cref{sec:reduction} we use all the tools above to prove the main technical lemma (\cref{lem:main}).
%\sasha{Perhaps explain what we do in the appendix, too.}

\section{Techniques}
\label{sec:overview}

We provide a technical overview, highlighting the main conceptual and technical ideas that we use. Our approach builds on the \emph{additive combinatorics} framework, recently introduced in \cite{AGGS22}; however, our setting, which captures all linear problems, is significantly more involved and presents non-trivial challenges that require new additive combinatorics techniques, complexity theoretic ideas, and quantum algorithms that are tailored to this setting.

We start in \cref{tech:approach}, where we present our high-level approach for transforming average-case quantum algorithms for linear problems into worst-case quantum algorithms. In \cref{tech:qcorrection}, we discuss the key technical component underlying our reductions, which is a quantum local correction lemma based on a robust, probabilistic version of Bogolyubov's lemma. Our quantum local correction lemma crucially relies on four quantum algorithms, which we outline in \cref{tech:qalgo}, that allow us to efficiently flag correct solutions in superposition, construct noisy oracles for approximate indicator functions, efficiently sample good inputs, and learn Bogolyubov subspaces from noisy set-indicating quantum oracles. Finally, in \cref{tech:reduction}, we discuss how to apply the quantum local correction lemma to obtain the desired worst-case to average-case reductions.

\subsection{An additive combinatorics approach}
\label{tech:approach}

In this overview, for simplicity of exposition, we will focus on the proof of \cref{thm:query-informal} in the quantum query model; we in fact optimise both the query and gate complexities, so that we can then show how to extend the result to (uniform) quantum circuits for all linear problems. 

Recall that the Matrix-Vector Multiplication problem ($\mathsf{MvM}$) is defined as follows.

%In the following, we capture a linear problem via a family of matrices $M:=\{M_n\in\F^{n\times n}\}_{n\in\N}$ that characterises it. On input $v\in\F^n$ the solution to the problem is the vector $Mv$, where we omit the subscript of $M$ for readability.\\

\vspace{2ex}
\noindent{$\quad$\small\underline{\textsc{Matrix-Vector Multiplication} ($\mathsf{MvM}$)}} \\
\noindent\textbf{Input:} Oracle access to a matrix $M\in\F^{n\times n}$ and a vector $v\in\F^n$.\\
\noindent\textbf{Output:} The matrix-vector product $Mv$.\\

Let $\ALG$ be an \emph{average-case} quantum algorithm that is given oracle access to $M$ and $v$, defined by
\begin{equation*}
    U_M\ket{j, k, z} = \ket{j, k, z \oplus M_{jk}} \quad  \textrm{and} \quad U_v\ket{j, z} = \ket{j, z \oplus v_{j}},
\end{equation*}
for all indices $j,k\in [n]$ and $z\in\F$. Suppose that using $q$ queries to $M$ and $v$, the quantum algorithm $\ALG$ satisfies
    \begin{equation*}
        \Pr_{\substack{M,v,\\ \ALG}}[\ALG^{M,v} = Mv]  =
        \E_{\substack{M \in \F^{n\times n}\\v\in\F^n}}\left[\norm{\Pi_{Mv}{\ALG}^{M,v}\ket{0}}^2\right] \geq \alpha\;,
    \end{equation*}
where $\Pi_{Mv}$ is an orthogonal projection on the output register of $\ALG$ that indicates whether the algorithm outputs the correct answer.
%Denote by $X$ the set of all inputs that $\ALG$ computes correctly with probability at least $2/3$.
%that is,
% \begin{equation*}
%     \Pr[\ALG(M,v)=Mv] = \E_{v} \left[\| \Pi_{Mv} \ALG \ket{v, 0^m}\|^2 \right] \geq \alpha \;,
% \end{equation*}
% where note that the probability is taken over both the measurements and random inputs. 

We would like to explicitly transform $\ALG$ into a \emph{worst-case} quantum algorithm $\ALG'$ that computes $Mv$ with high probability for \emph{every} matrix $M$ and vector $v \in \F^n$; that is, show that for every constant $\delta>0$, there exists a worst-case quantum algorithm ${\ALG'}$ satisfying
    \begin{equation*}
    \forall M\in\F^{n\times n}, v \in \F^n \quad
    \Pr_{\ALG'}[(\ALG')^{M,v} = Mv]  =
        \norm{\Pi_{Mv}{(\ALG')}^{M,v}\ket{0}}^2 \geq 1 - \delta\;.
    \end{equation*}
To simplify the discussion, unless specified otherwise, in this overview we restrict our attention to the field $\F_2$, and to arbitrarily small constant values of the success rate parameter $\alpha>0$ of average-case algorithms (say $\alpha=0.01$).

\paragraph{Worst-case to average-case reductions via additive combinatorics.}
Our starting point is the additive combinatorics framework that was recently introduced in \cite{AGGS22}. Namely, we'd like to decompose each input vector $v$ into a sum of vectors on each of which the average-case algorithm $\ALG$ is correct. However, since a simple linear decomposition of each $v \in \F^n$ into correctly-computed inputs does not exist,\footnote{Indeed, as in the classical setting, consider the simple counterexample where the average-case algorithm $\ALG^{M,v}$ outputs $M \cdot v$ in case that the first coordinate of $v$ is~$1$ and otherwise outputs $0$. Note that in this case the success rate is $\alpha \geq 1/2$, yet no linear decomposition could self-correct matrix-vector multiplication where the first coordinate of $v$ is~$1$. Indeed, any such decomposition $v=\sum_i v_i$ would have a $v_i$ with the first element $1$, where $\ALG^{M,v}$ fails.} we shall need more involved machinery from the field of additive combinatorics.

To this end, we will start with Bogolyubov's lemma, a fundamental result in additive combinatorics which shows that the $4$-ary sumset of any dense set in $\F_2^n$ contains a large linear subspace. More accurately, recall that the sumset of a set $A$ is defined as $A+A = \{ a_1+a_2 \,:\, a_1,a_2 \in A \}$, and similarly $4A = \{ a_1+a_2+a_3+a_4\,:\, a_1,a_2,a_3,a_4 \in A \}$. These objects can be thought of as a combinatorial analogue of an approximate subgroup. Considering these sumsets allows us to extract subspace structure out of an unstructured set, as encapsulated in the following lemma.
\begin{lemma}[Bogolyubov's lemma]
    For any subset $A \subseteq \F_2^n$ of density $|A|/2^n \geq \alpha$, there exists a subspace $V \subseteq 4A$ of dimension at least $n - \alpha^{-2}$.
\end{lemma}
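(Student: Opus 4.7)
The plan is to prove Bogolyubov's lemma via standard discrete Fourier analysis on the group $\F_2^n$. I would consider the indicator $1_A : \F_2^n \to \{0,1\}$ together with the four-fold convolution $f := 1_A \ast 1_A \ast 1_A \ast 1_A$. Because $-A = A$ in characteristic two, the support of $f$ is exactly $A+A+A+A = 4A$. The strategy is then to exhibit a linear subspace $V$ of codimension at most $\alpha^{-2}$ on which $f$ is strictly positive, which forces $V \subseteq 4A$.

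The main observation is that convolution diagonalises under the Walsh-Hadamard transform, so the Fourier coefficients satisfy $\hat{f}(\xi) = \hat{1_A}(\xi)^4 \geq 0$, and Parseval gives $\sum_{\xi} \hat{1_A}(\xi)^2 = \E_x[1_A(x)] = \alpha$. I would define the large-spectrum set $\Lambda := \{\xi \in \F_2^n : |\hat{1_A}(\xi)| \geq \alpha^{3/2}\}$; Parseval then bounds $|\Lambda| \cdot \alpha^3 \leq \alpha$, so $|\Lambda| \leq \alpha^{-2}$. Setting $V := \Lambda^\perp = \{x \in \F_2^n : \xi \cdot x = 0 \text{ for all } \xi \in \Lambda\}$ yields a linear subspace of codimension at most $|\Lambda| \leq \alpha^{-2}$, which already delivers the claimed dimension bound.

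It remains to verify $V \subseteq 4A$. Fix $x \in V$; then $(-1)^{\xi \cdot x}=1$ for every $\xi \in \Lambda$, so Fourier inversion splits as
\begin{equation*}
f(x) \;=\; \sum_{\xi \in \Lambda} \hat{1_A}(\xi)^4 \;+\; \sum_{\xi \notin \Lambda} \hat{1_A}(\xi)^4 (-1)^{\xi \cdot x}.
\end{equation*}
The first sum is bounded below by the $\xi = 0$ term $\alpha^4$, with every other contribution non-negative. The second sum is bounded in absolute value by $\big(\sup_{\xi \notin \Lambda} \hat{1_A}(\xi)^2\big) \cdot \sum_{\xi \notin \Lambda} \hat{1_A}(\xi)^2 \leq \alpha^3 (\alpha - \alpha^2) < \alpha^4$, using Parseval together with the fact that $\xi=0$ already contributes $\alpha^2$ to the $\Lambda$-mass. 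Hence $f(x) > 0$, so $x \in \mathrm{supp}(f) = 4A$ as required.

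The argument is essentially a textbook application of the large-spectrum paradigm, and I do not anticipate any real obstacle. The one subtle point is the calibration of the threshold $\alpha^{3/2}$: it is chosen precisely to balance the two sides, ensuring simultaneously that the tail contribution $\tau^2 \alpha$ is strictly dominated by the $\xi=0$ mass $\alpha^4$ and that the spectrum size (and hence the codimension of $V$) is at most $\alpha^{-2}$, matching the statement.
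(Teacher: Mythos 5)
Your proof is correct and follows the same Fourier-analytic large-spectrum approach used by the paper in its proof of the robust generalisation (\cref{lem:apx-bogolyubov}): Parseval to bound the size of the large spectrum $\Lambda$, then Fourier inversion of the four-fold convolution and a triangle-inequality bound on the tail to show $f>0$ on $\Lambda^{\perp}$. The only difference is that you stop at strict positivity of $f$, while the paper's argument additionally extracts the explicit lower bound $f(v)\geq\alpha^{5}$ needed for its probabilistic strengthening.
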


To see the initial intuition for the additive combinatorics approach to designing worst-case to average-case reductions, we first make the simplifying assumption that the average-case algorithm $\ALG$ receives a ``good'' matrix $M \in \F^{n \times n}$ for which it is successful with probability $\alpha$, taken over the measurement and the random input vector $v \in \F^n$; that is, 
    \begin{equation*}
        \Pr_{v}[\ALG^{M,v} = Mv]  =
        \E_{v\in\F^n}\left[\norm{\Pi_{Mv}{\ALG}^{M,v}\ket{0}}^2\right] \geq \alpha\;.
    \end{equation*}
In \cref{tech:reduction}, we will show how to extend our approach to work on average-case matrices as well. 

First note that by an averaging argument, there exists a set of size $(\alpha \cdot 2^n)/2$ of input vectors $v \in \F^n$ on which $\ALG$ correctly computes the output with probability at least $\alpha/2$; denote this set by $X$. Next, observe that Bogolyubov's lemma shows that there exists a large subspace~$V$ such that every $v \in V$ can be decomposed as
\begin{equation}
    \label{eq:decomp}
    v = x_1 + x_2 + x_3 + x_4, \textrm{ for } x_1,x_2,x_3,x_4 \in X \;.
\end{equation}
 Recall that each $x_i \in X$ can be computed correctly by the average-case algorithm with probability at least $\alpha/2$. This suggests the natural approach of locally correcting each $v \in V$ using four inputs upon which the average-case algorithm has a non-negligible success probability. 
 
\paragraph{The challenges.} While the discussion above outlines a promising approach, already at this point there are substantial difficulties that arise when trying to pursue it. For starters, even on good inputs in $X$ the average-case algorithm only computes correctly with probability $\alpha/2$ (which could tend to zero!), and so it is unclear how to amplify the success probability of the algorithm such that it computes correctly on all four elements of $X$ in the decomposition in \cref{eq:decomp} with high probability, which is necessary for the self-correction of inputs in the Bogolyubov subspace $V$.

We stress that \emph{this is not the case} with the problems that were considered in \cite{AGGS22} (where verification was done using Freivald's algorithm for matrix-matrix multiplication, or using the unbounded preprocessing power in the data structure model). In contrast, in the setting of general linear problems, naive verification of a matrix-vector product costs $O(n^2)$, which completely trivialises the problem; indeed, in the classical setting, $o(n^2)$-query verification of matrix-vector products is impossible. Further issues include the fact that Bogolyubov's lemma only guarantees the existence of a decomposition into elements of $X$, whereas we need to explicitly obtain such a decomposition, as well as the fact that the argument above only holds for correcting inputs that are \emph{inside the Bogolyubov subspace} $V$, whereas we need to locally correct \emph{all inputs}.

Unfortunately, the classical framework that was shown in \cite{AGGS22} fails to address the aforementioned problems in the setting of general linear problems. Indeed, it is not clear whether the ambitious task of constructing worst-case to average-case reductions for all linear problems is at all possible for classical algorithms.

Instead, our approach for overcoming these challenges is \emph{inherently quantum}, and it would require more involved ideas, both technically and conceptually. We will show a new, noise-robust version of Bogolyubov's lemma (see \cref{tech:qcorrection}), and together with a toolkit of quantum algorithms (building on quantum singular value transformations) that we develop (see \cref{tech:qalgo}), we will prove a new quantum local correction lemma that will play a key role in our reductions, which are tailored to the strengths and limitations of the quantum setting.

%Next, we address these problems by showing a quantum local correction lemma, which in turn, will rely on quantum procedures that we show in \cref{tech:qalgo}. We stress that the quantum algorithms that are used in the quantum local correction lemma are crucial for allowing us to overcome the foregoing problems. Indeed, the lack of such efficient classical procedures is a bottleneck for obtaining worst-case to average-case reductions for all linear problems in the classical setting.

\subsection{Quantum local correction via a robust Bogolyubov lemma}
\label{tech:qcorrection}

% We proceed to give a high-level overview of our quantum local correction lemma, which lies at the heart of all of our worst-case to average-case reductions. 

Our starting point for addressing the difficulties outlined in \cref{tech:approach} is a new \emph{robust} and \emph{probabilistic} analogue of Bogolyubov's lemma. These additional structural properties will in turn enable us to deal with inputs outside of the Bogolyubov subspace and efficiently obtain explicit decompositions of inputs inside the Bogolyubov subspace into inputs upon which the average-case algorithm is correct.

Then, using the robust probabilistic Bogolyubov lemma, we will show a local correction lemma for quantum algorithms for linear problems. Our quantum local correction lemma will crucially make use of the robustness and probabilistic properties, as well as the four quantum procedures that we present in \cref{tech:qalgo}. 

% which can handle the relaxed guarantees we obtain from the quantum list-decoding procedure for the noisy indicator of $X$. Then, we will show how to deal with inputs that lay outside the Bogolyubov subspace by shifting them to the subspace via sparse-shifts. Finally, we will obtain the quantum local correction lemma by learning the Bogolyubov subspace using the quantum list-decoding procedure and obtaining explicit decompositions using a quantumly sped-up sampling procedure.

\paragraph{A robust probabilistic Bogolyubov lemma.} Recall that our high-level idea for locally correcting faulty inputs in the subspace $V$ guaranteed by Bogolyubov's lemma proceeds by decomposing vectors $v \in V$ into a linear combination of \textit{good inputs} from the set $X$. However, to prove our quantum local correction lemma, we need to be able to: (1) deal with vectors  \emph{outside of the Bogolyubov subspace $V$}, and (2) efficiently obtain an explicit decomposition of vectors inside $V$ into good inputs in $X$. To this end, we shall first need to strengthen Bogolyubov's lemma to obtain the following structural properties.

    %To deal with inputs $v$ outside of the Bogolyubov subspace $V$, we shall need shift $v$ into $V$ via sparse vectors. Towards this end, we will need to learn the subspace $V$. Alas, as we discuss in \cref{tech:qalgo}

\begin{itemize}
    \item \emph{Robustness.} Our local correction lemma relies on learning the heavy part of the Fourier spectrum of a noisy representation of an approximate indicator of the set $X$ in superposition, using a quantum procedure that we discuss in \cref{tech:qalgo}. In this setting, we can only learn the heavy Fourier coefficients of a function $g$ satisfying
    \begin{equation*}
        \forall S\subset[n] \quad \abs{|\widehat{1_X}(S)|^2 - |\hat{g}(S)|^2} \leq \epsilon \;,
    \end{equation*}
    for a sufficiently small $\eps$. In fact, each Fourier coefficient we obtain may come from a \emph{different function} $g$. To deal with this problem, which is \emph{unique to the quantum setting}, we need a version of Bogolyubov's lemma where the subspace is defined via a robust set of linear constraints that can accommodate for the noisy Fourier spectrum.
    
    %Hence, we could potentially both miss Fourier coefficients that are slightly above the threshold, as well as include Fourier coefficients that are slightly below the threshold.
    
    \item \emph{Probabilistic decomposition.} Bogolyubov's lemma shows the \emph{existence} of a large subspace $V \subset 4X$, admitting a linear decomposition of each $v \in V$ into four vectors from the set of good inputs $X$ upon which the average-case algorithm is correct. However, we need to efficiently obtain an explicit decomposition $v = x_1 + x_2 + x_3 + x_4$, where each $x_i \in X$, for each vector $v \in V$. Hence, we show that each vector admits sufficiently many such decompositions such that one can be efficiently sampled using a quantum sampling procedure that we show in \cref{tech:qalgo}.
\end{itemize}

We thus prove the following \emph{robust probabilistic Bogolyubov lemma}, which will allow us to deal with the challenges above. Given an unstructured set $X \seq \F^n$, denote the heavy part, exceeding a threshold $\gamma$, of the Fourier spectrum of $X$ by $\Spec_X(\gamma) = \{r \in \F^n \setminus \{0\} : \abs{\hat{1}_X(r)} \geq \gamma\}$.

\begin{lemma}[Informally stated, see \cref{lem:apx-bogolyubov}]
\label{lem:inf-Bogolyubov}
   For every $X \seq \F^n$ of density at least $\alpha$, let $R \seq \F^n$ be a set such that
    $\Spec_X(\alpha^{3/2}) \seq R \seq \Spec_X(\frac{\alpha^{3/2}}{2})$.
    Let $V = \{v \in \F^n : \ip{v,r} = 0 \ \forall r \in R\}$.
    Then $\dim(V) \geq n-O(\alpha^{-2})$, and for all $v \in V$ it holds that
    \begin{equation*}
        \Pr_{x_1,x_2,x_3 \in X}[v - x_1 - x_2 - x_3 \in X] \geq \alpha^2 \;.
        % \text{we can probably get this oracle in time } O(1/\alpha^{3/2}).
    \end{equation*}
\end{lemma}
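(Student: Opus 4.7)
The plan is to prove this by Fourier analysis on $\F_2^n$ (following the convention of the overview), extending the textbook proof of Bogolyubov's lemma in two ways: to accommodate the slack in the definition of $R$ (robustness), and to lower-bound the number of $4$-fold representations rather than merely to exhibit one (the probabilistic strengthening). Write $h := 1_X$, so $\hat{h}(0) = \den{X} \geq \alpha$ and $\sum_r \hat{h}(r)^2 = \den{X}$ by Parseval. The standard Fourier identity for the number of quadruples $(x_1,x_2,x_3,x_4) \in X^4$ satisfying $x_1 + x_2 + x_3 + x_4 = v$ rewrites the desired probability as
\begin{equation*}
p(v) := \Pr_{x_1,x_2,x_3 \in X}[v - x_1 - x_2 - x_3 \in X] = \frac{1}{\den{X}^3} \sum_{r \in \F_2^n} \hat{h}(r)^4 \chi_r(v).
\end{equation*}

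For the codimension claim, I would combine $\dim(V) \geq n - |R|$ with the Parseval bound $|\Spec_X(\gamma)| \leq \den{X}/\gamma^2$; taking $\gamma = \alpha^{3/2}/2$ (the upper threshold in the sandwich) gives $|R| = O(\alpha^{-2})$. For the probability bound, I would split the Fourier sum into the $r=0$ term (contributing $\den{X}^4$), the $r \in R$ terms (which contribute nonnegatively, because $\chi_r(v) = 1$ for $v \in V = R^{\perp}$), and the tail over $r \notin R \cup \{0\}$. On the tail, the inclusion $\Spec_X(\alpha^{3/2}) \seq R$ forces $|\hat{h}(r)| < \alpha^{3/2}$ for every $r \notin R$; pulling out the maximum and invoking Parseval once more yields
\begin{equation*}
\left| \sum_{r \notin R \cup \{0\}} \hat{h}(r)^4 \chi_r(v) \right| \leq \alpha^3 \sum_r \hat{h}(r)^2 \leq \alpha^3 \den{X}.
\end{equation*}
Assembling these three contributions gives $p(v) \geq \den{X} - \alpha^3/\den{X}^2$, which is $\Omega(\alpha^2)$ after using $\den{X} \geq \alpha$ and tracking constants carefully.

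The main subtlety, and what distinguishes this argument from the textbook proof, is handling the \emph{approximate} set $R$. The sandwich $\Spec_X(\alpha^{3/2}) \seq R \seq \Spec_X(\alpha^{3/2}/2)$ is precisely what couples the two halves of the proof: the lower threshold $\alpha^{3/2}$ governs the tail estimate in the probability bound, while the upper threshold $\alpha^{3/2}/2$ governs the size of $R$ via Parseval. This slack is what makes the lemma genuinely robust --- \emph{any} set $R$ in this gray zone satisfies both conclusions simultaneously --- which is essential because the quantum spectrum-learning procedure discussed elsewhere in the overview returns only a noisy approximation to the heavy Fourier coefficients of $\hat{1}_X$. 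A final wrinkle is extending beyond $\F_2$: over a general finite field one replaces $\hat{h}(r)^4$ with $|\hat{h}(r)|^4$ and works with the balanced sumset $x_1 + x_2 - x_3 - x_4 = v$ to keep the relevant Fourier quantities real and nonnegative, but the skeleton of the argument is unchanged.
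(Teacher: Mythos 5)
Your plan mirrors the paper's proof almost line by line: bound $|R|$ via Parseval using the \emph{lower} threshold $\alpha^{3/2}/2$, then split $\sum_r \hat{1}_X(r)^4 \chi_r(v)$ into the $r=0$ term, the $R$-terms (nonnegative on $V = R^{\perp}$), and a tail bounded using the \emph{upper} threshold $\alpha^{3/2}$. That is the right decomposition, and you correctly identify the sandwich between the two thresholds as the source of the robustness.

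However, there is a genuine gap in your tail estimate that makes your final inequality vacuous. You bound the tail by $\alpha^3 \sum_r \hat{h}(r)^2 \leq \alpha^3 \den{X}$, keeping the $r=0$ term inside the Parseval sum, and then claim $p(v) \geq \den{X} - \alpha^3/\den{X}^2 = \Omega(\alpha^2)$. But the lemma is tight at $\den{X} = \alpha$, where your bound gives $\alpha - \alpha^3/\alpha^2 = 0$, not $\Omega(\alpha^2)$. The fix is to exclude $r=0$ from the Parseval budget: $\sum_{r \neq 0} |\hat{1}_X(r)|^2 = \alpha - \alpha^2$ (since $\hat{1}_X(0) = \alpha$), so the tail is at most $\alpha^3(\alpha - \alpha^2) = \alpha^4 - \alpha^5$. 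With $\den{X} = \alpha$ this yields $\frac{1}{\alpha^3}(\alpha^4 - (\alpha^4 - \alpha^5)) = \alpha^2$ exactly; the subtracted $\alpha^2$ from the zeroth Fourier coefficient is precisely the slack that survives to the end. The paper makes this subtraction explicitly (and also keeps the nonnegative contribution $(\alpha^{3/2}/2)^4 |R|$ from $r \in R$ as a further cushion, though that is not strictly needed). Your side remark about using the balanced sumset to keep $|\hat{h}(r)|^4$ real over $\F_p$ for $p>2$ is a fair observation --- the paper's displayed chain also silently replaces $(\hat{1}_X(r))^4$ by $|\hat{1}_X(r)|^4$ --- but note that switching to $x_1 + x_2 - x_3 - x_4 = v$ proves a slightly different statement than the one with $v - x_1 - x_2 - x_3 \in X$, so some care is needed to reconcile it with the lemma as written.
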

We stress that the probability in \cref{lem:inf-Bogolyubov} is taken over $x_1,x_2,x_3$ that are uniformly sampled from $X$ (rather than $\F^n$). To efficiently obtain such a decomposition, we can sample $x_1,x_2,x_3$ using the quantum sampling procedure shown in \cref{tech:qalgo}.

\paragraph{Locally correcting outside of the Bogolyubov subspace.} While the robust probabilistic Bogolyubov lemma allows us to locally correct inputs inside the subspace $V \subseteq 4X$, our goal is to obtain a worst-case quantum algorithm, hence we need to be able to handle any vector $v \in \F^n$, and not just those in $V$.

Towards this end, we would like to shift each vector $v \in \F^n \setminus V$ into the Bogolyubov subspace. Using the robustness property of \cref{lem:inf-Bogolyubov} and the quantum algorithm for learning Bogolyubov subspaces from a noisy representation of indicator functions in superposition (see \cref{tech:qalgo}), we further decompose each vector in $\F^n$ into a sum of elements in the subspace $V$ and a \emph{sparse} shift-vector~$s$, which can then be corrected efficiently due to its sparsity.

In more detail, let $R \subseteq \F^n \setminus \{0\}$ and $V = \{v \in \F^n : \ip{v,r} = 0 \; \forall r \in R\}$. We observe that there exists a collection of $t\leq \abs{R}$ vectors $B=\{b_1,\dots,b_t\},\,b_i \in \F^n$ and indices $k_1,\dots, k_t \in [n]$ such that $\sp(B) = \sp(R)$ and every vector $y \in \F^n$ can be written as $y = v + s$, where $v \in V$ and $s = \sum_{j=1}^t c_j \cdot \vec{e}_{k_j}$ for $c_j = \ip{y, b_j}$ and $\vec{e}_{k_j}$ is a unit vector. We emphasize that the sparsity of the decomposition is critical as it allows us to shift arbitrary vectors into the Bogolyubov subspace $V$ without unfavourably blowing up the complexity.

To efficiently obtain the basis $b_1,\dots,b_t \in \F^n$ and indices $k_1,\dots, k_t \in [n]$, we rely on the quantum algorithm for learning Bogolyubov subspaces described in \cref{tech:qalgo}, which allows us to obtain the approximate high Fourier coefficients of an approximation of $1_X$ from a noisy representation in superposition, within the desired complexity. This, in turn, allows us to compute the required basis and set of indices. 

%\paragraph{Quantum learning of the Bogolyubov subspace.} 
% The key to find the subspace $V$ satisfying the conditions of Bogolyubov's lemma is to come up with a procedure to find high Fourier coefficients for $1_X$, the indicator function of $X$. Classically, this problem reduces to the Goldreich-Levin algorithm, which finds heavy Fourier coefficients of a function $f$ given oracle access to it. However, it is currently not known whether this procedure can be performed fast enough. Nevertheless, in the quantum setting, assuming that there exist an oracle for $f$ that we can query in superposition, we can find such a set of heavy Fourier coefficients using constantly many queries. Thus, this efficient Fourier sampling which is studied previously in \cite{AdcockCleve02}, allows us to find basis $b_1,\dots,b_t \in \F^n$ and indices $k_1,\dots, k_t \in [n]$ efficiently.

\paragraph{The quantum local correction lemma.}
Putting together all of the components above, we can now state our quantum local correction lemma, which builds upon the robust probabilistic version of Bogolyubov's lemma and the quantum procedures that we will present in \cref{tech:qalgo}.

Loosely speaking, our local correction lemma allows us to efficiently obtain an explicit decomposition of any vector $v\in\F^n$ into a linear combination of the form
\begin{equation*}
    v = x_1 + x_2 + x_3 + x_4 + s
    \;,
\end{equation*}
where $x_1,x_2,x_3,x_4 \in X$ and $s \in \F^n$ is a \emph{sparse} vector.

Specifically, in \cref{tech:qalgo}, we will construct a noisy quantum oracle for approximating the indicator $1_X(v)$, quantum algorithm for uniformly sampling from~$X$, a quantum verification procedure $\widetilde{O}_X$ that for each $v \in \F^n$ computes the indicator $1_X(v)$ correctly with high probability, and a quantum algorithm that learns the Bogolyubov subspace from the noisy oracle that approximates $1_X$. By combining these four ingredients we prove the following result. 

\begin{lemma}[informally stated, see \cref{lem:quantum-local-correction}]
\label{lem:inf-correction}
 For a field $\F = \F_p$ and $\alpha$-dense set $X \seq \F^n$, there exists $t \leq 4/\alpha^2$ vectors $b_1,\dots,b_t \in \F_2^n$ and indices $k_1,\dots, k_t \in [n]$ satisfying the following. Given a vector $y \in \F^n$, define $s = \sum_{j=1}^t \ip{y, b_j} \cdot \vec{e}_{k_j}$, where $(\vec{e}_i)_{i \in [n]}$ is the standard basis. Then,
            \begin{equation*}
                \Pr_{x_1,x_2,x_3 \in X}[x_4 \in X] \geq \alpha^2
                \enspace,
            \end{equation*}
where $x_4$ is the vector such that $y =s + x_1 + x_2 + x_3 + x_4$.

% Equivalently, this can be stated as
%             \begin{equation*}
%                 \Pr_{x_1,x_2,x_3 \in X}[y - s - x_1 - x_2 - x_3 \in X] \geq \alpha^2
%                 \enspace.
%             \end{equation*}
    
%     \igor{perhaps we should make the main purpose of this lemma a bit more clear. It's a bit hard to understand what's the goal here}
    
    Furthermore, there exists a quantum algorithm that calls the quantum verification procedure $\widetilde{O}_X$ for $O(\log(1/\delta) \cdot (1/\alpha^{5}))$ times, uses $n \cdot \poly(1/\alpha) \cdot\poly\log |\F|$ additional elementary gates, and with probability at least $1-\delta$ returns the vectors $b_1,\dots,b_t \in \F^n$ and indices $k_1,\dots, k_t \in [n]$.
\end{lemma}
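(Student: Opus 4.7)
The plan is to invoke the quantum Fourier-sparse learning procedure of \cref{tech:qalgo} on a noisy approximate indicator of $X$ to recover a set $R$ of large Fourier characters of $1_X$, feed that $R$ into the robust probabilistic Bogolyubov lemma (\cref{lem:inf-Bogolyubov}) to obtain a high-dimensional subspace $V$ with good decomposition probability, and finally perform one round of Gaussian elimination on $R$ to produce the explicit vectors $\{b_j\}$ and pivot coordinates $\{k_j\}$ that realise a sparse shift from $\F^n$ into $V$.

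Concretely, I would first call the quantum Bogolyubov-subspace learner with parameters tuned so that, using $O(\log(1/\delta)/\alpha^5)$ queries to $\widetilde{O}_X$ and $n\cdot\poly(1/\alpha)\cdot\poly\log|\F|$ additional gates, it returns with probability $\geq 1-\delta$ a set $R\seq\F^n\setminus\{0\}$ satisfying the sandwich $\Spec_X(\alpha^{3/2})\seq R\seq\Spec_X(\alpha^{3/2}/2)$. Parseval applied to $1_X$ (whose density is $\alpha$) bounds $|R|\leq 4/\alpha^2$, and \cref{lem:inf-Bogolyubov} then yields a subspace $V=\{v\in\F^n:\ip{v,r}=0\ \forall r\in R\}$ of codimension $O(\alpha^{-2})$ for which
\[
  \Pr_{x_1,x_2,x_3\in X}\bigl[v-x_1-x_2-x_3\in X\bigr]\geq\alpha^2 \quad\text{for every } v\in V.
\]

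Next, I would run Gaussian elimination on $R$ to compute a reduced row-echelon basis $b_1,\dots,b_t\in\F^n$ for $\sp(R)$ together with pivot columns $k_1,\dots,k_t\in[n]$; by construction $t=\dim\sp(R)\leq|R|\leq 4/\alpha^2$ and $(b_i)_{k_j}=\delta_{ij}$. Given any $y\in\F^n$, setting $s=\sum_{j=1}^t\ip{y,b_j}\cdot\vec{e}_{k_j}$ and using the pivot identity gives $\ip{s,b_i}=\sum_j\ip{y,b_j}(b_i)_{k_j}=\ip{y,b_i}$ for every $i$, so $y-s$ is orthogonal to every $b_i$ and therefore lies in $V$. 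Applying the Bogolyubov guarantee to $v=y-s$ immediately delivers the claimed inequality for $x_4=y-s-x_1-x_2-x_3$. The linear-algebra step is classical and costs only $n\cdot\poly(1/\alpha)\cdot\poly\log|\F|$ gates, fitting inside the additive gate budget.

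The hard part — and the whole reason a \emph{robust} form of Bogolyubov is essential — is that the quantum learner accesses only noisy encodings of $1_X$ in superposition, and different heavy characters may emerge from distinct $\eps$-approximants of $1_X$. Consequently, the best one can hope for is the sandwich $\Spec_X(\alpha^{3/2})\seq R\seq\Spec_X(\alpha^{3/2}/2)$ rather than a precisely prescribed spectral set. Calibrating the threshold and noise parameters of the learner so that this sandwich holds with probability $1-\delta$ within the $O(\log(1/\delta)/\alpha^5)$ query budget, while simultaneously ensuring that the robust Bogolyubov conclusion remains valid for \emph{any} such $R$, is the delicate point of the argument; everything else (Parseval, Gaussian elimination, the shift identity) is routine once the spectral sandwich is in hand.
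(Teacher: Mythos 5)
Your proposal follows the paper's proof essentially step for step: apply the robust probabilistic Bogolyubov lemma to the spectral sandwich $\Spec_X(\alpha^{3/2})\seq R\seq\Spec_X(\alpha^{3/2}/2)$, bound $|R|\leq 4/\alpha^2$ via Parseval, diagonalise $R$ (i.e.\ reduced row-echelon form) to obtain the $b_j$ and pivots $k_j$, check via the pivot identity that $y-s\in V$, and learn $R$ by Fourier-sampling the noisy membership oracle with enough repetitions to collect all $O(\alpha^{-2})$ heavy characters. The one thing worth flagging is that your ``delicate calibration'' step is precisely what the paper encodes as the two explicit hypotheses on $\widetilde{O}_{X^*}$ (Fourier closeness $|\hat{1}_{X^*}(r)-\hat{1}_X(r)|\leq\alpha^{3/2}/8$ and pointwise correctness $1-\alpha^{3/2}/10$), and the sandwich is achieved by thresholding the learner at $\tfrac34\alpha^{3/2}$ on $\hat{1}_{X^*}$; otherwise the argument matches.
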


This quantum local correction lemma forms the cornerstone of our quantum average-case to worst-case reductions, as we explain in \cref{tech:reduction}. However, before we can turn to the details of the reduction, we need to address three gaps that still remain: How can we efficiently verify Matrix-Vector products, so that we can amplify the success probability of quantum algorithms that are only correct with probability $\alpha$ (which could tend to zero)? Can we efficiently sample from the set of good inputs $X$? How can we learn the Bogolyubov subspace from a noisy quantum oracle that encodes an approximation of the indicator $1_X$?

\subsection{A toolkit of quantum algorithms for local correction}
\label{tech:qalgo}
Next, we address the foregoing questions by presenting the key quantum procedures that are required for our quantum local correction lemma. We stress that these solutions employ the power of quantum algorithms, and indeed, the lack of such procedures in the classical setting is a bottleneck that blocks the path to general results such as \cref{thm:circuit-informal} and \cref{thm:query-informal} for \emph{classical} algorithms.

We achieve quantum speedups for the following four tasks, which are required for our quantum local correction lemma:
\begin{enumerate}
    \item Flagging correct matrix-vector products in superposition.
    \item Constructing noisy quantum oracles approximating the indicator function $\mathsf{1}_X$.
    \item Sampling of vectors from the set $X$ of good vectors.
    \item Learning Bogolyubov subspaces from noisy quantum oracles.
\end{enumerate}
Recall that it is essential to perform the above tasks in complexity $o(n^2)$; indeed our quantum algorithms make at most $O(n^{3/2})$ queries and use at most $\widetilde{O}(n^{3/2})$ additional one-qubit and two-qubit gates. We now proceed to briefly describe the above quantum procedures, and offer a quick glimpse at our techniques for obtaining them. See \cref{sec:quantumtools} for details.

\paragraph{Flagging correct matrix-vector products in superposition.} The first tool we shall need is a quantum subroutine for verifying whether a vector $b\in\F^n$ output by a quantum algorithm $\ALG$ is in fact the correct matrix-vector product $Mv$, a task that classically requires $\Omega(n^2)$ queries. This will play an important role in both amplifying the success probability of our local correction lemma (see \cref{tech:qcorrection}) and in the other quantum algorithms in the toolkit we develop.

Since it is crucial for our reduction to use such a verification procedure as a unitary subroutine in other quantum procedures, we unfortunately cannot apply existing verification algorithms from the literature. Furthermore, since the algorithm $\ALG$ only succeeds \emph{on average} with low probability, different input vectors have significantly different success probabilities. To address this, we give a procedure that flags all of the computational basis states in a superposition as either right or wrong in a way that respects the success distribution of $\ALG$, and then boost the amplitude of the solutions.

However, since the success probabilities of different inputs have high variance, choosing a fixed number of iterations of amplitude amplification causes problems of over-shooting and under-shooting. Instead, we apply a delicate argument involving fixed-point amplitude amplification, which converges monotonically towards the flagged state. 
Towards this end, we identify that the flagging operation above has the form of a \textit{block encoding}.\footnote{A unitary $U$ with the form $\begin{pmatrix}H& \cdot\\ \cdot & \cdot\end{pmatrix}$, which encodes another (subnormalised) matrix $H$ in its upper left block. } Having such an object enables us to use powerful techniques from the repertoire of quantum singular value transformations.
% Using fixed-point amplitude amplification we first construct a procedure $\ALG_{\mathrm{verified}}$ that always outputs a flag indicating success when it has succeeded, and in the event of failure outputs a flag indicating failure with high probability. 

Implementing the strategy above, we construct the following quantum procedure for verifying matrix-vector products, which works \emph{in superposition} and marks an ancillary qubit attached to the output state of $\ALG$ whenever the state has non-zero overlap with the correct matrix-vector product $\ket{Mv}$. See \cref{sec:quantumtools-verification} for details.
%Since different input vectors may have widely different success and failure probabilities, this requires a subtle use of the techniques of fixed-point amplitude amplification and quantum singular value threshold projections, applied to different registers.

\begin{lemma}[Informally stated; see \cref{lem:qVerification}]
\label{lem:inf-verification}
    Given access to a unitary oracle for a matrix $M\in\F^{n\times n}$, and a quantum algorithm $\ALG$ that takes as input a vector $v\in\F^n$ and produces as output a state that consists of a superposition over vectors in $\F^n$, 
    \[
        \ALG\ket{v,0} = \sum_{z\in\F^n}\gamma_z^v\ket{v}\ket{z}\ket{\mathrm{w(z,v)}}\;,
    \]
    there is a quantum algorithm $\ALG_{\mathsf{verified}}$ that annotates the output state of $\ALG$ with a flag marking the vector $Mv$ as correct, and marking all other vectors $z\neq Mv$ as incorrect with high probability, i.e.\
    \[
        \ALG_{\mathsf{verified}}\ket{v,0} \approx \gamma_*^v\ket{v}\ket{Mv}\ket{\mathrm{w(Mv,v)}}\ket{1}^{\mathrm{flag}}+\left(\sum_{z\neq Mv}\gamma_z^v\ket{v}\ket{z}\ket{\mathrm{w(z,v)}}\right)\ket{0}^{\mathrm{flag}}\;.
    \] $\ALG_{\mathsf{verified}}$ makes $O(1)$ uses of $\ALG$, $O(n^{3/2})$ queries to $U_M$ and $U_M^{\dagger}$, $\widetilde{O}(n)$ ancillary qubits, and $\widetilde{O}(n^{3/2})$ additional one-qubit and two-qubit gates.
\end{lemma}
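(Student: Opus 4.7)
The overall plan is to construct $\ALG_{\mathsf{verified}}$ by attaching a flag qubit to the output of $\ALG$ and coherently computing, on this flag, an indicator of $z = Mv$. The backbone is a checker subroutine $\mathsf{Check}_M$ that, on input $\ket{v}\ket{z}$, decides $Mv = z$ by reducing the check to a Grover search over $[n]$ for a violated coordinate $(Mv)_i \neq z_i$. Each coordinate test is the inner product $\sum_j M_{ij} v_j$ followed by comparison with $z_i$, costing $O(n)$ queries to $U_M$ (the vectors $v,z$ live in explicit registers). A Grover search over the $n$ coordinates uses $O(\sqrt n)$ iterations, yielding $O(n^{3/2})$ total queries to $U_M$ and $U_M^{\dagger}$, and uncomputes all intermediate inner-product ancillas so that $\mathsf{Check}_M$ is a clean unitary on $\ket{v}\ket{z}$ plus a flag.

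The first subtlety is coherence: since $\ALG$ outputs the superposition $\sum_z \gamma_z^v \ket{v}\ket{z}\ket{\mathrm{w}(z,v)}$, the checker must act correctly in superposition over the $\ket{z}$ register. I would implement the Grover search for a bad coordinate in the standard coherent manner, with the $i$-oracle implemented by first computing the residual $\sum_j M_{ij}v_j - z_i$ into a fresh ancilla register, reflecting about the $0$ answer, and then uncomputing the arithmetic using $U_M^{\dagger}$. This renders $\mathsf{Check}_M$ a block encoding of the correctness indicator of $\{(v,z) : Mv = z\}$ written into a single flag qubit.

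The main technical obstacle is that a single invocation of $\mathsf{Check}_M$ does not yet separate correct from incorrect branches sharply enough: on incorrect branches where $z \neq Mv$, the number of violated coordinates varies between $1$ and $n$, so a fixed Grover schedule will under-rotate when there is only one bad coordinate and over-rotate when there are many. Standard amplitude amplification therefore cannot be calibrated uniformly across the superposition. The plan is to address this using \emph{fixed-point amplitude amplification} instantiated via quantum singular value transformation, exactly the machinery that the paper develops in \cref{sec:quantumtools} and \cref{app:qsvt}. Treating $\mathsf{Check}_M$ as a block encoding and applying an odd polynomial QSVT of degree $O(\log(1/\delta))$ that approximates the sign function on $[\gamma_{\min},1] \cup [-1,-\gamma_{\min}]$ for an appropriate constant $\gamma_{\min}$, I obtain a unitary that sends the flag to $\ket{1}$ up to error $\delta$ on every incorrect branch and keeps it at $\ket{0}$ up to error $\delta$ on the correct branch, regardless of the precise bad-coordinate fraction inside each branch. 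The key property is that this polynomial transformation is monotone in the overlap, which is what rules out over- and under-shooting.

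Composing with $\ALG$ yields $\ALG_{\mathsf{verified}}$ whose behaviour matches the claim: the $\ket{Mv}$ branch is flagged $\ket{1}$ with amplitude $\gamma_*^v \approx \gamma_{Mv}^v$ up to $\delta$, and all other branches are flagged $\ket{0}$ up to $\delta$. The query count is $O(1)$ uses of $\ALG$ together with a constant (from the QSVT degree) number of calls to $\mathsf{Check}_M$, each costing $O(n^{3/2})$ queries to $U_M,U_M^{\dagger}$; ancillas and additional gates sit at $\widetilde{O}(n)$ and $\widetilde{O}(n^{3/2})$ respectively, dominated by the arithmetic inside the inner-product oracle. The hardest part of the argument, to my mind, will be the error analysis: carefully propagating the QSVT polynomial approximation error through the superposition over $z$ so that the composite error is bounded uniformly across all branches, since this is what lets $\ALG_{\mathsf{verified}}$ be plugged in as a coherent subroutine into the downstream procedures of the paper.
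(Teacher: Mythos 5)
Your high-level plan — reduce matrix-vector verification to a search for a violated coordinate, then use fixed-point amplitude amplification via QSVT to escape the over/under-shooting problem — is the same strategy the paper takes, and your final complexity accounting lands in the right place. However, the way you have structured the two stages contains a genuine flaw. You first build $\mathsf{Check}_M$ as a \emph{fixed-schedule} Grover search over $[n]$ with $O(\sqrt n)$ iterations, and only afterward treat $\mathsf{Check}_M$ as a block encoding and apply an odd QSVT polynomial of degree $O(\log(1/\delta))$ approximating the sign function with an ``appropriate constant $\gamma_{\min}$''. The problem is that after a fixed $\Theta(\sqrt n)$-iteration Grover schedule, the amplitude on the flag-$1$ subspace for an incorrect branch $z\neq Mv$ is $\sin\bigl((2k{+}1)\theta_{vz}\bigr)$ where $\theta_{vz}=\arcsin\sqrt{m_{vz}/n}$; as $m_{vz}$ ranges over $\{1,\dots,n\}$ this quantity is \emph{not} bounded away from zero — for some numbers of bad coordinates the Grover iterate rotates past the target and the overlap is arbitrarily small, indistinguishable from the correct branch (overlap exactly $0$). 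There is therefore no valid constant $\gamma_{\min}$, and no polynomial transformation — however well it approximates the sign function — can recover the indicator from a block encoding whose relevant singular values can sit on both sides of any threshold you pick. Your stated degree $O(\log(1/\delta))$ and constant $\gamma_{\min}$ are exactly the parameters that fail.

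The fix, and what the paper's proof of \cref{lem:qVerification} actually does, is to apply fixed-point amplitude amplification \emph{in place of} the Grover iterations, not on top of them. Take the block encoding to be the \emph{one-shot} preparation unitary $U$: prepare a uniform superposition over indices, compute $(Mv)_i$ (at a cost of $n$ queries to $U_M$), compare with $z_i$, and mark. This $U$ places amplitude $\sqrt{m_{vz}/n}$ on the flagged state, which is exactly $0$ when $z=Mv$ and at least $1/\sqrt n$ whenever $z\neq Mv$. Now the threshold separating correct from incorrect branches really is $\gamma_{\min}=\Theta(1/\sqrt n)$, and fixed-point amplitude amplification (\cref{thm:fpaa}) applied to $U$ requires degree $q=O(\sqrt n\cdot\log(1/\epsilon))$, each round costing $O(n)$ queries to $U_M$, for the claimed total $O(n^{3/2}\log(1/\epsilon))$. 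The polynomial degree scales as $\sqrt n$ precisely because the gap you are thresholding on is $1/\sqrt n$ rather than a constant; the constant-degree regime you describe simply does not apply here. One final small discrepancy: your flag convention (flag $\ket1$ for incorrect, $\ket0$ for correct) matches the paper's formal \cref{lem:qVerification}, while the informal statement above flips this; that is purely cosmetic and harmless, but worth noting so that downstream composition in \cref{sec:indicator} uses the right projector.
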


% We remark that, in particular, the lemma implies that given access to a unitary oracle for a matrix $M\in\F^{n\times n}$, and vectors $v,b\in\F^n$, there is a quantum algorithm $\mathsf{Q_{verify}}$ that accepts with certainty if $Mv=b$ and rejects with high probability otherwise. $\mathsf{Q_{verify}}$ makes $O(n^{\nicefrac32})$ queries to $U_M$ and $U_M^{\dagger}$, and uses $\widetilde{O}(n^{\nicefrac32})$ additional one-qubit and two-qubit gates.

\paragraph{Noisy quantum oracles approximating the indicator function $\mathsf{1}_X$.}
A key ingredient in our reductions is the indicator function or membership oracle $\mathsf{1}_X$ for the subset of good inputs exceeding probability threshold $\tau$, given by 
\begin{equation*}
    X_{\tau}\colon= \left\{v \in \F^n \colon \Pr[\ALG^M(v)=Mv] \geq {\tau} \right\} \;.
\end{equation*}
Here the notation $\ALG^M$ emphasises the fact that $\ALG$ has query access to the matrix $M$. In constructing this indicator function we are faced with yet another challenge: while fixed-point amplitude amplification is able to boost the probability of outputting the correct success/failure flag, it is insufficient to perform the type of thresholding operation that is required to implement the indicator function. The key difficulty is the fact that we need to \textit{simultaneously} mark inputs inside set $X_{\tau}$ with the flag one, and those outside $X_{\tau}$ with the flag zero (with high probability).

To overcome this issue, we note that $\ALG_{\mathrm{verified}}$ is also a block encoding (of another matrix), and by combining it with the heavier machinery of \emph{quantum singular value threshold projection}, we are able to obtain a noisy quantum oracle for a polynomial approximation of the indicator function on the set $X_{\tau}$ of good inputs to $\ALG$.
In \cref{sec:indicator}, we show how the technique of quantum singular value transformations can be used to construct a singular value threshold projection built on $\ALG_{\mathsf{verified}}$, which acts as a noisy indicator function on the set of inputs where $\ALG$ succeeds with high probability.

\paragraph{Sampling vectors from the set of good inputs.}
Using the ability to approximately flag correct and incorrect answers in superposition, we can synthesize a uniform superposition over the set $X_{\tau}$ that consists of all inputs on which the average-case algorithm computes correctly with probability at least $\tau$. Hence by preparing this state and measuring it, we obtain the following quantum speedup for sampling from $X_{\tau}$. This will allow us to boost the performance of our quantum local correction lemma (see \cref{tech:qcorrection}), that needs to sample elements from $X_{\tau}$ for a value of $\tau=\Theta(\alpha)$ in order to correct vectors in the Bogolyubov subspace. 

% res{Actually, now I'm convinced we can do that also with fpaa, without using qsvp}
% We note that directly using $\ALG_{\mathrm{verified}}$ (in conjunction with fixed-point amplitude amplification) would allow us to draw samples from $X$, but not under the uniform distribution. Indeed, we would then obtain samples drawn according to the distribution that puts weight $p_v\propto \Pr(\ALG(v)=Mv)$ on the vector $v$. In order to draw samples uniformly at random from $X$, we use the approximate indicator function $\mathsf{1}_X$ that was described above. 

\begin{lemma}[Informally stated; see \cref{sec:quantum-sampling}]
\label{lem:inf-sampling}
    Given an average-case quantum algorithm $\ALG$ with success rate $\alpha$, there is a procedure $\mathsf{Q_{samp}}$ which with high probability produces a vector $v$ on which $\ALG$ succeeds with probability at least $\alpha/2$. $\mathsf{Q_{samp}}$ uses $\ALG$ $O(\frac{1}{\alpha})$ times, in addition to $\widetilde{O}(n^{3/2})$ additional one-qubit and two-qubit gates.
\end{lemma}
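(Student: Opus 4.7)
The plan is to construct $\mathsf{Q_{samp}}$ by invoking $\ALG_{\mathsf{verified}}$ from \cref{lem:inf-verification} on a uniform superposition over input vectors, post-selecting on the success flag, and then reading out the resulting vector. The key preliminary observation is a Markov-type mass bound: the set $X_{\alpha/2}=\{v\in\F^n:\Pr[\ALG^M(v)=Mv]\geq\alpha/2\}$ carries at least half of the total success mass, because vectors outside $X_{\alpha/2}$ contribute at most $|\F|^n\cdot\alpha/2$ to the sum $\sum_v\Pr[\ALG^M(v)=Mv]=|\F|^n\cdot\alpha$ guaranteed by the hypothesis on $\ALG$.

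The procedure itself would run as follows. First, prepare a uniform superposition $\ket{\psi_0}=|\F|^{-n/2}\sum_{v\in\F^n}\ket{v,0}$ on $O(n\log|\F|)$ qubits using $O(n\log|\F|)$ Hadamard-type gates. Then apply $\ALG_{\mathsf{verified}}$ coherently; by \cref{lem:inf-verification} the resulting state is approximately $|\F|^{-n/2}\sum_v\gamma_*^v\ket{v,Mv,w(Mv,v),1}+(\text{orthogonal bad branch})\ket{0}^{\mathrm{flag}}$, with $|\gamma_*^v|^2=\Pr[\ALG(v)=Mv]$. Measure the flag qubit: if the outcome is $1$, measure the $v$-register and return the outcome; otherwise restart. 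Each attempt uses $\ALG_{\mathsf{verified}}$ once, which costs $O(1)$ calls to $\ALG$ and $\widetilde{O}(n^{3/2})$ additional one- and two-qubit gates. The probability of seeing flag $1$ in a given attempt equals $|\F|^{-n}\sum_v|\gamma_*^v|^2=\alpha$, and, conditional on flag $1$, the mass bound of the previous paragraph yields $\Pr[v\in X_{\alpha/2}\mid\mathrm{flag}=1]\geq 1/2$. Hence $O(1/\alpha)$ attempts output a vector in $X_{\alpha/2}$ with constant probability, and a standard amplification step (repeating the rejection loop a constant number of times and verifying each candidate with a small batch of independent runs of $\ALG^M(v)$) boosts this to the desired high-probability guarantee.

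The main technical obstacle will be controlling the approximation error inherent in $\ALG_{\mathsf{verified}}$: the flag is only an approximate indicator of success, so the post-selected distribution on $v$ deviates from the ideal success-probability-weighted distribution. I would show that choosing the precision parameter of $\ALG_{\mathsf{verified}}$ to be a sufficiently small polynomial in $\alpha$ keeps this total-variation error well below the $1/2$ slack in the conditional probability bound. Since \cref{lem:inf-verification} delivers $\ALG_{\mathsf{verified}}$ at arbitrary inverse-polynomial precision with only $O(1)$ uses of $\ALG$, this absorbs cleanly into the final $O(1/\alpha)$ bound on the total number of $\ALG$ calls and the $\widetilde{O}(n^{3/2})$ additional gates.
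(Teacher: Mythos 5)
Your measure-and-restart scheme does produce a vector in $X_{\alpha/2}$ with the right complexity, but it misses the key structural property that the paper's $\mathsf{Q_{samp}}$ is designed to deliver and that the downstream reduction actually relies on: \emph{approximate uniformity} of the sample over $X_\tau$. After you apply $\ALG_{\mathsf{verified}}$ to a uniform superposition and post-select on the flag, the induced distribution on $v$ is proportional to $p_v = \Pr[\ALG(v)=Mv]$, so vectors with $p_v$ close to $1$ are sampled up to $\Theta(1/\alpha)$ times more often than those with $p_v$ barely above the threshold. The paper's Step~5 of Algorithm~1 and its correctness argument explicitly require ``uniformly random samples $x_1,x_2,x_3\in X_\tau$,'' because the probabilistic Bogolyubov guarantee in \cref{lem:quantum-local-correction} (that $\Pr_{x_1,x_2,x_3\in X}[x_4\in X]\geq\alpha^2$) is a statement about the uniform measure on $X$; with a $p_v$-biased sampler that bound is no longer available without a new argument. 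This is precisely why the paper does not stop at $\ALG_{\mathsf{verified}}$: it first runs the singular value threshold projection of \cref{thm:qsv-projections} to convert the flag from a ``success-probability-weighted'' indicator into an approximate $\{0,1\}$ indicator $\mathsf{1}_{X_\tau}$, so that conditioning on the flag yields (close to) the flat state $\ket{X_\tau}=|X_\tau|^{-1/2}\sum_{v\in X_\tau}\ket v$, and then uses fixed-point amplitude amplification to boost that flat state — keeping the whole procedure unitary, which is needed again in the ``Making $\ALG'$ a unitary'' paragraph.

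A secondary point: your error-control sketch focuses on making $\ALG_{\mathsf{verified}}$'s flag accurate, but the real precision issue the paper wrestles with is the ``wasteland'' slice $W_t$ on which no polynomial threshold projection can give a guarantee; the paper handles this via a carefully chosen random threshold (\cref{cor:close-fourier-coeffs}), not merely by tightening $\epsilon$. So even for the formal version of the sampling claim, the threshold-projection step is not an optional optimization — it is the mechanism by which the procedure moves from ``a vector in $X_{\alpha/2}$'' to ``a near-uniform sample from $X_\tau$,'' and the latter is what the reduction consumes.
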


\paragraph{Learning Bogolyubov subspaces from noisy quantum oracles.}
In order to locally correct vectors that are outside of the Bogolyubov subspace $V$, we shall need to learn an approximation of $V$, so that we can shift arbitrary vectors into $V$ via sparse vectors (see \cref{tech:qcorrection}). 

Standard quantum procedures for sampling characters according to the probabilities defined by the Fourier spectrum of a function $f$ use a unitary oracle for $f$ in superposition, and are built on the Bernstein-Vazirani algorithm. However, in our setting we need to list-decode the heavy Fourier coefficients of a probabilistic implementation of an indicator function that we obtain from the linear verification procedure in \cref{lem:inf-verification}. Hence, we face the challenge of learning the heavy Fourier characters from a noisy implementation of the function. To this end, we generalise the techniques used to prove Adcock and Cleve's quantum Goldreich-Levin lemma \cite{AdcockCleve02}, and obtain the following quantum procedure.

\begin{lemma}[Informally stated; see \cref{lem:qGoldLev}]
\label{lem:inf-fourier}
    For a function $f:\F^n\to\F_2$, given a noisy quantum oracle $U_f$ acting on $m$ qubits such that for every input $x$, measuring the output qubit of $U_f$ produces $f(x)$ with probability at least $1-\epsilon$, there is a quantum procedure $\mathsf{C_{GL}}$ which produces outputs $y\in\F^n$ with probability $p_y$ that satisfy $\abs{p_y-|\hat{f}(y)|^2}\leq 4\epsilon$.  $\mathsf{C_{GL}}$ uses $U_f$ and $U_f^{\dagger}$ once, in addition to $O(n\log |\F|)$ additional one-qubit and two-qubit gates.
\end{lemma}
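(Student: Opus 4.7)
The plan is to generalise Adcock and Cleve's quantum Goldreich--Levin procedure \cite{AdcockCleve02} to the noisy-oracle setting. Explicitly, the algorithm $\mathsf{C_{GL}}$ proceeds as follows: (i) initialise $\ket{0}^{\otimes n}\ket{0}_{\text{out}}\ket{0}_{\text{anc}}$; (ii) apply $\qft_{\F}$ on the first register to produce the uniform superposition $\tfrac{1}{|\F|^{n/2}}\sum_{x \in \F^n}\ket{x}\ket{0}\ket{0}$; (iii) invoke $U_f$; (iv) apply a single-qubit $Z$ gate on the output qubit, which multiplies the ``correct'' branch by $(-1)^{f(x)}$ and the ``wrong'' branch by $-(-1)^{f(x)}$; (v) invoke $U_f^{\dagger}$; (vi) apply $\qft_{\F}^{-1}$ to the first register and measure. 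This circuit invokes $U_f$ and $U_f^{\dagger}$ once each, and the two QFTs account for all $O(n\log|\F|)$ additional one- and two-qubit gates.

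To analyse the output distribution, I would begin with a Schmidt-style decomposition of the noisy oracle's action: for each $x$, write $U_f\ket{x}\ket{0}_{\text{out}}\ket{0}_{\text{anc}} = \cos\theta_x\ket{x}\ket{f(x)}\ket{\phi_x} + \sin\theta_x\ket{x}\ket{1-f(x)}\ket{\psi_x}$ with $\sin^2\theta_x \le \epsilon$ by the hypothesis on $U_f$. Within the two-dimensional subspace spanned by $\ket{\Psi_x}$ and its in-plane orthogonal complement $\ket{\Psi_x^{\perp}}$, the $Z$-rotated state decomposes as $\cos(2\theta_x)\ket{\Psi_x} - \sin(2\theta_x)\ket{\Psi_x^{\perp}}$. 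Applying $U_f^{\dagger}$ then returns the first summand to $\ket{x}\ket{0}\ket{0}$ at amplitude $\cos(2\theta_x)$ and sends the second to some $\ket{x}\ket{\chi_x}$ with $\ket{\chi_x}$ orthogonal to $\ket{0}\ket{0}$ in the ancilla. Consequently, after step~(vi) the amplitude of outcome $y$ on ancilla $\ket{0}\ket{0}$ is $A_y := \tfrac{1}{|\F|^n}\sum_x \chi_{-y}(x)(-1)^{f(x)}\cos(2\theta_x)$, and the orthogonal ancilla component has squared norm $\|B_y\|^2$, so the measurement probability is $p_y = |A_y|^2 + \|B_y\|^2$.

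It then remains to show $|p_y - |\hat{f}(y)|^2| \le 4\epsilon$. Since $|\cos(2\theta_x) - 1| = 2\sin^2\theta_x \le 2\epsilon$ uniformly in $x$, the triangle inequality gives $|A_y - \hat{f}(y)| \le 2\epsilon$, from which $\bigl||A_y|^2 - |\hat{f}(y)|^2\bigr| = O(\epsilon)$ follows since both quantities have modulus at most $1$. The main obstacle is controlling the leakage term $\|B_y\|^2$ for every $y$ individually, rather than merely on average, because the states $\ket{\chi_x}$ need not be mutually orthogonal across $x$. I would resolve this by applying Parseval's identity over $y$ coordinatewise in an ancilla basis $\{\ket{k}\}$ and summing, which telescopes to $\sum_y \|B_y\|^2 = \tfrac{1}{|\F|^n}\sum_x \sin^2(2\theta_x) \le 4\epsilon$ (using $\sin^2(2\theta_x)\le 4\sin^2\theta_x$ and $\sum_k|\langle k|\chi_x\rangle|^2 = 1$). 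Non-negativity of each summand then yields $\|B_y\|^2 \le 4\epsilon$ individually, and combining the two estimates proves the claimed bound.
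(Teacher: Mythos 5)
Your approach is genuinely different from the paper's in how it arrives at the error bound, and it almost works, but it proves a weaker constant than stated because you choose a different notion of $p_y$ and then have to control an extra leakage term.

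The paper computes $p_y$ as the probability of measuring the full outcome $(y, 0^{m+1})$ on the first three registers, i.e.\ $p_y=|\beta_y|^2$ where $\beta_y=\langle y,0,0,-|\mathsf{C_{GL}}|0,0,0,0\rangle$; it evaluates $\beta_y$ by propagating $\ket{0,0,0,0}$ forward and $\bra{y,0,0,-}$ backward to the CNOT and taking the inner product, which immediately gives $|\hat{f}(y)-\beta_y|\le 2\epsilon$ and hence $\bigl||\hat f(y)|^2-p_y\bigr|\le 4\epsilon$ with no need to track the non-accepting part of the state. You instead take $p_y$ to be the marginal probability of the outcome $y$ on the first register alone, so you must also account for the leakage $\|B_y\|^2$, which you bound uniformly by $4\epsilon$ via Parseval. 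The two estimates then combine to give at best $\bigl|p_y-|\hat f(y)|^2\bigr|\le 4\epsilon+4\epsilon=8\epsilon$, not the claimed $4\epsilon$. Your Parseval argument is correct as stated, since $\sum_y|A_y|^2=\frac{1}{N}\sum_x\cos^2(2\theta_x)$ and the total norm is one forces $\sum_y\|B_y\|^2=\frac{1}{N}\sum_x\sin^2(2\theta_x)\le 4\epsilon$; the fix is simply to adopt the paper's convention and post-select on the ancilla reading $0^{m+1}$, at which point $\|B_y\|^2$ drops out entirely and your triangle-inequality estimate of $|A_y-\hat f(y)|\le 2\epsilon$ immediately yields the stated $4\epsilon$.

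One small inaccuracy in your intermediate reasoning: you assert that $U_f^\dagger\ket{\Psi_x^\perp}$ is of the form $\ket{x}\ket{\chi_x}$ with $\ket{\chi_x}\perp\ket{0,0}$. The hypothesis on $U_f$ only specifies its action on $\ket{x,0^{m+1}}$, so you cannot conclude that the first register stays $\ket{x}$. What \emph{is} true, and all your argument actually needs, is that $U_f^\dagger\ket{\Psi_x^\perp}$ is orthogonal to \emph{every} $\ket{x',0,0}$: for $x'\neq x$ this follows because $\bra{x',0,0}U_f^\dagger=\bra{\Psi_{x'}}$ has first-register support on $\ket{x'}$ while $\ket{\Psi_x^\perp}$ lives on $\ket{x}$, and for $x'=x$ it is the defining orthogonality $\langle\Psi_x|\Psi_x^\perp\rangle=0$. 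This rescues both your formula for $A_y$ and your Parseval telescoping, but it is cleaner to derive $A_y$ exactly the way the paper does, namely as $\frac1N\sum_x\omega^{xy}\langle\Psi_x|Z|\Psi_x\rangle$, so that the structure of $U_f^\dagger\ket{\Psi_x^\perp}$ never needs to be mentioned. In short: replacing $Z$ for the CNOT-plus-$\ket{-}$ ancilla is a fine equivalent, your derivation of the amplitude and the Parseval bound are correct, but you should either state the weaker $8\epsilon$ bound or measure the ancillas to match the formal Lemma~5.5 and drop the $\|B_y\|^2$ term.
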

We stress that due to the noisy implementation, we do not obtain the heavy Fourier characters of the actual indicator function of the set of good inputs $X$, but rather an approximation of these Fourier coefficients. Fortunately, this guarantee suffices for the noise-robust Bogolyubov lemma that we presented in~\cref{tech:qcorrection}.

In fact, \cref{lem:inf-fourier} is not the last link in the chain: a further complication arises from the fact that any polynomial approximation of the indicator function $\mathsf{1}_X$ can only work well in a subset of $X$ and a subset of its complement, necessarily oscillating in a ``wasteland'' region corresponding to vectors on which the success probability of $\ALG$ lies in some range $(\tau-\delta, \tau+\delta)$.

The polynomial approximation happens at the level of a real-valued function (the success probability of $\ALG$ on input vectors) and the corresponding wasteland slice is apparently small, namely, an interval of length $2\delta$. However this hides the fact that the actual number of input vectors falling into this intermediate set can be alarmingly large, with density as high as $\frac{1-\alpha}{1-\alpha/2}$.

We show that in spite of this difficulty, a combination of careful error analysis of our quantum Bogolyubov subspace learning technique and, as we discuss in \cref{tech:reduction}, a carefully chosen random selection of the threshold $\tau$, allows us to efficiently learn the Bogolyubov subspace.

\subsection{Quantum worst-case to average-case reductions}
\label{tech:reduction}
    We first present a high-level overview of our reduction for the Matrix-Vector Multiplication problem in a simplified setting, then sketch how to extend the proof to obtain \cref{thm:circuit-informal} and \cref{thm:query-informal}.
    
    Recall that we start with an \emph{average-case} quantum algorithm $\ALG$ that is correct with probability $\alpha$ on a randomly chosen input, i.e.,
    \begin{equation*}
        \Pr_{\substack{M,v\\ \ALG}}[\ALG^{M,v} = Mv]  =
        \E_{\substack{M \in \F^{n\times n}\\v\in\F^n}}\left[\norm{\Pi_{Mv}{\ALG}^{M,v}\ket{0}}^2\right] \geq \alpha\;.
    \end{equation*}
    Our goal is to boost the success rate of the algorithm such that we obtain a \emph{worst-case} quantum algorithm that succeeds with high probability \emph{on all} inputs.
    
    For the purpose of a clear exposition, we first make the following simplifying assumptions: (1) we work over $\F=\F_2$, (2) we assume $\alpha>0$ is an absolute constant (and in turn, we will not optimise the dependency on it), (3) we fix the error parameter to an arbitrarily small constant, and (4) we take the average case only on the vectors $v \in \F^n$. In other words, we only consider the case where the given matrix $M$ is a \emph{good} matrix. Indeed, for the case that this does not hold, we will later present matrix self-correction techniques to ensure such a condition is satisfied with high probability.

    In order to describe the reduction the following notation will be convenient.
    For each $v \in \F^n$ let $p_v = \Pr_{\ALG}[\ALG(v) = Mv]$ be the probability that $\ALG$ correctly computes the output on input $v$, where the probability is only over the measurement of $\ALG$.

First, we define threshold sets as follows.
\begin{equation*}
        X_\tau = \{v \in \F^n : p_v > \tau\}\;.
\end{equation*}
Since $\Pr_{v \in \F^n}[\ALG(v) = Mv] \geq \alpha$, it implies that $\E_{v\in\F^n}[p_v] \geq \alpha$, and hence,
%we can find a non-negligible fraction of vectors $v \in \F^n$ such that $p_v > \alpha/2$.
by Markov's inequality, for $\tau \leq \alpha/2$ it holds that $\frac{\abs{X_\tau}}{\abs{\F^n}} \geq \alpha/2$. Next, before describing the high-level overview of our proof, we first discuss a simplified warm-up that contains the main idea, while skipping some technical complications that we discuss later.

\paragraph{Warm-up.}
Ideally, we could construct the worst-case quantum algorithm $\ALG'$ as follows. We first learn all of the significant Fourier coefficients of $X_{\alpha/2}$ using the quantum learning procedure in \cref{lem:inf-fourier}. Then, by the quantum local correction lemma (i.e., \cref{lem:inf-correction}, which in turn, relies on the robust Bogolyubov lemma), these Fourier coefficients can be used to compute a decomposition of any input $v \in \F^n$ as 
\begin{equation*}
    v = s+x_1+x_2+x_3+x_4, \textrm{ where } x_1,x_2,x_3,x_4 \in X_{\alpha/2}\;,
\end{equation*}
and $s \in \F^n$ is a sparse vector that has only $O(1)$-non zero entries, which allows us to shift arbitrary inputs into the Bogolyubov subspace.

More specifically, for each input $v \in \F^n$, we can use the Fourier coefficients of $X_{\alpha/2}$ in order to compute the sparse vector $s$.
Then, we sample $x_1,x_2,x_3$ from $X_{{\alpha/2}}$, and set $x_4 = v - x_1-x_2-x_3 - s$.
By the quantum local correction lemma (\cref{lem:inf-correction}) we have that $\Pr[x_4 \in X_{\alpha/2}] \geq \poly(\alpha)$, and hence set the algorithm $\ALG'$ to compute
\begin{equation*}
    Ms + \ALG^M(x_1) + \ALG^M(x_2) + \ALG^M(x_3) + \ALG^M(x_4)\;.
\end{equation*}
Note that computing $Ms$ can be done in $O(n)$ time, where $O(\cdot)$ hides the sparsity of $s$, 
and $Mx_i$ can be computed correctly using $\ALG$ for each $i\in \{1,2,3,4\}$ with probability $\poly(\alpha)$. In total, the above procedure outputs the correct answer with probability at least $\poly(\alpha)$. By repeating the procedure $\poly(1/\alpha)$ times and verifying the result at each time, we obtain a reduction that succeeds with high probability.

\medskip
\paragraph{Random thresholds.}
The actual reduction is more subtle. The main reason for this is that the lemma we use to learn the Bogolyubov subspace (\cref{lem:inf-fourier}) cannot sample from the Fourier spectrum of the set $X_{\alpha/2}$ exactly, as computing $p_v$ exactly for a particular $v$ requires a large number of samples, and so, we cannot apply \cref{lem:inf-correction} directly on the set $X_{\alpha/2}$.

Instead, we choose a parameter $\tau \in [\alpha/4, \alpha/2]$, and consider the sets $X_{\tau}$ and $X_{\tau'}$, where $\tau' = \tau - O(\alpha^{3/2})$.
We choose the parameters so that 
\begin{equation*}
    \abs{\frac{\abs{X_{\tau'}}}{\abs{\F^n}} - \frac{\abs{X_{\tau}}}{\abs{\F^n}}} = O(\alpha^{3/2})\;.
\end{equation*}
See \cref{cor:close-fourier-coeffs} for details on how the random threshold $\tau$ is chosen. Then, we apply \cref{lem:inf-correction} with respect to \emph{some} set $X^*$ such that with high probability $X_{\tau} \seq X^* \seq X_{\tau'}$. Note that we do not have any structural guarantee about $X^*$ containing a particular vector in $X_{\tau'} \setminus X_{\tau}$.

We use \cref{lem:inf-fourier} in order to obtain all significant Fourier coefficients of $X^*$. Observing that $\frac{\abs{X^* \triangle X_{\tau}}}{\abs{\F^n}} = O(\alpha^{3/2})$,
it follows that 
\begin{equation*}
    \abs{\hat{X^*(y)} - \hat{X_\tau(y)}} = O(\alpha^{3/2})\;,
\end{equation*}
for all $y \in \F^n$, and hence we can use the Fourier coefficients of $X^*$ to approximate the  Fourier coefficients of $X_{\tau}$. Here by $\hat{X}$ we mean the Fourier coefficients of the indicator function of the set $X$.

By the discussion above, we may assume that we know a good approximation of all significant Fourier coefficients of $X_{\tau}$.
The rest of the reduction follows the same plan as described in the warm-up above.

\paragraph{Extending the average-case over both matrices and vectors.} So far, we obtained a worst-case to average-case reduction where the average-case condition only refers to the vectors in the Matrix-Vector Multiplication problem. To extend the average case to both vectors and matrices, we need an additional layer of matrix local correction. To this end, we use the technique of shifting the given matrix~$M$ by a random matrix~$R$, which with probability $\Omega(\alpha)$ shifts the input to the set of good matrices, where  matrix-vector multiplications are computed correctly for an $\Omega(\alpha)$-fraction of vectors. See \cref{sec:main-mvm} for details. 
%Fortunately, it turns out that the low-rank matrix shift technique in \cite{AGGS22} can be quantised in a straightforward way, allowing us to efficiently locally correct the matrices as well. See details in \cref{sec:main-mvm}. 

\paragraph{Quantum algorithms for linear problems.}
Throughout the argument that we outlined above, we employed a proof strategy that only invokes gate-efficient quantum algorithms (i.e.,\ with gate complexities that are larger than the query complexity by at most a polylog factor). We do this precisely due to our interest in uniform quantum algorithms for linear problems: by a careful instantiation of families of quantum circuits as quantum query algorithms that match the setting of \cref{thm:main-query}, we obtain in \cref{thm:main-circuit} quantum worst-case to average-case reductions for all linear problems. See \cref{sec:main-all} for details.

\section{Preliminaries}\label{sec:preliminaries}

We establish some minimal standard preliminaries and notation regarding quantum algorithms, and refer the reader to standard textbooks such as \cite{Nielsen2010} for details.

\subsection{Quantum unitary oracles and Fourier transforms}

We start by providing basic notation and definitions regarding discrete and quantum Fourier transforms, as well as quantum oracles.

\paragraph{Discrete and quantum Fourier transforms.}
Let $\F=\F_p$ be the prime field of size~$p$. For a function $f:\F^n\to\R$, the Fourier coefficient $\widehat{f}(y)$ for $y\in\F^n$ representing characters $\chi_y$ is given by

\begin{equation}
\label{eq:fourier-coefts}
    \widehat{f}(y) = \frac{1}{p^n}\sum_{x\in\F^n}\omega^{x\cdot y}f(x) \; ,
\end{equation}
where $\omega=\ee^{2\pi\ii/p}$ is a primitive $p^{th}$ root of unity.

Denote the quantum Fourier transform over $\F$ by $\qft_p$, having the action
\begin{equation}
    \qft_p\ket{x} = \frac{1}{\sqrt{p}}\sum_{y\in\F_p}\omega^{x\cdot y}\ket{y} \; .
\end{equation}

\paragraph{Quantum unitary oracles.}
We let quantum algorithms access a matrix $M\in \F^{n\times n}$ via a unitary oracle $U_M$ that performs the map
\begin{equation}
\label{eq:quant-input-oracle-mat}
    U_M\ket{j, k, z} = \ket{j, k, z \oplus M_{jk}} \;,
\end{equation}
for all indices $j,k\in [n]$ and $z\in\F$, where $\oplus$ denotes addition over~$\F$. For vectors $v\in\F^n$, oracle access means a unitary $U_v$ that returns components of the vector when queried with an index, analogously to \cref{eq:quant-input-oracle-mat}, i.e.,
\begin{equation}
\label{eq:quant-input-oracle-vec1}
    U_v\ket{j, z} = \ket{j,  z \oplus v_{j}}\;,
\end{equation}
for all $j\in[n]$ and $z\in\F$. 
%Addition and multiplication are performed over $\F$. 
Unless otherwise stated, whenever we assume access to a unitary $U$ as an oracle we also assume access to $U^{\dagger}$.

\paragraph{Linear Problems.} A linear problem is characterised by a family of matrices $\mathcal{M}:=\{M_n\in\F^{n\times n}\}_{n\in\N}$, where on input $v\in\F^n$ the solution to the problem is the vector $Mv$, omitting the subscript on $M$ for readability. \\

\noindent{$\quad$\small\underline{\textsc{Linear Problem $\mathcal{M}$}}} \\
\noindent\textbf{Input:} A vector $v\in\F^n$.\\
\noindent\textbf{Output:} The matrix-vector product $Mv$.\\

This notion captures a wide variety of problems, including such fundamental ones as computing Discrete Fourier transforms, and polynomial evaluation (Vandermonde matrices). As discussed in the introduction, in this paper we show that for \emph{every} linear problem $\mathcal{M}$ there exists an efficient quantum worst-case to average-case reduction, a result for which no analogue is known in the case of uniform classical algorithms.

% that has oracle access to a matrix $M\in \F^{n \times n}$

% \paragraph{Quantum average-case algorithms.}
% Let $\ALG$ be an a quantum unitary algorithm for a search problem defined by a relation $\{(x,y)\}$ success rate $\succ{\ALG} = \alpha$. Then, on each input $x$, the probability that $\ALG$ prepares the state
% \begin{equation*}
% %\label{eq:succ}
%     \ALG\ket{x} = \beta_{\mathrm{succ}}^x\ket{x}\ket{y} + \beta_{\mathrm{fail}}^x\ket{x}\ket{\psi(x)} \;,
% \end{equation*}
% where $\ket{\psi(x)}$ is an arbitrary superposition over $\ket{z}$ for vectors $z \neq x$, and satisfies that
% %Denote by $\mathsf{err}_{\ALG}(z) = \beta_{\mathrm{fail}}^x$, the probability that $\ALG$ outputs incorrectly on input $x$.
% there exists a set of inputs $X$ such that if $x\in X$, then $\beta_{\mathrm{succ}}^x \geq 2/3$, and otherwise $\beta_{\mathrm{succ}}^x \leq 1/3$.

Next, we define the central notion that we study in this paper, namely the average-case behaviour of quantum algorithms.

\subsection{Average-case quantum algorithms}
An average-case quantum algorithm $\ALG$ is one that succeeds with probability at least $\alpha$ in expectation over (uniformly) random inputs. That is, if $\ALG$ is to compute some function $f$ mapping some known measurable domain $V$ to some co-domain $W$, then
\begin{equation}
\label{eq:avg-defn}
\Pr_{\substack{\ALG\\ v\in V}}[\ALG(v) = f(v)]  :=
        \E_{v\in V}\left[|\Pi_{f(v)}\ALG(v)|^2\right] \geq \alpha\;.
\end{equation}
Here $\Pr_{\ALG}$ denotes the probability over the internal (quantum) randomness of the algorithm arising from its unitary nature and final measurements. Note that the probability above is taken over the inputs \textit{as well as} the internal quantum randomness of the algorithm. This is highlighted by the notation we use, which we elaborate more on below. 

This notion of average-case quantum algorithms immediately suggests considering the following natural modification of quantum oracles.

\paragraph{Noisy quantum oracles.} For a function $f:\F^n\to\F^m$, we can consider a unitary oracle $U_{f}$ which on input $x\in\F^n, z\in\F^m$ performs the map
\begin{equation}
    \label{eq:bounded-error-oracle}
    U_{f}\ket{x}\ket{z}=\beta_{\mathrm{succ}}^x\ket{x}\ket{z+ f(x)} + \beta_{\mathrm{fail}}^x\ket{x}\ket{\psi(x)} \;,
\end{equation}
where $\beta_{\mathrm{succ}}^x, \beta_{\mathrm{fail}}^x\in\C$ such that $\abs{\beta_{\mathrm{succ}}^x}^2+\abs{\beta_{\mathrm{fail}}^x}^2$=1, the normalised state $\ket{\psi(x)}$ could be an arbitrary superposition over $\ket{z+v}$ for vectors $v\in\F^m\setminus\{f(x)\}$, and $+$ denotes component-wise addition for vectors over $\F$. We can interpret $U_{f}$ as a quantum analogue of a classical probabilistic algorithm for computing $f$ --- for an input $x\in\F^n$, it outputs the correct value of $f(x)\in\F^m$ with probability $\abs{\beta_{\mathrm{succ}}^x}^2$ when the second register is measured in the computational basis.

More generally, we consider oracles that may entangle a workspace register with the output
\begin{equation}
    \label{eq:bounded-error-ent-oracle}
    U_{f}\ket{x}\ket{\mathrm{w}}\ket{z}=\beta_{\mathrm{succ}}^x\ket{x}\ket{z+f(x)}\ket{\mathrm{w}(x,z,f(x))} + \beta_{\mathrm{fail}}^x\ket{x}\ket{\Psi(x)} \;,
\end{equation}
where $\ket{\Psi(x)}$ is now a normalised state of the form
\begin{equation}
    \ket{\Psi(x)}=\sum_{\substack{v\in\F^n\\v\neq f(x)}}\gamma_v^x\ket{z+v} \ket{\mathrm{w}(x,z,v)}\;.
\end{equation}

\paragraph{Quantum algorithms for linear problems.} A quantum algorithm for a linear problem $\mathcal{M}$ outputs the correct answer with some probability arising from measurement. Such an algorithm is represented by a unitary ${\ALG}$ which on an input vector $v\in\F^n$ has the action
\begin{equation}
    \label{eq:BEQLA}
    {\ALG}\ket{v}\ket{0}=\beta_{\mathrm{succ}}^v\ket{v}\ket{Mv}\ket{\mathrm{w}_0(v)} + \beta_{\mathrm{fail}}^v\ket{v}\ket{\Psi(v)},
\end{equation}
where $\beta_{\mathrm{succ}}^v,\beta_{\mathrm{fail}}^v\in\C$ are complex amplitudes such that $|\beta_{\mathrm{succ}}|^2+|\beta_{\mathrm{fail}}^v|^2=1$, $\ket{\mathrm{w}_0(v)}$ is an arbitrary normalised state of a workspace register, and 
\[
    |\beta_{\mathrm{succ}}^v|^2=\norm{\mathbbm{1}\otimes\Pi_{Mv}{\ALG}\ket{v}\ket{0}}^2
\] 
is the success probability of the algorithm, where $\Pi_{Mv}:=\ket{Mv}\!\bra{Mv}$ is an orthogonal projection on to the correct answer subspace of the output register. $\ket{\Psi(v)}$ is a normalised state that is orthogonal to the state $\ket{Mv}$ that encodes the correct matrix vector product, i.e.\ $\forall v\in\F^n, ~\ip{Mv\!~|~\!\Psi(v)}=0$. In general, it takes the form
\begin{equation}
    \ket{\Psi(v)} = \sum_{\substack{z\in\F^n\\z\neq Mv}}\gamma_z^v\ket{z}\ket{\mathrm{w}(v,z)}.
\end{equation}

% \paragraph{Bogolyubov-Ruzsa lemma.}
% \sasha{Perhaps we don't need this paragraph anymore.}
% We will use the following probabilistic version of the quasi-polynomial Bogolyubov-Ruzsa lemma~\cite{sanders12} proven in~\cite{AGGS22}.
% \begin{theorem}[{\cite[Theorem 3.3]{AGGS22}}]
%     Let $\F = \F_p$ be a prime field, and let $X \seq \F^{n\times n}$ be a set of size $\abs{X} = \alpha \cdot \abs{\F}^n$, for some $\alpha \in (0,1]$. Then, there exists a subspace $V \seq \F^{n\times n}$ of dimension $\dim(V) \geq n^2 - O(\log^4(1/\alpha))$ such that for all $A \in V$ it holds that
%     \begin{equation*}
%         \Pr_{M_1, M_2, M_3 \in \F^{n\times n}}[M_1, M_2 \in X, M_3, M_4 \in -X] \geq \Omega(\alpha^5)\;,
%         \enspace,
%     \end{equation*}
%     where $M_4 = A-M_1-M_2-M_3$.
% \end{theorem}

% \paragraph{Low-rank shifts of matrices.}
% We will also use the following fact: for every subspace of matrices $V\subseteq\F^{n\times n}$ of $\dim{V}=n^2-k$, every matrix $A\in\F^{n\times n}$, a random matrix $L\in\F^{n\times n}$ of rank $\rk(L)=2k$ satisfies $A+L\in V$ with probability at least $|\F|^{-k}$.
%
\section{Quantum worst-case to average-case reductions}
\label{sec:main}

In this section, we provide our quantum worst-case to average-case reductions. To obtain our results in a modular way, we start by stating a general technical lemma from which we can easily derive both \cref{thm:main-circuit} and \cref{thm:main-query}.

\begin{restatable}{lemma}{lemmain}
\label{lem:main}
    Let $\F = \F_p$ be a prime field, $n \in \N$, and $\alpha\coloneqq\alpha(n) \in (0,1]$.
    Let $\ALG^M$ be a quantum query algorithm
     that has oracle access to a matrix $M\in \F^{n \times n}$, gets as input  a vector $v \in \F^{n}$, makes $Q(n)$ queries,
    and attempts to compute $Mv$. Then, for every constant $\delta>0$, there exists a quantum algorithm $(\ALG')^M$ that makes $O(\alpha^{-7/2})$ uses of $\ALG^M$, $O\left((Q(n) + n^{3/2})\cdot\alpha^{-7/2}\right)$ queries to $U_M$ and $U_M^{\dagger}$, uses $O(n^{3/2}\alpha^{-7/2}\cdot\log{n}\cdot \poly(\log\abs{\F}))$ additional one-qubit and two-qubit gates, and $O(\alpha^{-2} \cdot n\log n)$ ancillary qubits such that the following holds.
    
    For every matrix $M\in\F^{n\times n}$ such that $\ALG^M$ computes $Mv$ correctly with probability $\alpha$:
     \begin{equation*}
        \Pr_{v,\ALG}[\ALG^M(v) = Mv]  =
        \E_{v\in\F^n}\left[\norm{\Pi_{Mv}{\ALG^{M}}\ket{v}\ket{0}}^2\right] \geq \alpha\;,
    \end{equation*}
    the algorithm $(\ALG')^M$ computes $Mv$ correctly for every~$v\in\F^n$ with probability $1-\delta$:
    \begin{equation*}
    \forall v \in \F^n \quad
        \Pr_{\ALG'}[(\ALG')^M(v) = Mv]  =
        \norm{\Pi_{Mv}{(\ALG')^M}\ket{v}\ket{0}}^2 \geq 1 - \delta\;.
    \end{equation*}
\end{restatable}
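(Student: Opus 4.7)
The plan is to execute the additive combinatorics strategy outlined in \cref{tech:reduction}, assembling the primitives of \cref{tech:qcorrection,tech:qalgo} around a single good matrix $M$. Write $p_v:=\Pr_{\ALG}[\ALG^M(v)=Mv]$, so that the hypothesis reads $\E_v[p_v]\geq\alpha$, and for a threshold $\tau$ let $X_\tau:=\{v\in\F^n:p_v>\tau\}$. Markov's inequality gives $|X_\tau|/|\F^n|\geq\alpha/2$ whenever $\tau\leq\alpha/2$, so $X_\tau$ is dense enough to feed into the robust probabilistic Bogolyubov lemma.

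The reduction splits into preprocessing and an online phase. In preprocessing, I build the noisy indicator oracle $\widetilde{O}_X$ for $1_{X_\tau}$ by composing the linear verifier $\ALG_{\mathsf{verified}}$ of \cref{lem:inf-verification} with the QSVT singular-value threshold projector discussed in \cref{tech:qalgo}; run the quantum Goldreich--Levin learner of \cref{lem:inf-fourier} on $\widetilde{O}_X$ to recover its heavy Fourier spectrum; and feed the resulting list into the decomposition procedure of \cref{lem:inf-correction} to extract vectors $b_1,\dots,b_t\in\F^n$ with $t\leq 4/\alpha^2$ and pivot indices $k_1,\dots,k_t\in[n]$ encoding a Bogolyubov-type subspace. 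In the online phase, for each input $v$, I compute the sparse shift $s=\sum_{j=1}^t\langle v,b_j\rangle\vec{e}_{k_j}$ and its image $Ms$ directly from the $O(1/\alpha^2)$ nonzero coordinates of $s$ using $O(n/\alpha^2)$ queries to $U_M$; sample $x_1,x_2,x_3\sim X_\tau$ via \cref{lem:inf-sampling}; set $x_4:=v-s-x_1-x_2-x_3$; and output $Ms+\sum_{i=1}^4\ALG^M(x_i)$ after screening the candidate with $\ALG_{\mathsf{verified}}$.

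Correctness of a single trial combines \cref{lem:inf-correction} (which forces $x_4\in X_\tau$ with probability $\Omega(\alpha^2)$) with the fact that, conditional on $x_1,\dots,x_4\in X_\tau$, each of the four independent $\ALG^M$ calls returns the correct product with probability at least $\tau=\Omega(\alpha)$. After internally amplifying each $\ALG^M$ call by $O(1/\alpha)$ repetitions screened with $\ALG_{\mathsf{verified}}$, one trial succeeds with probability $\Omega(\alpha^2)$, so $\mathrm{poly}(1/\alpha)\cdot\log(1/\delta)$ outer repetitions, again screened by $\ALG_{\mathsf{verified}}$, push the overall failure probability below $\delta$ for every input $v$. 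Summing the per-trial budgets --- one $\ALG_{\mathsf{verified}}$ call contributing $O(n^{3/2})$ queries to $U_M$ and $\widetilde O(n^{3/2})$ gates via \cref{lem:inf-verification}, the $O(1/\alpha)$ sampler invocations, the sparse computation of $Ms$, and the calls to $\ALG^M$ --- and adding the preprocessing cost from \cref{lem:inf-fourier,lem:inf-correction} should deliver the stated query, gate and ancilla bounds after the usual accounting.

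The main obstacle is that the noisy oracle $\widetilde{O}_X$ produced by the QSVT threshold projector cannot decide membership on a ``wasteland'' slice $\{v:|p_v-\tau|<\eta\}$, which can contain up to a $(1-\alpha)/(1-\alpha/2)$ fraction of inputs and therefore completely breaks the clean decomposition hypothesis of \cref{lem:inf-correction} if $\tau$ is chosen deterministically. Following \cref{tech:reduction}, the fix is to pick $\tau$ uniformly at random from $[\alpha/4,\alpha/2]$, so that in expectation over $\tau$ the effective set $X^*$ that $\widetilde{O}_X$ actually indicates satisfies $X_\tau\subseteq X^*\subseteq X_{\tau-O(\alpha^{3/2})}$ and $|X^*\triangle X_\tau|/|\F^n|=O(\alpha^{3/2})$. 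This symmetric-difference bound lifts to a uniform Fourier bound $|\widehat{1_{X^*}}(y)-\widehat{1_{X_\tau}}(y)|=O(\alpha^{3/2})$ for every $y\in\F^n$, precisely the slack tolerated by the robustness hypothesis of \cref{lem:inf-Bogolyubov}, so the decomposition extracted from $X^*$ still produces four vectors lying in $X_\tau$ with the required probability. Once this reconciliation is in place, what remains --- independence of the four $\ALG^M$ calls, fixed-point amplification of the verifier, and the final complexity bookkeeping --- is routine.
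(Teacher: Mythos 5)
Your proposal follows essentially the same route as the paper's proof: threshold sets $X_\tau$ with a randomized threshold in $[\alpha/4,\alpha/2]$ (cf.\ Claims 6.2--6.3 and Corollary 6.4), a noisy indicator oracle $\widetilde{O}_{X^*}$ built via QSVT threshold projection on $\ALG_{\mathsf{verified}}$, Fourier learning of the Bogolyubov subspace, the sparse-shift decomposition $v=s+x_1+x_2+x_3+x_4$ from the quantum local correction lemma, per-call amplification into $\ALG_{\mathsf{boost}}$, and $O(1/\alpha^2)$ outer repetitions screened by verification. The only slight imprecisions are cosmetic: the paper's Claim 6.3 derives the ``with probability $>1/2$'' guarantee on $\abs{X^*\triangle X_\tau}$ from the expectation bound rather than resting on ``in expectation'' directly, and the final screening step uses the simple verifier $\mathsf{Q_{verify}}$ of Lemma 5.1 (which takes a candidate vector $b$ by oracle) rather than $\ALG_{\mathsf{verified}}$, but neither affects the substance of the argument.
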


Here and in the following, when we say that $\ALG'$ makes $k$ calls to $\ALG$, we mean that $\ALG'$ on inputs of length~$n$ makes $k$ calls to $\ALG$ on inputs of length~$n$.

We prove \cref{lem:main} in \cref{sec:reduction}, after we develop a toolkit of quantum algorithms in \cref{sec:quantumtools} and a quantum local correction lemma in \cref{sec:AC}, which will be necessary for our proof.

Equipped with the technical lemma above, in \cref{sec:main-all} we formally restate \cref{thm:circuit-informal} and show how it follows from \cref{lem:main}. Then, in \cref{sec:main-mvm}, we formally restate \cref{thm:query-informal} and show how to prove it using \cref{lem:main}.

\subsection{Quantum algorithms for all linear problems}
\label{sec:main-all}

In this section we deal with uniform quantum algorithms; that is, uniformly generated families of quantum circuits $\{C_n\}_{n \in \N}$ for each input length $n \in \N$. Following the literature on quantum algorithms for matrix problems
% (e.g., the HHL algorithm \cite{Harrow2009}, and follow-up works),
we allow $C_n$ to use a unitary gate $U_M$ that computes matrix entries of $M$ as defined in \cref{eq:quant-input-oracle-mat}. We obtain the following quantum worst-case to average-case reduction for linear problems.

\begin{theorem}
\label{thm:main-circuit}
    Let $\F = \F_p$ be a prime field, $n \in \N$, and $\alpha\coloneqq\alpha(n) \in (0,1]$. Let $\mathcal{M}:=\{M_n\in\F^{n\times n}\}_{n\in\N}$ be any linear problem, and $\ALG$ be an average-case quantum algorithm for $\mathcal{M}$ that takes as input a vector $v \in \F^{n}$ and satisfies
    \begin{equation*}
        \Pr_{v,\ALG}[\ALG(v) = Mv]  =
        \E_{v\in\F^n}\left[\norm{\Pi_{Mv}{\ALG}\ket{v}\ket{0}}^2\right] \geq \alpha\;.
    \end{equation*}

    Then, for every constant $\delta>0$, there exists a worst-case quantum algorithm $\ALG'$ satisfying
    \begin{equation*}
    \forall v \in \F^n \quad
        \Pr_{\ALG'}[\ALG'(v) = Mv]  =
        \norm{\Pi_{Mv}{\ALG'}\ket{v}\ket{0}}^2 \geq 1 - \delta\;.
    \end{equation*}
    $\ALG'$ makes $O(\alpha^{-7/2})$ uses of $\ALG$, $O(n^{3/2}\alpha^{-7/2})$ uses of $U_M$ and $U_M^{\dagger}$, $O(n^{3/2}\alpha^{-7/2}\log{n}\cdot \poly(\log\abs{\F}))$ additional one-qubit and two-qubit gates, and $O(\alpha^{-2} \cdot n\log n)$ ancillary qubits.
    % \begin{itemize}
    %         \item If $\abs{\F} \leq 2/\alpha$, then there exists a worst-case quantum algorithm $\ALG'$ with success rate $\succ{\ALG'}=1$, for \emph{every} input $v \in \F^n$; $\ALG'$ makes $O(n^{3/2}+ n\cdot \poly(\frac{1}{\alpha})))$ uses of a unitary implementation of $M\in \F^{n \times n}$, and $O(n^{3/2}\cdot \poly\log(\abs{\F}))$ additional one- and two-qubit gates.
    %     \item If $\abs{\F} \geq 2/\alpha$, then there exists a quantum algorithm $\ALG'$ that given query access to 
    %         $M\in \F^{n \times n}$, for \emph{every} input $v \in \F^n$, , uses $O(\sqrt{n}\cdot \poly\log(\abs{\F}))$ additional one- and two-qubit gates, and outputs $M\cdot v$ with probability at least $1-\delta$.
    % \end{itemize}
\end{theorem}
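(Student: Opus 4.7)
The plan is to reduce \cref{thm:main-circuit} to the technical lemma \cref{lem:main} by reinterpreting the uniform quantum algorithm $\ALG$ for the linear problem $\mathcal{M}$ as a quantum query algorithm whose oracle is the single fixed matrix $M = M_n$. Because the uniform circuit family $\{C_n\}$ realising $\ALG$ is allowed to invoke $U_M$ as a gate, we can literally reclassify those $U_M$ gates as oracle queries: this yields a query algorithm $\ALG^M$ with query complexity $Q(n)$ equal to the number of $U_M$ gates appearing in $C_n$, while the success probability of $\ALG^M$ on a uniformly random input $v \in \F^n$ coincides with that of $\ALG$, namely at least $\alpha$. This is precisely the hypothesis under which \cref{lem:main} applies, since $M$ is a fixed matrix on which the average-over-$v$ success probability is $\alpha$.

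Applying \cref{lem:main} then produces a worst-case quantum query algorithm $(\ALG')^M$ that on every input $v\in\F^n$ outputs $Mv$ with probability at least $1-\delta$, making $O(\alpha^{-7/2})$ calls to $\ALG^M$, $O((Q(n)+n^{3/2})\alpha^{-7/2})$ queries to $U_M$ and $U_M^{\dagger}$, $O(n^{3/2}\alpha^{-7/2}\log n\cdot \poly(\log |\F|))$ additional elementary gates, and $O(\alpha^{-2} \cdot n\log n)$ ancillae. I would convert this back into a uniform circuit $\ALG'$ in the obvious way: each call to $\ALG^M$ is implemented by plugging in the circuit $C_n$ (with its embedded $U_M$ gates), and each standalone query to $U_M$ is realised by a $U_M$ gate in $\ALG'$. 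Uniformity is inherited from the constructive and gate-efficient nature of the proof of \cref{lem:main}, since every subroutine it uses — fixed-point amplitude amplification, the quantum singular value threshold projection of \cref{sec:quantumtools}, the quantum Goldreich--Levin style Fourier learner, the quantum sampler, and the sparse matrix-vector corrector — is itself uniform and treats $\ALG$ and $U_M$ as black boxes.

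The only real subtlety is the \emph{complexity accounting}. The $U_M$-query bound in \cref{lem:main} counts all queries, including those contained inside the $O(\alpha^{-7/2})$ invocations of $\ALG^M$, whereas the statement of \cref{thm:main-circuit} lists the $O(\alpha^{-7/2})$ uses of $\ALG$ separately from the remaining $U_M$ gates. Splitting the $O((Q(n)+n^{3/2})\alpha^{-7/2})$ total queries as $O(Q(n)\alpha^{-7/2})$ gates consumed inside the $\ALG$-calls plus $O(n^{3/2}\alpha^{-7/2})$ standalone queries issued by the reduction itself matches the bound advertised for $\ALG'$. The additional elementary-gate count and ancilla count transfer verbatim, since each is already stated in a form that does not depend on how $\ALG$ and $U_M$ are physically instantiated. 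Because this is essentially a translation between the query and uniform-circuit models, the real technical content sits inside \cref{lem:main}, which I would therefore treat as the main obstacle; once that lemma is in hand, \cref{thm:main-circuit} follows by the reinterpretation above together with the uniformity of the reduction.
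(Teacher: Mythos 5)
Your proposal is correct and takes essentially the same route as the paper: the paper's own proof is a single short paragraph stating that \cref{thm:main-circuit} follows immediately from \cref{lem:main} by instantiating the query algorithm as the circuit family $\{C_n\}$ with $U_M$ realised as an explicit gate. Your extra paragraph on the complexity accounting — splitting the $O((Q(n)+n^{3/2})\alpha^{-7/2})$ total $U_M$-queries of \cref{lem:main} into the $O(Q(n)\alpha^{-7/2})$ absorbed inside the $\ALG$-calls and the $O(n^{3/2}\alpha^{-7/2})$ standalone ones — is the correct way to reconcile the two bounds and is more explicit than what the paper writes, but it is the same argument.
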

This result follows immediately from \cref{lem:main}. Observe that \cref{thm:main-circuit} instantiates a quantum query algorithm $\ALG$ as a family $\{C_n\}_{n \in \N}$ of quantum circuits, and considers them as having access to explicit circuits implementing $U_M$. Thus in particular, the guarantees of \cref{lem:main} about the number of queries to $U_M$ and the number of additional gates used by $\ALG'$ hold. There are no additional overheads in circuit size from implementing access to sampled vectors, which only cost $\widetilde{O}(n)$ gates.

In general, an algorithm that solves a linear problem may exploit the special structure of $M$, or in the circuit model the matrix entries may be hard-coded. However, note that there is always a trivial circuit of size $O(n^2)$ that solves a linear problem, both in the classical and quantum settings. Our focus in this work is on fine-grained complexity, and our results hold for the stronger case of non-trivial quantum circuits $\{C_n\}_{n \in \N}$ of \textit{sub-quadratic} size. 

\subsection{Matrix-vector multiplication in the query model}
\label{sec:main-mvm}

Next, we provide a quantum worst-case to average-case reduction for the Matrix-Vector Multiplication problem in the quantum query model. 

\begin{theorem}[Query complexity of Matrix-Vector Multiplication]
\label{thm:main-query}
Let $\F = \F_p$ be a prime field, $n \in \N$, and $\alpha\coloneqq\alpha(n) \in (0,1]$.
    Suppose that there exists a quantum query algorithm
    $\ALG$ that has oracle access to a matrix $M\in \F^{n \times n}$ and a vector~$v$, makes $Q(n)$ queries, and satisfies
    \begin{equation*}
        \Pr_{\substack{M,v,\\ \ALG}}[\ALG^{M,v} = Mv]  =
        \E_{\substack{M \in \F^{n\times n}\\v\in\F^n}}\left[\norm{\Pi_{Mv}{\ALG}^{M,v}\ket{0}}^2\right] \geq \alpha\;.
    \end{equation*}
    Then, for every constant $\delta>0$, there exists a worst-case  quantum algorithm $\ALG'$ that makes $O(\alpha^{-9/2})$  uses of $\ALG$, makes $O\left((Q(n) + n^{3/2})\cdot\alpha^{-9/2}\right)$ queries to $U_M$ and $U_M^{\dagger}$, and succeeds on all inputs with high probability:
    \begin{equation*}
    \forall M \in \F^{n\times n}, v \in \F^n \quad
        \Pr_{\ALG'}[(\ALG')^{M,v} = Mv]  =
        \norm{\Pi_{Mv}{(\ALG')}^{M,v}\ket{0}}^2 \geq 1 - \delta\;.
    \end{equation*}
\end{theorem}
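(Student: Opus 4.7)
The plan is to lift the technical lemma \cref{lem:main}---which handles the average case only over input vectors $v$ for a matrix $M$ on which $\ALG^M$ is already $\alpha$-good---to the full average case over both $M$ and $v$, via an outer matrix self-correction loop based on the standard random-shift trick. On worst-case input $(M,v)$, we sample a uniformly random $R\in\F^{n\times n}$ using internal coins and work with $M+R$, which is itself uniformly distributed over $\F^{n\times n}$ and therefore inherits the guarantee of the average-case hypothesis. A one-sided Markov argument pins down the density of good matrices: writing $p_{M'}\coloneqq\Pr_{v,\ALG}[\ALG^{M',v}=M'v]$, the hypothesis gives $\E_{M'}[p_{M'}]\ge\alpha$ with $p_{M'}\in[0,1]$, whence $\Pr_{M'}[p_{M'}\ge\alpha/2]\ge\alpha/2$. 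Thus with probability at least $\alpha/2$ over $R$, the matrix $M+R$ is ``good'' in the sense required by \cref{lem:main} with parameter $\alpha/2$.

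Concretely, each trial of the outer loop would (i) sample a fresh $R\in\F^{n\times n}$; (ii) simulate the oracle $U_{M+R}$ using one query to $U_M$ combined with the internally stored description of $R$ (no additional queries to $U_M$ and free in the query model); (iii) invoke the algorithm from \cref{lem:main} with oracle $U_{M+R}$, input $v$, parameter $\alpha/2$, and a small constant inner error $\delta'$, producing a candidate $\tilde w$ for $(M+R)v$; (iv) subtract the internally-computed $Rv$ to form $\tilde y\coloneqq\tilde w-Rv$ as a candidate for $Mv$; and (v) verify $\tilde y=Mv$ using a quantum verification subroutine in the spirit of \cref{lem:qVerification} (Grover search over rows $i$ for $(Mv)_i\neq \tilde y_i$, costing $O(n^{3/2})$ queries to $U_M$ after an $O(n)$ read-out of $v$ via $U_v$), measuring its flag to obtain a classical accept/reject. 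Conditional on $M+R$ being good, the trial accepts the correct $\tilde y$ with probability $\ge 1-O(\delta')$, so each trial succeeds unconditionally with probability $\Omega(\alpha)$. Repeating $O(\alpha^{-1}\log(1/\delta))=O(\alpha^{-1})$ times for constant $\delta$, and outputting the first accepted candidate, yields an overall success probability of at least $1-\delta$ on every worst-case input $(M,v)$.

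Collecting the costs, each trial uses $O(\alpha^{-7/2})$ invocations of $\ALG$ and $O((Q(n)+n^{3/2})\alpha^{-7/2})$ queries to $U_M, U_M^\dagger$ via \cref{lem:main}, plus $O(n^{3/2})$ verification queries; over $O(\alpha^{-1})$ outer iterations this totals $O(\alpha^{-9/2})$ uses of $\ALG$ and $O((Q(n)+n^{3/2})\alpha^{-9/2})$ queries to $U_M, U_M^\dagger$, matching the stated bounds. No new conceptual machinery beyond \cref{lem:main} is required: the random-shift step is a by-now-standard maneuver once the ``fixed good matrix'' guarantee is in hand. The main obstacle will be the routine parameter bookkeeping---ensuring that the constant inner error $\delta'$ of \cref{lem:main} and the verification soundness error are chosen small enough that a union bound across the $O(\alpha^{-1})$ independent trials certifies the first accepted candidate is indeed $Mv$ except with probability $\delta$, without inflating either the number of uses of $\ALG$ or the query count beyond the claimed $\alpha^{-9/2}$ scaling.
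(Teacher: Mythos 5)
Your proposal matches the paper's proof essentially verbatim: the paper likewise shifts by a uniformly random matrix $R$ (using $M'=M-R$ rather than your $M+R$, an immaterial difference), applies the same Markov argument to show $\Pr[M'\in X]\ge\alpha/2$ for $X=\{M'\colon p_{M'}\ge\alpha/2\}$, invokes \cref{lem:main} on $U_{M'}$, corrects by $Rv$, verifies via the $O(n^{3/2})$-query routine of \cref{lem:qVerify-simple}, and repeats $O(1/\alpha)$ times, yielding identical $O(\alpha^{-9/2})$ bounds. No substantive gap.
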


We will prove this worst-case to average-case reduction for the Matrix-Vector Multiplication problem in the quantum query model using \cref{lem:main} and an efficient quantum verification algorithm for matrix-vector products (see \cref{lem:inf-verification} from \cref{tech:qalgo}, or \cref{lem:qVerify-simple} for a formal statement).
\begin{proof}[Proof of \cref{thm:main-query}]
The algorithm $(\ALG')^{M,v}$ repeats the following procedure $O(1/\alpha)$ times. Sample a uniformly random matrix $R \in \F^{n \times n}$, define $M'=M-R$, and generate unitary oracle $U_{M'}$ for $M'$. Each query to $U_{M'}$ costs only one query to $U_M$. Now use \cref{lem:main} with $U_{M'}$ to try to compute
     $b'= M' \cdot v$ . Finally, compute $b_R = R \cdot v$ directly without any queries to~$M$, and set $b=b' + b_R$. Verify whether $b=Mv$ using \cref{lem:qVerify-simple} with $\eps=\delta/2$, and output $b$ if the verification test passed. 
     
The complexity of the algorithm is determined by $O(1/\alpha)$ applications of \cref{lem:main} and verifications from \cref{lem:qVerify-simple}, and $O(n)$ queries to read the coordinates of~$v$. In particular, the described algorithm $(\ALG')^{M,v}$ performs $O(\alpha^{-9/2})$ calls of $\ALG^{M,v}$, and $O\left((Q(n) + n^{3/2})\cdot\alpha^{-7/2}\right)$ queries to $U_M$ and $U_M^{\dagger}$.
% , and uses $O(n^{3/2}\log{n}\cdot \poly(\log\abs{\F})/\alpha)$ additional one-qubit and two-qubit gates, and $O(n\log(\abs{\F})/\alpha)$ ancillary qubits.

In order to analyze the correctness of the algorithm $(\ALG')^{M,v}$, we introduce the following notation. For a fixed matrix $M\in\F^{n\times n}$, let $p_M=\Pr_{v,\ALG}[\ALG^{M,v}=Mv]$ denote the probability of computing $\ALG^{M,v}=Mv$ correctly for the fixed value of~$M$ and a uniformly random vector $v \in \F^n$. Let us define the set $X\subseteq\F^{n\times n}$ of good matrices where the average-case algorithm $\ALG$ succeeds with probability at least $\alpha/2$:
\begin{align*}
            X\colon= \left\{M \in \F^{n\times n} \colon p_M \geq \frac{\alpha}{2}\right\} \; .
    \end{align*}
    
First we observe that $|X|\geq \alpha |\F|^{n^2}/2$. Indeed, by the assumption of the theorem we have that $\E_{M \in\F^{n \times n}}[p_M] \geq \alpha$. Then,
    \begin{eqnarray*}
        \alpha \leq \E_M[p_M]
        < 1 \cdot \Pr_M[p_M \geq \alpha/2] + \alpha/2 \cdot \Pr_M[p_M < \alpha/2] 
        \leq \Pr_M[p_M \geq \alpha/2] +
        (\alpha/2) \cdot 1 \;.
        %&<& \Pr_M[Z_M \geq \alpha/2] + \alpha/2.
    \end{eqnarray*}
    Thus, $|X|=|\F|^{n^2}\cdot\Pr_M[p_M \geq \alpha/2] \geq \alpha |\F|^{n^2}/2$.

Since in every iteration of the algorithm, the matrix $M'=M-R$ is a uniformly random matrix, we have that $\Pr[M'\in X] \geq \alpha/2$. In the case when $M'\in X$, \cref{lem:main} can correctly compute $b'=M'v$ with probability $2/3$. Thus, in every iteration, the described procedure computes $b=Mv$ with probability $\Omega(\alpha)$. By repeating this procedure $O(1/\alpha)$ times (each time verifying whether $b=Mv$ using \cref{lem:qVerify-simple} with $\eps=\delta/2$), we have that for every matrix~$M$ and every vector~$v$, we compute the product $Mv$ correctly with probability $1-\delta$.
\end{proof}
\section{Quantum toolkit for local correction}
\label{sec:quantumtools}
In this section, we provide a toolkit of quantum algorithms that will allow us to later obtain a quantum local correction lemma (in \cref{sec:AC}), which will underlie our quantum worst-case to average-case reductions. Specifically, in \cref{sec:quantumtools-verification}, we construct a quantum procedure for flagging correct matrix-vector products in superposition; we use this in \cref{sec:indicator} to obtain a unitary implementation of an approximation to the indicator function on the set $X$ of good vectors, and in \cref{sec:quantum-sampling} to provide a quantum procedure for efficiently sampling from $X$; finally, in \cref{sec:quantumtools-fourier}, we construct a quantum procedure for learning Bogolyubov subspaces from noisy quantum oracles.

\subsection{Flagging correct matrix-vector products in superposition}
\label{sec:quantumtools-verification}

As discussed in the technical overview in \cref{sec:overview}, a key bottleneck in the classical setting is the efficient verification of computing a matrix-vector product. In the setting of quantum query complexity, this is the following problem:\\

\noindent{$\quad$\small\textsc{\underline{Matrix-vector Product Verification ($\mathsf{MvPV}$)}}}\\
\noindent\textbf{Input:} Quantum oracles $U_M$, $U_v$, and $U_b$ for a matrix $M\in\F^{n\times n}$ and vectors $v,b\in\F^n$.\\
\noindent\textbf{Output:} $1$ if $Mv=b$ and $0$ otherwise.\\

Classical query algorithms would analogously have access to an oracle that returns matrix (vector) entries when queried with a row and column index pair.

The quantum query complexity of $\mathsf{MvPV}$ has been studied in detail, starting with the work of \cite{BuhrmanSpalek04} who showed that matrix-matrix products can be verified with $O(n^{5/3})$ quantum queries, a bound that is sublinear in the size of the input (which is $n^2$). \cite{Magniez11WalkSearch} later showed that the classical techniques of Freivalds can be adapted to the quantum setting to make this algorithm time-efficient. The special case of matrix multiplication over the Boolean semiring has close relations to path and triangle finding in graphs, and its quantum query complexity has also been studied in great depth \cite{Magniez07Triangle,Vassilevka10MatrixProduct,LeGall12BooleanMatrix,Childs12BooleanMatrix,Jeffery13BooleanMatrix}. \cite{KothariThesis}, gives a detailed review of the complexities and relationships between different variants of the matrix multiplication and $\mathsf{MvPV}$ problems over arbitrary semirings. 
% Since we are interested in fields, all upper bounds here apply but the lower bounds may not hold. The $\Omega(n)$ lower bound for $\mathsf{MvPV^A}$ does in fact continue to hold for (finite) fields because it is established by a reduction from Majority to $\mathsf{vSV}$. 

%%%%%%%%%%%%%%%%%%%%%%%%%%%%%%%%
\begin{table}[htb]
    \renewcommand{\arraystretch}{1.5}
    \centering
    \begin{adjustbox}{center}
    \begin{tabular}{|l|L|L|}
        %\toprule
        \hline
        \textbf{Problem} & \textbf{Inputs/Output} & \textbf{Quantum query complexity} \\ %\midrule
         \hline
         Matrix-vector Multiplication ($\mathsf{MvM}$)  & In: $M\in\F^{n\times n},v\in\F^n$ Out: $Mv$ & $\Theta(n^2)$\\
         \hline
         %\specialrule{0.25pt}{5pt}{5pt}
         Matrix-vector Product Verification ($\mathsf{MvPV}$) & In: $M\in\F^{n\times n},v,b\in\F^n$ Out: $Mv=b?$ & $\Theta(n^{3/2})$ \\
         % \specialrule{0.25pt}{5pt}{5pt}
         %  &  &  & \\
         %\bottomrule
         \hline
    \end{tabular}
    \end{adjustbox}
    \caption{Worst-case quantum query complexity of $\mathsf{MvM}$ and $\mathsf{MvPV}$ \cite{KothariThesis}. }
    \label{tab:MvPV}
\end{table}
%%%%%%%%%%%%%%%%%%%%%%%%%%%%%%%%
%

This extensive literature focuses on the more general problem of matrix-matrix multiplication and product verification over semirings, especially the Boolean case; on the other hand, our interest here lies in the special case of matrix-vector products over finite fields $\F_p$. In particular, all the results about matrix-vector product verification that we have come across in past work deal exclusively with the query complexity, leaving the algorithm achieving the upper bound, and its computational or gate complexity, implicit. Furthermore, techniques such as those used in \cite{BuhrmanSpalek04} do not extend well to  our setting --- (1) because they use the computation of a logical $\mathsf{AND}$ via Grover search, they do not directly lead to a unitary algorithm that can be queried in superposition, due to the traditional problems of overshooting associated with vanilla quantum search; and (2) because they are phrased in the usual manner of performing amplitude amplification followed by measurements to extract the output with constant success probability.

In this section, we construct an efficient quantum algorithm for addressing $\mathsf{MvPV}$, with query and gate complexities bounded by the optimal $O(n^{3/2})$, which will be conducive to querying in superposition and consequently to composition with our subsequent subroutine for learning Bogolyubov subspaces. As discussed in \cref{tech:qalgo}, we the main technical ingredients we use are fixed-point amplitude amplification and quantum singular value threshold projection.

\subsubsection{A simple case of $\mathsf{MvPV}$}
We first consider the standard setting where the input vectors $v$ and $b$ are both given by the usual exact quantum query oracles. We later use this simple variant to argue about the case where we only have a noisy version of $b$, accessed via a noisy quantum oracle.

\begin{lemma}[Quantum $\mathsf{MvPV}$]
\label{lem:qVerify-simple}
    Suppose we are given a quantum oracle $U_M$ for a matrix $M\in\F^{n\times n}$ 
    \begin{equation}
    % \label{eq:quant-input-oracle-mat}
        U_M\ket{j, k, z} = \ket{j, k, z \oplus M_{jk}}
    \end{equation}
    for all indices $j,k\in [n]$ and $z\in\F$, and quantum oracles $U_v, U_b$ for input vectors $v,b\in\F$ 
    \begin{align}
    % \label{eq:quant-input-oracle-vec1}
        U_v\ket{j, z} &= \ket{j,  z \oplus v_{j}}\\
        U_b\ket{j, z} &= \ket{j,  z \oplus b_{j}}
    \end{align}
    for all $j\in[n]$ and $z\in\F$. Then there is a gate-efficient quantum algorithm $\mathsf{Q_{verify}}$ that accepts with certainty if $Mv=b$, and rejects with probability $1-\epsilon$ if $Mv\neq b$. Furthermore, $\mathsf{Q_{verify}}$ uses $q=O\left(n^{3/2}\cdot \log\frac{1}{\epsilon}\right)$ queries to $U_M$ and $U_M^{\dagger}$, $O(n)$ queries to $U_v$, $U_b$ and their Hermitian conjugates, $O\left( q\log n\cdot\poly\log |\F|\right)$ additional one-qubit and two-qubit gates, and $O(n\log |\F|)$ ancillary qubits. 
\end{lemma}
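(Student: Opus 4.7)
The plan is to implement a quantum analogue of Freivalds' verification driven by Grover-style search. The decision $Mv = b$ is equivalent to asking whether the row predicate $P(i) = \1[(Mv)_i \neq b_i]$ is identically zero on $[n]$, turning $\mathsf{MvPV}$ into an ``OR of $n$ predicates'' problem for which Grover search is ideally suited. To keep $U_v$ and $U_b$ queries at the promised $O(n)$ bound, I would front-load the inputs: for each $j \in [n]$, call $U_v$ and $U_b$ on $\ket{j}\ket{0}$ and store the returned components in two ancillary registers of total size $O(n \log|\F|)$, after which all subsequent queries touch only $U_M$ together with these stored copies of $v$ and $b$.

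I would then construct a reversible row-check oracle $O_P$ as follows: on a row index $\ket{i}$ (possibly in superposition), iterate $j = 1,\dots,n$, querying $U_M$ to fetch $M_{ij}$ into a small ancilla, multiplying by the stored $v_j$, adding the product into an accumulator register over $\F$, and uncomputing $M_{ij}$ using $U_M^\dagger$. After subtracting the stored $b_i$, the circuit flips a flag qubit if the accumulator is nonzero and then uncomputes the accumulator to leave a clean bit/phase oracle for $P$. Each call to $O_P$ costs $O(n)$ queries to $U_M$ and $U_M^\dagger$ and $O(n \cdot \poly\log|\F|)$ elementary arithmetic gates, with no further queries to $U_v$ or $U_b$.

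Finally I would search for a marked row with Grover's algorithm driven by $O_P$. Since the number of marked rows is unknown, I would invoke either the Boyer--Brassard--H{\o}yer--Tapp randomised-iteration scheme or fixed-point amplitude amplification; both return a candidate $\hat i$ with constant probability using $O(\sqrt{n})$ calls to $O_P$. Any returned candidate is then verified deterministically by recomputing $(Mv)_{\hat i}$ with a further $O(n)$ queries to $U_M$ and comparing with the stored $b_{\hat i}$, and the algorithm rejects only if this classical check confirms a discrepancy. This yields completeness with certainty, since the marked set is empty in the YES case, while repeating the Grover-plus-verify loop $O(\log(1/\epsilon))$ times drives the chance of missing an existing bad row in the NO case below $\epsilon$. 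Summing contributions gives the claimed $O(n^{3/2}\log(1/\epsilon))$ queries to $U_M$, $O(n)$ queries to $U_v$ and $U_b$, $O(n^{3/2} \log(1/\epsilon) \cdot \log n \cdot \poly\log|\F|)$ additional one- and two-qubit gates, and $O(n\log|\F|)$ ancillary qubits.

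The main obstacle is the unknown number of marked rows: a vanilla Grover search with a fixed iteration count can overshoot and return an unmarked row even when bad rows exist, which would violate soundness. This is exactly the situation that BBHT/fixed-point amplitude amplification is designed to handle, and the cheap $O(n)$-query classical verification of any candidate further converts the Monte-Carlo Grover step into a Las Vegas procedure with one-sided error. Beyond this point, the proof is a routine combination of reversible arithmetic, amplitude amplification, and query accounting. Note also that, unlike the later $\ALG_{\mathsf{verified}}$ used as a coherent subroutine, here it suffices to measure the flag at the end, so we are free to treat the loaded $v$ and $b$ as essentially classical throughout the search.
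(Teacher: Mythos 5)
Your proposal is correct for the stated lemma, but it takes a genuinely different route from the paper. You use the Buhrman--\v{S}palek/BBHT pattern: run Grover (or one round of constant-error fixed-point search) to \emph{find} a candidate bad row, measure it, classically recompute $(Mv)_{\hat i}$ with $O(n)$ extra $U_M$ queries to confirm, and repeat $O(\log(1/\epsilon))$ times. The paper instead prepares the uniform superposition over indices, marks bad rows with a flag qubit via the reversible comparator $U_{Mv \stackrel{?}{=} b}$ (the same $O(n)$-query row oracle you build), and runs a \emph{single} fixed-point amplitude amplification with iteration count $O(\sqrt{n}\log(1/\epsilon))$ against the worst-case amplitude bound $\frac{1}{2\sqrt{n}}$, after which measuring the flag gives the one-sided guarantee directly: when $Mv=b$ the flag amplitude is exactly $0$, and when $Mv\neq b$ FPAA drives it above $1-\epsilon$ without any classical confirmation step. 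Both give the same $O(n^{3/2}\log(1/\epsilon))$ query count, $O(n)$ queries to $U_v, U_b$ (via the same front-loading trick), and the same gate and ancilla bounds, so your argument is sound for \cref{lem:qVerify-simple} in isolation. What the paper's route buys is that the whole verifier is one measurement-free unitary, which is essential for the subsequent \cref{lem:qVerification}, where this flagging must run coherently inside a superposition over output vectors $z$; the ``measure, classically verify, repeat'' loop in your proposal does not compose that way, and indeed the paper explicitly flags the BBHT-style search of \cite{BuhrmanSpalek04} as unsuitable for exactly this reason. So: correct here, but you would have to rebuild the coherent version from scratch for the next lemma, whereas the paper's construction is designed so that $\mathsf{Q_{verify}}$'s core already is the coherent subroutine.
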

\begin{proof}
    We denote multiplication and controlled addition over a finite field by a $\mathsf{CCX}$ gate defined by
    \begin{equation}
        \mathsf{CCX}\ket{s_1}\ket{s_2}\ket{z} = \ket{s_1}\ket{s_2}\ket{z+s_1\cdot s_2},
    \end{equation}
    where $s_1,s_2,z\in\F$ and $+$ and $\cdot$ are addition and multiplication in $\F$. Implementing such an operation requires only $O(\poly\log |\F|)$ elementary two-qubit gates \cite{Beauregard03FiniteFieldsArithmetic}.
    Throughout this paper we assume that all algorithms use quantum registers of dimension $p = |\F|$ and perform arithmetic over $\F$, since such arithmetic can be simulated using $O(\log |\F|)$ qubits with $O(\poly\log |\F|)$ overhead in gate complexity. 
    % e{recent critical comments on the overheads incurred in making classical circuits reversible and hence quantum in \href{https://arxiv.org/pdf/2104.00687.pdf}{... Vazirani ...} (in a cryptographic context)}
    
    \vspace{1mm}    
        \begin{tcolorbox}[title=Circuit~{$U_{Mv}$}: Quantum circuit for entrywise Matrix-vector product,
            standard jigsaw,
            opacityback=0] 
        \begin{multicols}{2}
            \small
            \centering
                % oracle calls not depicted
%
\begin{quantikz}
    \lstick{$\ket{i}$} & \qw
    & \qw & \qw & \qw & \qw \\
    \lstick{$\ket{M_{i1}}$} & \ctrl{5}
    & \qw & \qw & \qw & \qw \\
    \lstick{$\ket{M_{i2}}$} & \qw & \ctrl{5}
    & \qw & \qw & \qw \\
    \vdots &&& \vdots && \vdots \\ 
    \lstick{$\ket{M_{in}}$} & \qw & \qw & \qw & \ctrl{5} & \qw \\
    \lstick{$\ket{v_1}$} & \ctrl{4} & \qw & \qw & \qw & \qw \\
    \lstick{$\ket{v_2}$} & \qw & \ctrl{3} & \qw & \qw & \qw \\
    \vdots &&& \vdots && \vdots \\ 
    \lstick{$\ket{v_n}$} & \qw & \qw & \qw & \ctrl{1} & \qw \\
    \lstick{$\ket{0}$} & \gate{X} & \gate{X} & \ldots & \gate{X} & \qw & \qw\rstick{$\ket{(Mv)_i := \sum_{j=1}^n M_{ij}v_j}$}
\end{quantikz}

%  with oracle calls depicted
%
% \begin{quantikz}
%     \lstick{$\ket{i,1}$} & \gate[2]{U_A} & \qw & \qw & \qw & \qw & \qw & \gate[2]{U_A^{\dagger}}&\qw\\
%     \lstick{$\ket{0}$} && \ctrl{7} & \qw & \qw & \qw & \qw && \qw \\ 
%     %
%     \lstick{$\ket{i,2}$} & \gate[2]{U_A} & \qw & \qw & \qw & \qw & \qw & \gate[2]{U_A^{\dagger}}&\qw\\
%     \lstick{$\ket{0}$} && \qw & \ctrl{7} & \qw & \qw & \qw &&\qw \\
%     %
%     \vdots &&&& \vdots && \vdots \\ 
%     %
%     \lstick{$\ket{i,n}$} & \gate[2]{U_A} & \qw & \qw & \qw & \qw & \qw & \gate[2]{U_A^{\dagger}}&\qw\\
%     \lstick{$\ket{0}$} && \qw & \qw & \qw & \ctrl{7} & \qw &&\qw \\
%     \lstick{$\ket{1}$} & \gate[2]{U_v} & \qw & \qw & \qw & \qw & \qw & \gate[2]{U_v^{\dagger}}&\qw\\
%     \lstick{$\ket{0}$} && \ctrl{6} & \qw & \qw & \qw & \qw &&\qw \\
%     \lstick{$\ket{2}$} & \gate[2]{U_v} & \qw & \qw & \qw & \qw & \qw & \gate[2]{U_v^{\dagger}} &\qw\\
%     \lstick{$\ket{0}$} && \qw & \ctrl{4} & \qw & \qw & \qw &&\qw \\
%     %
%     \vdots &&&& \vdots && \vdots \\ 
%     %
%     \lstick{$\ket{n}$} & \gate[2]{U_v} & \qw & \qw & \qw & \qw & \qw & \gate[2]{U_v^{\dagger}}&\qw\\
%     \lstick{$\ket{0}$} && \qw & \qw & \qw & \ctrl{1} & \qw && \qw\\
%     \lstick{$\ket{0}$} & \qw & \gate{X_{\F}} & \gate{X_{\F}} & \ldots & \gate{X_{\F}} & \qw & \qw \rstick{$\ket{(Av)_i := \sum_{j=1}^n A_{ij}v_j}$}
% \end{quantikz}
            % \vspace{1mm}
            \columnbreak
            \paragraph{$U_{Mv}$ ---} Quantum circuit that uses $n$ oracle calls each to $U_M$, $U_v$, $U_M^{\dagger}$, $U_v^{\dagger}$, and $n\cdot O(\poly\log |\F|)$ elementary gates, to implement oracle access to the entries of the Matrix-vector product $Mv$. The index $i$ is given in an additional register, using which the matrix entries $M_{ij}$ and vector entries $V_j$ are loaded by oracle calls to $U_M$ and $U_v$ (suppressed here for readability); accounting for all ancillary registers, the width of the circuit is $O(n\log |\F|)$ qubits, and its depth is $O(n\,\poly\log |\F|)$. %width (3n+2n+1)\cdot O(\log p)=
        \end{multicols}
        \end{tcolorbox}
    \vspace{2mm}    
The unitaries $U_{Mv}$, $U_b$ and their Hermitian conjugates can be used once each, along with quantum arithmetic operations over $\F$ using circuits of size $O(\poly\log |\F|)$, to obtain a similar oracle $U_{Mv-b}$ for the vector $Mv-b$. The registers used for computing $(Mv)_i$ and $b_i$ can both be perfectly uncomputed and returned to their initial value in this case.
Using quantum adder circuits \cite{Gidney2018halvingcostof} it is possible to construct a comparator circuit with $O(1)$ ancillas and $O(\log |\F|)$ gates 
% \footnote{see also \href{https://quantumcomputing.stackexchange.com/questions/6803/quantum-comparator-with-one-fixed-input}{Quantum comparator with one fixed input}} 
that checks whether an entry $(Mv-b)_i$ is zero, and sets a flag qubit to $1$ if not. This becomes an oracle to the component-wise indicator function $\mathsf{1}_{(Mv-b)}:[n]\times\F\to\{0,1\}$ mapping an index $i\in [n]$ to $0$ if $(Mv)_i=b_i$, and $1$ otherwise. Denote this oracle by $U_{Mv\stackrel{?}{=}b}$.

\vspace{1mm}    
        \begin{tcolorbox}[title=Circuit~{$U_{Mv-b}$}: Quantum circuit for $Mv-b$,
            standard jigsaw,
            opacityback=0] 
            \small
            \centering
                \begin{quantikz}[transparent]
    \lstick{$\C^{n}\ni\ket{i}^{\mathrm{idx}}$} & \gate[2]{U_{Mv}} & \gate[3,label style={yshift=0.6cm}]{U_{b}} & \qw & \gate[2]{\left(U^{\dagger}\right)_{Mv}} & \gate[3,label style={yshift=0.6cm}]{\left(U^{\dagger}\right)_{b}} & \qw\rstick{$\ket{i}^{\mathrm{idx}}$} \\
    \lstick{$\C^{p}\ni\ket{0}^{\mathrm{Mv}}$} & & \linethrough & \gate[3]{\text{arithmetic}} & & \linethrough &\qw\rstick{$\ket{0}^{\mathrm{Mv}}$}  \\
    \lstick{$\C^{p}\ni\ket{0}^{\mathrm{b}}$} & \qw & & & \qw & & \qw\rstick{$\ket{0}^{\mathrm{b}}$} \\
    \lstick{$\C^{p}\ni\ket{0}^{\mathrm{Mv-b}}$} & \qw & \qw & & \qw & \qw & \qw \rstick{$\ket{(Mv-b)_i}$}\\
\end{quantikz}

% \gategroup[wires=3,steps=2,style={dashed,rounded corners,fill=blue!20, inner xsep=2pt},background]{\textsc{uncompute}}
            % \vspace{1mm}
            \paragraph{$U_{Mv-b}$ ---} Quantum circuit to implement oracle access to the entries of $Mv-b$ (where we suppress the workspace registers required by $U_{Mv}$ etc.). The width of the circuit is $O(n\log |\F|)$ qubits, and the arithmetics step can be implemented with circuits of size $O(\poly\log |\F|)$.
        \end{tcolorbox}
    \vspace{2mm}    

We can then perform a quantum search for nonzero entries in $Mv-b$ using $U_{Mv\stackrel{?}{=}b}$. We start by preparing the uniform superposition over indices, query $U_{Mv\stackrel{?}{=}b}$, and treat nonzero entries as marked. 

Define the states $\ket{\psi_0}$, $\ket{\psi_1(v,b)}$, and $\ket{\psi_0(v,b)}$ as the uniform superposition over all indices, and uniform superpositions over indices at which $Mv-b$ has nonzero and zero entries respectively
\begin{align}
\label{eq:psi0-psi1-defn}
    \ket{\psi_0}&=\frac{1}{\sqrt{n}}\sum_{i\in[n]}\ket{i} \nonumber\\
    \ket{\psi_1(v,b)} &= \frac{1}{\sqrt{m_{vb}}}\sum_{\substack{i\in[n]\\(Mv)_i\neq b_i}}\ket{i} \nonumber\\
    \ket{\psi_0(v,b)} &= \frac{1}{\sqrt{n-m_{vb}}}\sum_{\substack{i\in[n]\\(Mv)_i=b_i}}\ket{i},
\end{align}
and $m_{vb}=|\{i\in[n]~:~(Mv)_i\neq b_i\}|$. Let $k:=\lceil\log n\rceil$. Without loss of generality we can assume $n$ is a power of two since we can pad vectors with zeros if not, while at most doubling the complexity. Consider the operator that creates the uniform superposition over indices and queries $Mv\stackrel{?}{=}b$, i.e.\
\[
    U = \left(H^{\otimes k}\otimes\mathbbm{1}\right)U_{Mv\stackrel{?}{=}b}\;.
\]
We omit the workspace registers used by $U_{Mv\stackrel{?}{=}b}$ because in this case the workspace can be uncomputed and returned to the initial all-zeros state. 

The action of $U$ on the $\ket{0^{k+1}}$ state is
\begin{align}
\label{eq:fpaa-unitary}
    U\ket{0^{k+1}} = \begin{cases}
        \qquad\ket{\psi_0}\ket{0} & Mv=b\\
        \quad\\
        \sqrt{\frac{n-m_{vb}^{}}{n}}\ket{\psi_0(v,b)}\ket0 + \sqrt{\frac{m_{vb}^{}}{n}}\ket{\psi_1(v,b)}\ket1 & Mv\neq b. 
    \end{cases}
\end{align}

Since we do not know $m_{vb}$ beforehand, and since we will need a unitary subroutine that is run over a superposition of input vectors in $\F^n$ later on, we use fixed point amplitude amplification \cite{Grover2005FixedPointSearch,Yoder14fpoaa,Guerreschi19fpoaa,Gilyen2019}. This allows the number of iterations of amplitude amplification to be chosen uniformly for all vectors $v$ without worrying about the problem of under- or over-shooting faced in vanilla amplitude amplification. For completeness, we give a brief overview and technical statement of fixed-point amplitude amplification in \cref{app:qsvt}.

% \begin{theorem*}[Fixed-point amplitude amplification]
% \label{thm:fpaa}
%     Let $U$ be a unitary acting on $k+1$ qubits and let $\Pi$ be an orthogonal projector. If for some input state $\ket{\phi_{\mathrm{in}}}$ and $p>\delta>0$ we have $\Pi U\ket{\phi_{\mathrm{in}}}=p\ket{\phi_{\mathrm{tar}}}$ , then for $\epsilon>0$ there is a unitary $U'$ such that $\norm{\ket{\phi_{\mathrm{tar}}}-U'\ket{\phi_{\mathrm{in}}}}\leq \epsilon$, constructed by making $q=O(\frac{1}{\delta}\cdot\log\frac{1}{\epsilon})$ uses of $U$ and $U^{\dagger}$, a single ancillary qubit, and $O(q)$ additional one- and two-qubit gates.
% \end{theorem*}

The unitary $U$ of \cref{eq:fpaa-unitary} satisfies the conditions required by \cref{thm:fpaa} with $\Pi=\mathbbm{1}_k\otimes\ketbra{1}{1}$, preparing a state that has a component flagged with $\ket1$ when $Mv\neq b$. In the worst case for $Mv\neq b$, there exists exactly one coordinate $i\in[n]$ at which $(Mv)_i\neq b_i$, so that $m_{vb}=1$. We hence use fixed point amplification with the worst-case lower bound of $p>\frac{1}{2\sqrt{n}}$ to obtain a unitary $U'$ with the action
\begin{align}
    U'\ket{0^{k+1}} = \begin{cases}
        \qquad\ket{\psi_0}\ket{0} & Mv=b\\
        \quad\\
        \gamma^{vb}_{\mathrm{fail}}\ket{\psi_0(v,b)}\ket0 + \gamma^{vb}_{\mathrm{succ}}\ket{\psi_1(v,b)}\ket1 & Mv\neq b,
    \end{cases}
\end{align}
where $\abs{\gamma^{vb}_{\mathrm{succ}}}^2\geq 1-\epsilon$. This unitary $U'$ makes $q=O\left(\sqrt{n}\cdot\log\frac{1}{\epsilon}\right)$ queries to $U$ and $U^{\dagger}$, uses $O(q\log n)$ additional elementary gates, and a single ancilla. The only queries in $U$ are those made by $U_{Mv\stackrel{?}{=}b}$ to $U_M$, $U_v$, $U_b$ and their conjugates; since $U_{Mv\stackrel{?}{=}b}$ uses $n$ queries to each of these oracles, the net query complexity of $U'$ is $q'=O(nq)=O(n^{3/2})$. Finally since $M$ is of size $n^2$ and $q'=O(n^{3/2})$ we can always first query $v$ and $b$ into an ancillary register, since both are vectors of length $n$.
% Furthermore, because we perform reflections only about the last register, we will use a form of fixed point oblivious amplitude amplification \cite{Guerreschi19fpoaa}.

Effectively, for any desired success probability $\epsilon\in(0,1)$ we are able to obtain a unitary $\mathsf{Q_{verify}}:=U'$ that computes $\mathsf{1}_{Mv=b}$ with one-sided bounded error, measuring the output register of which yields the claimed guarantees.
\end{proof}

\subsubsection{Flagging the correct matrix-vector product in superposition}
In order to establish our reduction, we need to go beyond verifying whether a certain vector, obtained by running $\ALG$ on an input $v\in\F^n$ and measuring its output register, is correct. In particular, we would like to compute the indicator function $\mathsf{1}_{X}$ which marks the vectors on which $\ALG$ succeeds with high probability in superposition over all $v\in\F^n$, in order to learn its Fourier characters of high weight. This is complicated twofold --- by the noise or part of the state associated with failure in the output of $\ALG$, and the entanglement with workspace registers. Nevertheless, we can compute an approximate, noisy version of the indicator function on good input vectors. As a first step, we construct an algorithm using ideas from \cref{lem:qVerify-simple} to ensure that whenever $\ALG$ outputs the correct answer, it also outputs a flag indicating success.

\begin{lemma}[Noisy quantum $\mathsf{MvPV}$]
\label{lem:qVerification}
    Suppose we are given a quantum oracle $U_M$ for a matrix $M\in\F^{n\times n}$ 
    \begin{equation}
    % \label{eq:quant-input-oracle-mat}
        U_M\ket{j, k, z} = \ket{j, k, z \oplus M_{jk}}
    \end{equation}
    for all indices $j,k\in [n]$ and $z\in\F$, 
    % a quantum oracle $U_v$ for an input vector $v\in\F$ 
    % \begin{equation}
    % % \label{eq:quant-input-oracle-vec1}
    %     U_v\ket{j, z} = \ket{j,  z \oplus v_{j}}
    % \end{equation}
    % $\forall~j\in[n]$ and $z\in\F$, 
    and a noisy quantum algorithm $\ALG$ as described in \cref{eq:BEQLA}, i.e.\
    \[
        \ALG\ket{v}\ket{0}=\beta_{\mathrm{succ}}^v\ket{v}\ket{Mv}\ket{\mathrm{w}_0(v)} + \beta_{\mathrm{fail}}^v\ket{v}\ket{\Psi(v)}.
    \]
    % and a \text{noisy} quantum oracle $U_b$ for a second input vector $b\in\F^n$
    % \begin{equation}
    % \label{eq:noisy-b-oracle-MvPV}
    %     U_b\ket{j, 0, 0} = \beta_{\mathrm{succ}}^b\ket{j, b_{j}, \mathrm{w}_0(j)} + \beta_{\mathrm{fail}}^b\ket{j, \psi(j), \mathrm{w}_1(j)},
    % \end{equation}
    % where $\ip{b_j~\!|\!~\psi(j)}=0$, $\abs{\beta_{\mathrm{succ}}^b}^2+\abs{\beta_{\mathrm{fail}}^b}^2=1$, and $\mathrm{w}_0(j)$ and $\mathrm{w}_1(j)$ are arbitrary states of the workspace register. 
    Then there exists a gate-efficient quantum algorithm $\ALG_{\mathsf{verified}}$ that succeeds and outputs $Mv$ with probability $\abs{\beta_{\mathrm{succ}}^v}^2$ along with a flag indicating success, and similarly outputs a flag indicating failure whenever it outputs a vector $z\neq Mv$, with probability at least $(1-\epsilon)\abs{\beta_{\mathrm{fail}}^v}^2$. Furthermore, $\ALG_{\mathsf{verified}}$ uses $q=O\left(n^{3/2}\cdot\log\frac{1}{\epsilon}\right)$ queries to $U_M$ and $U_M^{\dagger}$, $O(1)$ queries to $\ALG$, $O\left(q\log n\cdot\poly\log |\F|\right)$ additional one-qubit and two-qubit gates, and $O(n\log |\F|)$ ancillary qubits. 
\end{lemma}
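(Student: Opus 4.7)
The plan is to turn the exact matrix--vector product verifier of \cref{lem:qVerify-simple} into a coherent subroutine and run it \emph{in superposition} on the output register of $\ALG$, exploiting the fact that the candidate vector $b$ produced by $\ALG$ already lives in a quantum register (rather than behind an oracle $U_b$). Intuitively, $\ALG$ produces a superposition whose ``correct'' branch carries the state $\ket{Mv}$ in the output register with amplitude $\beta_{\mathrm{succ}}^v$ and whose ``incorrect'' branches carry various $\ket{z}$ with $z\neq Mv$; running the verifier coherently over this superposition sets a fresh flag qubit to $\ket{0}$ deterministically on the correct branch and to $\ket{1}$ with probability at least $1-\eps$ on each incorrect branch, after which we apply $X$ to the flag so that ``$\ket{1}$'' denotes success and ``$\ket{0}$'' denotes failure.

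Concretely, I would first apply $\ALG$ to $\ket{v}\ket{0}^{\mathrm{out}}\ket{0}^{\mathrm{wk}}$, producing the superposition of \cref{eq:BEQLA}. I would then invoke the unitary $U'$ constructed inside the proof of \cref{lem:qVerify-simple}, with its ``$v$-register'' pointed at the input register and its ``$b$-register'' pointed at the output register of $\ALG$. Because both vectors now sit in ordinary quantum registers, the oracle queries to $U_v$ and $U_b$ in that proof are replaced by controlled reads from these registers (costing $O(\poly\log|\F|)$ gates per entry), while the queries to $U_M$ and the surrounding fixed-point amplification structure are unchanged. Fixed-point amplification, instantiated with the worst-case amplitude lower bound $\sqrt{m_{vb}/n}\ge 1/\sqrt{n}$ that holds whenever $b\neq Mv$, writes the one-sided bounded-error verdict onto a dedicated flag qubit; the final $X$ converts ``verifier rejects'' into failure flag $\ket{0}$ and ``verifier accepts'' into success flag $\ket{1}$.

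The resulting state has the schematic form
\[
    \beta_{\mathrm{succ}}^v\ket{v}\ket{Mv}\ket{w_0(v)}\ket{\phi_0^v}\ket{1}^{\mathrm{flag}} \;+\; \beta_{\mathrm{fail}}^v\!\!\sum_{z\neq Mv}\!\!\gamma_z^v\ket{v}\ket{z}\ket{w(v,z)}\bigl(\gamma_{\mathrm{r}}^{vz}\ket{\phi_{\mathrm{r}}^{vz}}\ket{0}^{\mathrm{flag}}+\gamma_{\mathrm{a}}^{vz}\ket{\phi_{\mathrm{a}}^{vz}}\ket{1}^{\mathrm{flag}}\bigr),
\]
with $|\gamma_{\mathrm{r}}^{vz}|^2\geq 1-\eps$ for every $z\neq Mv$, where the $\ket{\phi_{\cdot}^{\cdot}}$ denote the verifier's scratch workspace. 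Tracing out the workspace, the probability of jointly observing output $Mv$ with flag $\ket{1}$ is exactly $|\beta_{\mathrm{succ}}^v|^2$, and the probability of observing some $z\neq Mv$ with flag $\ket{0}$ is at least $(1-\eps)|\beta_{\mathrm{fail}}^v|^2$, using $\sum_z|\gamma_z^v|^2=1$. The stated complexity bounds follow directly from those of \cref{lem:qVerify-simple}: one call to $\ALG$, $O(n^{3/2}\log(1/\eps))$ queries to $U_M$ and $U_M^{\dagger}$, $O(n^{3/2}\log(1/\eps)\cdot\log n\cdot\poly\log|\F|)$ additional one- and two-qubit gates (with the overhead of coherently reading from the $v$- and $b$-registers absorbed), and $O(n\log|\F|)$ ancillary qubits for the verifier's workspace.

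The main obstacle I anticipate is the coherence argument: \cref{lem:qVerify-simple} was phrased as computing a classical bit for a fixed pair of input vectors, whereas here the candidate $b$ is in superposition and entangled with the workspace $\ket{w(v,z)}$ left by $\ALG$. The key point is that the unitary $U'$ constructed in that proof is an explicit quantum circuit whose only interactions with $v$ and $b$ are as control registers for reversible arithmetic; it therefore acts blockwise on the computational basis of $\ket{v}\ket{b}$, so the superposition structure (and the $\ALG$-workspace it is entangled with) is preserved, and the per-branch one-sided error guarantees lift directly to the joint probabilities displayed above. Care must also be taken that the verifier's ancillary workspace is consistently initialised to $\ket{0^{k+1}}$ in every branch so that the guarantees of fixed-point amplification apply uniformly; this is automatic because we allocate fresh ancillas for the invocation of $\mathsf{Q_{verify}}$.
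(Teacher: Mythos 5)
Your proposal matches the paper's proof: both apply $\ALG$ to produce the superposed output state, then run the $\mathsf{MvPV}$ verifier of \cref{lem:qVerify-simple} coherently over the output register (replacing the vector oracles $U_v,U_b$ with direct controlled reads from the input/output registers), use fixed-point amplitude amplification with the worst-case amplitude bound $1/\sqrt{n}$, and observe that the verifier commutes with the $\ALG$-registers so the per-branch guarantees lift to the superposition. The only differences are cosmetic (you flip the flag with an $X$ so $\ket{1}$ marks success, whereas the paper keeps $\ket{0}$ as success; and your ``$O(\poly\log|\F|)$ gates per entry'' slightly understates the $O(n\log n)$-size multiplexer needed to read $b_i$ with $i$ in superposition, though the overall gate budget of $O(q\log n\cdot\poly\log|\F|)$ with $q=O(n^{3/2}\log(1/\epsilon))$ still holds).
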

\begin{remark}
    The input vectors $v$ and $b$ are taken to be given as states $\ket{v}$ and the right hand side of \cref{eq:BEQLA}---created using a single query to $\ALG$--- respectively. Using a hard-coded circuit of size $O(n\log n)$ we can implement entrywise oracles $U_v$ and $U_b$ that copy out the $i^{th}$ entries of these vectors into a target register controlled on an index register.
\end{remark}

\begin{proof}
If $\ALG$ had the property that either $\beta_{\mathrm{fail}}^v=0$ or $\beta_{\mathrm{succ}}^v=0$ always, then we could have proceeded as in \cref{lem:qVerify-simple}. Nevertheless, if we first prepare the state $\ALG\ket{v}\ket0$, prepare a superposition over indices $i\in[n]$ in an ancillary register, and compute the indicator function $\mathsf{1}_{(Mv-b)}$ using fixed circuits of size $O(n\log n)$ gates to copy (controlled on an index register) the $i^{th}$ entry from the output registers of $\ALG$, we now have a unitary $U$ that prepares the state
\begin{align}
% goes with fig:Av-w
\label{eq:computing-Av-w}
    &U\biggl(\ket{v}\otimes\left(\beta_{\mathrm{succ}}^v\ket{Mv}\ket{\mathrm{w}_0(v)} + \beta_{\mathrm{fail}}^v\left(\sum_{\substack{z\in\F^n\\z\neq Mv}}\gamma_z^v\ket{z}\ket{\mathrm{w}(v,z)}\right)\right)\otimes\ket{0^{k+1}}\biggr)\nonumber\\
        &=\ket{v}\otimes\left(\vphantom{\sum_{\substack{z\in\F^n\\z\neq Mv}}}\beta_{\mathrm{succ}}^v\ket{Mv}\ket{\mathrm{w}_0(v)}\ket{\psi_0}\ket{0} +\right.\nonumber\\
        &\qquad\qquad\qquad \left. \beta_{\mathrm{fail}}^v\left(\sum_{\substack{z\in\F^n\\z\neq Mv}}\gamma_z^v\ket{z}\ket{\mathrm{w}(v,z)}\biggl[\sqrt{\frac{n-m_{vz}}{n}}\ket{\psi_0(v,z)}\ket0+\sqrt{\frac{m_{vz}}{n}}\ket{\psi_1(v,z)}\ket1\biggr]\vphantom{\beta_{\mathrm{succ}}^v\ket{Mv}}\right)\right),
\end{align}
where the states $\ket{\psi_0},\ket{\psi_0(v,z)},\ket{\psi_1(v,z)}$ are defined analogously to \cref{eq:psi0-psi1-defn}, $m_{vz}$ is the number of indices at which a vector $z$ disagrees with the correct  answer $Mv$, and to avoid clutter we have not explicitly written the registers used for querying the entries $M_{ij}$, since they can be uncomputed as before.

As in \cref{lem:qVerify-simple}, we can apply fixed-point amplitude amplification to the unitary $U$, with the goal of amplifying the part of the superposition flagged by $1$ in the last qubit. Incorrect answers $z\neq Mv$ may in general be wrong only at a single co-ordinate, so that in the worst case there may be a single $\gamma_z^v=1$ with the corresponding $m_{vz}=1$. Thus we can use the worst-case lower bound of $\frac{1}{\sqrt{n}}$ on the amplitude of the target state flagged by $1$. 
Since $U$ does not affect the state of the input, output, or workspace registers of $\ALG$, the amplified operator $\ALG_{\mathsf{verified}}:=U'$ will also preserve the same superposition that is produced by $\ALG$, with the component of the state flagged by $1$ in the final register amplified: 
\begin{align}
% goes with fig:Av-w
\label{eq:fpoaa-in-suppsn}
    &\ALG_{\mathsf{verified}}\biggl(\ket{v}\otimes\left(\beta_{\mathrm{succ}}^v\ket{Mv}\ket{\mathrm{w}_0(v)} + \beta_{\mathrm{fail}}^v\left(\sum_{\substack{z\in\F^n\\z\neq Mv}}\gamma_z^v\ket{z}\ket{\mathrm{w}(v,z)}\right)\right)\otimes\ket{0^{k+1}}\biggr)\nonumber\\
        &=\ket{v}\otimes\left(\vphantom{\sum_{\substack{z\in\F^n\\z\neq Mv}}}\beta_{\mathrm{succ}}^v\ket{Mv}\ket{\mathrm{w}_0(v)}\ket{\psi_0}\ket{0} +\right.\nonumber\\
        &\qquad\qquad\qquad \left. \beta_{\mathrm{fail}}^v\left(\sum_{\substack{z\in\F^n\\z\neq Mv}}\gamma_z^v\ket{z}\ket{\mathrm{w}(v,z)}\biggl[\gamma^{vz}_{\mathrm{fail}}\ket{\psi_0(v,z)}\ket0+\gamma^{vz}_{\mathrm{succ}}\ket{\psi_1(v,z)}\ket1\biggr]\vphantom{\beta_{\mathrm{succ}}^v\ket{Mv}}\right)\right),
\end{align}
where $\abs{\gamma^{vz}_{\mathrm{succ}}}^2>1-\epsilon$, and $\ALG_{\mathsf{verified}}$ uses $q=O\left(n^{3/2}\cdot\log\frac{1}{\epsilon}\right)$ queries to $U_M$ and $U_M^{\dagger}$, $O(1)$ queries to $\ALG$, and $O(q\log n\cdot\poly\log |\F|)$ additional elementary gates, and a single ancillary qubit.

Measuring the register containing the output of $\ALG$, we see that $\ALG_{\mathsf{verified}}$ outputs $Mv$ with probability at least $\abs{\beta^v_{\mathrm{succ}}}^2$ just as $\ALG$ does, but now the last register contains a flag qubit set to zero to indicate success. When $\ALG_{\mathsf{verified}}$ outputs an incorrect vector $z\neq Mv$, with probability at least $(1-\epsilon)\abs{\beta^v_{\mathrm{succ}}}^2$ the flag qubit is set to $1$ to indicate failure, giving the guarantees claimed in the lemma.
\end{proof}

\subsection{Noisy quantum oracles approximating the indicator function \texorpdfstring{$\mathsf{1}_X$}{1X}} 
\label{sec:indicator}

To obtain our next two quantum procedures: an efficient sampler for the set $X$ of good inputs to $\ALG$ and a learner for the Bogolyubov subspace, we first need to obtain a unitary implementation of an approximate version of the indicator $1_X$, which is a Boolean valued function defined on $\F^n$. We start by observing that the subroutine $\ALG_{\mathsf{verified}}$ that we constructed in the previous section almost has the basic property that a unitary implementing $\mathsf{1}_X$ should have: it attaches flags zero and one to vectors in $X$ and its complement respectively. However, it is \textit{noisy} in two ways: it errs with high probability on vectors outside $X$, and only succeeds with low probability on vectors in $X$. What we would like is (a noisy version of) the oracle
\begin{equation*}
    U_X\ket{v}\ket{0} = \ket v \ket{\mathsf{1}_X(v)}.
\end{equation*}
To obtain such a unitary, we use the machinery of quantum singular value threshold projection on top of $\ALG_{\mathsf{verified}}$. We now go into the details of our construction below, and give a brief technical statement of the quantum singular value threshold projection technique in \cref{app:qsvt}. 

For an average-case algorithm $\ALG$ with average success probability $\alpha$ as defined in \cref{eq:avg-defn}, recall that we are interested in the associated set of ``good'' input vectors defined by 
\[
    X = \{v\in\F^n : \abs{\beta_{\mathrm{succ}}}^2\geq \frac{\alpha}{2}\}\;,
\]
which we know has density at least $\alpha/2$ in $\F^n$, by the averaging argument in \cref{clm:density-of-X}. The indicator function of this set takes the value $\mathsf{1}_X(v)=1$ when $v\in X$ and $\mathsf{1}_X(v)=0$ otherwise. Denote by $U$ the algorithm $\ALG_{\mathsf{verified}}$ constructed in \cref{lem:qVerification}. Bundling all the workspace registers together for brevity, we note that it has the following property: defining the projectors $\Pi=\mathbbm{1}^{\mathrm{v}}\otimes\ketbra{0}{0}^{\mathrm{work,flag}}$ and $\widetilde{\Pi}=\mathbbm{1}\otimes\ketbra{0}{0}^{\mathrm{flag}}$, where the identity term acts on all registers except the flag qubit, we have that
\[
    \widetilde{\Pi}U\Pi = \sum_{v\in\F^n} |\beta'^v_{\mathrm{succ}}|\ketbra{w_v}{v,0,0}\;.
\]
We interpret the right hand side above as the singular value decomposition of a matrix with right singular vectors $\ket{v,0,0}$, and left singular vectors $\ket{w_v}$ given by
\begin{align}
   \ket{w_v} = \frac{1}{|\beta'^v_{\mathrm{succ}}|}\biggl(\beta^v_{\mathrm{succ}}\ket{v,Mv,\mathrm{w}_0(v),\psi_0} + \beta^v_{\mathrm{fail}}\sum_{\substack{z\in\F^n\\z\neq Mv}}\gamma_z^v\gamma^{vz}_{\mathrm{fail}}\ket{v,z,\mathrm{w}(v,z),\psi_0(v,z)}\biggr)\otimes\ket0^{\mathrm{flag}}.
\end{align}
% In order to ensure the singular values are real numbers, we may redefine the states $\ket{\mathrm{w}(v,z)}$ of the workspace register to absorb the complex phases. 
The singular values $|\beta'^v_{\mathrm{succ}}|$ are defined by the relation
\begin{align}
    |\beta'^v_{\mathrm{succ}}|^2 = |\beta^v_{\mathrm{succ}}|^2 + |\beta^v_{\mathrm{fail}}|^2\sum_{\substack{z\in\F^n\\z\neq Mv}}|\gamma_z^v|^2|\gamma^{vz}_{\mathrm{fail}}|^2\;.
\end{align}
Fixing a threshold parameter $t\in(0,1)$, consider the partition of $\F^n$ into the following three sets:
\begin{enumerate}
    \item a good set $X_t^g=\{v\in\F^n~:~|\beta^{v}_{\mathrm{succ}}|^2\geq  t + t^2\}$
    \item an intermediate set $W_t=\{v\in\F^n~:~|\beta^{v}_{\mathrm{succ}}|^2\in \left(t-2t^2, t+t^2\right)\}$
    \item a bad set $X_t^b=\{v\in\F^n~:~|\beta^{v}_{\mathrm{succ}}|^2\leq t-2t^2\}$.    
\end{enumerate}
Choosing $\epsilon=t^2$ in \cref{lem:qVerification} we have that when $v\in X^g_t$, $|\beta'^v_{\mathrm{succ}}|^2 \geq t+t^2$, and when $v\in X^b_t$, $|\beta'^v_{\mathrm{succ}}|^2\leq t-t^2$. 

We see that $U=\ALG_{\mathrm{verified}}$ hence satisfies the conditions required in \cref{thm:qsv-projections}, so that we can use it to select the vectors in $X^g_t$ with high probability using the technique of singular value threshold projection. Since $\sqrt{t+t^2}\geq \sqrt{t}\left(1+\frac12 t-\frac18t^2\right)$, we can choose the thresholds in \cref{thm:qsv-projections} to be $t$ as above, and $\delta=\frac{1}{2}t^{3/2}-\frac18 t^{5/2}$, and obtain a unitary $U_q$ with the action
\begin{equation}
    U_q\ket{v,0,0} = \widetilde{\beta}^v_{\mathrm{succ}}\ket{w_v}\ket0^{\mathrm{flag}} + \widetilde{\beta}^v_{\mathrm{fail}}\ket{\Psi_v}\ket1^{\mathrm{flag}},
\end{equation}
where the amplitudes satisfy the following guarantees ($\Pr(\text{flag}=0):=|\widetilde{\beta}^v_{\mathrm{succ}}|^2$ etc):
\begin{enumerate}
    \item for inputs $v\in X_t^g$, $\Pr(\text{flag}=0)\geq 1-2\epsilon$;

    \item for inputs $v\in X_t^b$, $\Pr(\text{flag}=1)\geq 1-2\epsilon$.
\end{enumerate}
Importantly, we note that for inputs $v\in W_t$, we get no useful guarantee other than the consistency condition $\Pr(\text{flag}=0)\leq 1$. The unitary $U_q$ can be implemented using $q=O(\frac1\delta\log\frac1\epsilon)=O(\frac{1}{t^{3/2}}\log\frac1\epsilon)$ queries to $U$, i.e.\ to $\ALG_{\mathrm{verified}}$, and $O(q)$ additional one-qubit and two-qubit gates.

$U_q$ represents a noisy version of the indicator function on $X^g_t$ --- applying a Pauli-$\mathsf{X}$ gate to flip the flag qubit, $U_q$ is a noisy quantum oracle for $\mathsf{1}_{X}$ as defined in \cref{eq:bounded-error-ent-oracle}, with one additional complication: there is a ``wasteland'' slice $W\subseteq\F^n$ on which $U_q$ gives no guarantee as to the value of $\mathsf{1}_{X}$. Note that this can in principle be a very large set --- by inspecting the averaging argument in \cref{clm:density-of-X} we see that it is possible to construct adversarial examples of $\ALG$ for which $W$ can have density as large as $\frac{1-\alpha}{1-\alpha/2}$ for $t=\alpha/2$. Nevertheless, we will show how to overcome this difficulty in \cref{subsec:thresholds}.

\subsection{Quantum sampling from the set of good inputs}
\label{sec:quantum-sampling}
Using the subroutine $\ALG_{\mathrm{verified}}$, it is also possible to sample from the set of vectors on which $\ALG$ succeeds with at least a desired probability. 
Suppose we are interested in the set $X_{\tau}=\{v\in\F^n : \abs{\beta^v_{\mathrm{succ}}}^2>\tau\}$ for some $\tau\in(0,1)$. We use the same ideas that we presented in the previous section --- using singular value threshold projection from \cref{thm:qsv-projections} with the choice of threshold $t=\sqrt{\tau}$ and $\delta=\eta t$ for some $\eta>0$ and so $q=O(\frac{1}{\eta\sqrt{\tau}}\log\frac{1}{\epsilon})$, we get a unitary $U_q$ that implements a noisy version of an approximation to $\mathsf{1}_X$. It is sufficient to choose constant $\eta$, e.g. $\eta=0.01$, because unlike in the case of the indicator function in the previous section, we do not require high precision with respect to the wasteland slice $W_t$: instead of sampling from $X_{\tau}$, we can easily work with $X_{\tau'}\subseteq X_{\tau}$ for $\tau'=(1+\eta)\tau$ without any difficulties.

Suppose we prepare the uniform superposition over all $v\in\F^n$ and run $U_q$.
% Consider the operation $U=\ALG_{\mathsf{verified}}\left(H^{\otimes n}\otimes\mathbbm{1}\right)$ that creates the uniform superposition over all $v\in\F^n$ and then runs $\ALG_{\mathsf{verified}}$. 
If the density of $X_{\tau}$ is $\mu(X_{\tau})$, we see that with the projector $\widetilde{\Pi}=\mathbbm{1}\otimes\ketbra{0}{0}^{\mathrm{flag}}$ we have $\norm{\ip{X_{\tau}, 0^{\mathrm{flag}}|\widetilde{\Pi}U_q|0}}^2\geq (1-\epsilon)\mu(X_{\tau})$, where 
\[
    \ket{X_{\tau}} := \frac{1}{\sqrt{|X_{\tau}|}}\sum_{v\in X_{\tau}}\ket{v}
\]
is the uniform superposition over vectors in $X_{\tau}$. In particular, we can apply fixed-point amplitude amplification by \cref{thm:fpaa} to $U_q$, to obtain a boosted unitary $U_q'$ with $q'=O\left(\frac{1}{\sqrt{\mu(X_{\tau})}}\log\frac{1}{\delta}\right)$ uses of $U_q$ and $U_q^{\dagger}$, such that $\norm{\widetilde{\Pi}U_q'\ket0-\ket{X_{\tau},0^{\mathrm{flag}}}}\leq\delta$. Note that this procedure gives us a way to sample approximately from the uniform distribution on $X_{\tau}$ that is quadratically faster than a classical algorithm that draws random samples and verifies whether the $\ALG$ computes correctly on the drawn sample. Applying this argument to the case where $\tau=\alpha/2$, we have the following corollary.

\begin{corollary}
    \label{cor:quantum-sampling}
    Given an average-case quantum algorithm $\ALG$ with oracle access to the matrix $M\in\F^{n\times n}$ via $U_M$, there is a quantum algorithm $\mathsf{Q_{samp}}$ that uses $\ALG$ $O\left(\frac{1}{\sqrt{\tau\mu(X_{\tau})}}\right)$ times and with high probability outputs a vector $v\in X_{\tau}$ on which $\ALG$ succeeds with probability at least $\tau$. $\mathsf{Q_{samp}}$ makes $q=O\biggl(n^{3/2}\cdot\frac{1}{\sqrt{\tau\mu(X_{\tau})}}\biggr)$ queries to $U_M$ and $U_M^{\dagger}$, $O\biggl(q\log n\cdot\poly\log |\F|\biggr)$ additional one-qubit and two-qubit gates, and $O(n\log |\F|)$ ancillary qubits.
\end{corollary}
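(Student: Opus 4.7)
The plan is to formalise the two-layer quantum amplification strategy that is already sketched in the preceding paragraphs of \cref{sec:quantum-sampling}. The main idea is to apply fixed-point amplitude amplification (\cref{thm:fpaa}) on top of the unitary $U_q$ which, via singular value threshold projection (\cref{thm:qsv-projections}) applied to $\ALG_{\mathsf{verified}}$, approximately implements the indicator of $X_\tau$ by attaching a flag qubit. The complexity claim then drops out of multiplying the three complexity layers: $\ALG$ inside $\ALG_{\mathsf{verified}}$ inside $U_q$ inside the outer fixed-point amplification.

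More concretely, I would proceed as follows. First, prepare the uniform superposition $\ket{\psi_0} = |\F|^{-n/2}\sum_{v\in\F^n}\ket{v}$ with $O(n\log|\F|)$ single-qudit Hadamard/QFT gates on an $n\log|\F|$-qubit input register, attach workspace and flag registers, and apply $U_q$ built on $\ALG_{\mathsf{verified}}$ with threshold $t=\sqrt{\tau(1+\eta)}$ for a small absolute constant $\eta>0$ (so that the effective ``good'' set of $U_q$ sits inside $X_\tau$ and failure on the ``wasteland'' slice $W_t$ only enlarges the pool we might sample from rather than polluting it with vectors outside $X_\tau$). By the guarantees of the singular value threshold projection used to construct $U_q$, the $\ket 0^{\mathrm{flag}}$-flagged component of $U_q\ket{\psi_0}\ket0^{\mathrm{work}}\ket0^{\mathrm{flag}}$ has squared norm at least $(1-\epsilon)\,\mu(X_\tau)$, and its projection onto $\F^n$ is close to the uniform superposition over $X_\tau$.

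Second, invoke fixed-point amplitude amplification (\cref{thm:fpaa}) on $U_q$ with projector $\widetilde{\Pi}=\mathbbm{1}\otimes\ketbra{0}{0}^{\mathrm{flag}}$, using the worst-case lower bound $\sqrt{\mu(X_\tau)}$ on the flagged amplitude; this produces a unitary $U_q'$ with $O\bigl(\mu(X_\tau)^{-1/2}\log(1/\delta)\bigr)$ uses of $U_q$ and its inverse such that measuring the vector register after applying $U_q'$ yields a vector in $X_\tau$ with probability at least $1-\delta-O(\epsilon)$. Choosing $\epsilon,\delta$ to be small absolute constants then gives a high-probability sampler $\mathsf{Q_{samp}}$.

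For the complexity, substitute the inner counts: $U_q$ uses $O(\tau^{-1/2}\log(1/\epsilon))$ queries to $\ALG_{\mathsf{verified}}$ (by the guarantees of quantum singular value threshold projection), each of which costs $O(1)$ calls to $\ALG$ and $O(n^{3/2}\log(1/\epsilon))$ queries to $U_M, U_M^{\dagger}$ plus $\widetilde{O}(n^{3/2})$ two-qubit gates by \cref{lem:qVerification}. Multiplying by the outer $O(\mu(X_\tau)^{-1/2})$ amplification rounds yields the advertised $O\bigl((\tau\,\mu(X_\tau))^{-1/2}\bigr)$ calls to $\ALG$, $O\bigl(n^{3/2}(\tau\,\mu(X_\tau))^{-1/2}\bigr)$ queries to $U_M$, and $O(n\log|\F|)$ ancillas. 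The main subtlety in the argument is the treatment of the wasteland slice $W_t$ on which $U_q$ has no flag guarantee: since $W_t$ is arbitrary and can even have constant density, I would handle it by absorbing it into the shifted threshold $\tau(1+\eta)$ so that the sample is still guaranteed to lie in $X_\tau$; alternatively, one verifies success with a single post-measurement run of $\ALG_{\mathsf{verified}}$, rejecting if the flag is $1$ — this only multiplies the complexity by a constant since the wasteland contributes a bounded fraction of the flagged mass.
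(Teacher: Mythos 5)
Your proposal follows essentially the same route as the paper: build $U_q$ by applying the singular value threshold projection (\cref{thm:qsv-projections}) to $\ALG_{\mathsf{verified}}$ with threshold $\sqrt{\tau}$ (up to a small $\eta$-shift) and $\delta=\Theta(\sqrt{\tau})$, run it on the uniform superposition, and then apply fixed-point amplitude amplification (\cref{thm:fpaa}) with $O(\mu(X_\tau)^{-1/2})$ rounds, combining the inner cost $O(\tau^{-1/2})$ uses of $\ALG_{\mathsf{verified}}$ with the outer amplification to obtain the claimed $O((\tau\,\mu(X_\tau))^{-1/2})$ uses of $\ALG$ and $O(n^{3/2}(\tau\,\mu(X_\tau))^{-1/2})$ queries to $U_M$. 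Your choice to nudge the SVT threshold up to $\sqrt{\tau(1+\eta)}$ so that samples land in $X_\tau$ is the same bookkeeping the paper performs by observing that it effectively samples from some $X_{\tau'}$ with $\tau'$ within a constant factor of $\tau$; this is a cosmetic variant, not a different argument.
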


\subsection{Learning Bogolyubov subspaces from noisy quantum oracles}
\label{sec:quantumtools-fourier}

We would next like to design a quantum procedure for efficiently learning the Bogolyubov subspace from the noisy quantum oracle we constructed above in \cref{sec:indicator}. Since the Bogolyubov subspace is characterised by the Fourier coefficients of the indicator $\mathsf{1}_X$, our problem reduces to learning the Fourier spectrum of functions that are encoded in noisy quantum oracles of the aforementioned form.

\cite{AdcockCleve02} studied the problem of using noisy evaluations of a linear Boolean function $f(x)=a\cdot x$ to determine the underlying string $a\in\F_2^n$, given a guarantee that the evaluations have an average probability of at least $\frac12+\epsilon$ of being correct over random $x\in\F_2^n$. Here we show that the circuit that they consider can be used more generally to sample characters from the Fourier spectrum of a function $f:\F^n\to\F_2$ that is accessed via a noisy quantum oracle of the type defined in \cref{eq:bounded-error-ent-oracle}. In particular, we have the following result.
\begin{lemma}
\label{lem:qGoldLev}
    Suppose we are given as input a noisy quantum oracle $U_{f}$ to $f:\F^n\to\F_2$, such that
        \begin{equation}
        U_{f}\ket{x}\ket{0^{m+1}} = \ket{x}\biggl(\beta^x_{\mathrm{succ}}\ket{\mathrm{w}_0(x)}\ket{f(x)} + \beta^x_{\mathrm{fail}}\ket{\mathrm{w}_1(x)}\ket{\overline{f(x)}}\biggr),
    \end{equation}
    where $\forall x\in\F^n,~\abs{\beta^x_{\mathrm{succ}}}^2+\abs{\beta^x_{\mathrm{fail}}}^2 = 1$ and $\abs{\beta^x_{\mathrm{succ}}}^2 \geq 1-\epsilon$, $\overline{f(x)}=f(x)\oplus 1$ is $\mathsf{NOT}(f(x))$, and $\ket{\mathrm{w}_0(x)}$ and $\ket{\mathrm{w}_1(x)}$ are arbitrary $m$-qubit states of the workspace register. Then measuring the top three registers of the following circuit $\mathsf{C_{GL}}$ produces output $y\in\F^n$, $0^{m+1}$ with probability $p_y$ such that $\forall y\in\F^n$, $$\abs{|\hat{f}(y)|^2 - p_y} \leq 4\epsilon,$$ where $\hat{f}(y)$ are the Fourier coefficients of $f$, i.e.\ $f(x)=\sum_{y\in\F^n}\hat{f}(y)\chi_y(x)$ with $\chi_y(x):=\omega^{-x\cdot y}$.
    \vspace{1mm}    
    \begin{tcolorbox}[title=Circuit~{$\mathsf{C_{GL}}$}: Quantum Goldreich-Levin algorithm for Fourier sampling,
        standard jigsaw,
        opacityback=0] 
    \small
        \begin{quantikz}
    \lstick{$\left(\C^{p}\right)^{\otimes n}\ni\ket{0}^{\mathrm{v}}$}  & \gate{\qft_p^{\otimes n}}\qwbundle{} & \gate[3]{U_{f}} & \qw & \gate[3]{U_{f}^{\dagger}} & \qw & \gate{\left(\qft_p^{\dagger}\right)^{\otimes n}} & \meter{$p_y=\Pr(v=y)$} & \cw \rstick{$y\in\F^n$}\\
    \lstick{$\C^{m}\ni\ket{0}^{\mathrm{anc}}$} &  \qwbundle{}                               &                              & \qw         &       & \qw  &\qw & \meter{} \\
    \lstick{$\ket{0}^f$}                       & \qw                                &                              & \ctrl{1}    & \qw   & \qw & \qw &\meter{} \\
    \lstick{$\ket{0}$}                         & \gate{X}                           & \gate{H}                     & \targ{}     & \qw   & \qw & \qw  & \qw\rstick{$\ket{-}$}\\
\end{quantikz}

    \end{tcolorbox}
    \vspace{2mm}    
\end{lemma}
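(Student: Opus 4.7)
The plan is to trace the state through the circuit step by step, recognise the resulting amplitude on the target subspace as a Fourier coefficient of a real-valued function $g$ that is a small perturbation of $(-1)^{f}$, and finish with a routine error-propagation estimate. After the opening $\qft_p^{\otimes n}$ on the input register and the $X,H$ gates on the final qubit, the state is $\frac{1}{p^{n/2}}\sum_{x\in\F^n}\ket{x}\ket{0^{m+1}}\ket{-}$. Applying $U_{f}$, which is controlled on the $v$-register per the oracle convention used throughout the paper, produces $\frac{1}{p^{n/2}}\sum_{x}\ket{x}\ket{\phi_x}\ket{-}$ with $\ket{\phi_x}:=\beta_{\mathrm{succ}}^x\ket{\mathrm{w}_0(x)}\ket{f(x)}+\beta_{\mathrm{fail}}^x\ket{\mathrm{w}_1(x)}\ket{\overline{f(x)}}$. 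The $\mathsf{CNOT}$ from the output qubit to the $\ket{-}$ qubit performs phase kickback, imparting a factor of $(-1)^{f(x)}$ on the success branch and $-(-1)^{f(x)}$ on the failure branch while leaving $\ket{-}$ untouched; factoring out $(-1)^{f(x)}$, the state on the first three registers becomes $\ket{x}\ket{\phi_x''}$ with $\ket{\phi_x''}:=\beta_{\mathrm{succ}}^x\ket{\mathrm{w}_0(x)}\ket{f(x)}-\beta_{\mathrm{fail}}^x\ket{\mathrm{w}_1(x)}\ket{\overline{f(x)}}$.

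The heart of the calculation is the amplitude of $\ket{x,0^{m+1}}$ after applying $U_{f}^{\dagger}$ to $\ket{x}\ket{\phi_x''}$; by unitarity of $U_{f}$ together with the defining equation $U_{f}\ket{x,0^{m+1}}=\ket{x,\phi_x}$, this amplitude equals $\ip{\phi_x|\phi_x''}$. The \emph{exact} orthogonality $\ip{f(x)|\overline{f(x)}}=0$ in the output register annihilates the workspace cross-terms and collapses this overlap cleanly to the real quantity $\alpha_x:=\abs{\beta_{\mathrm{succ}}^x}^2-\abs{\beta_{\mathrm{fail}}^x}^2\in[1-2\epsilon,1]$. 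Hence the state just before the final $\qft_p^{\dagger}$ decomposes as $\frac{1}{p^{n/2}}\sum_{x}g(x)\ket{x,0^{m+1},-}+\ket{\Psi^\perp}$, where $g(x):=(-1)^{f(x)}\alpha_x$ and $\ket{\Psi^\perp}$ lies entirely in the subspace whose middle $(m+1)$-register is orthogonal to $\ket{0^{m+1}}$ (so $\ket{\Psi^\perp}$ cannot contribute to the target measurement outcome). Applying $(\qft_p^{\dagger})^{\otimes n}$ on the first register then yields amplitude $\frac{1}{p^{n}}\sum_{x}g(x)\omega^{-x\cdot y}=\hat{g}(-y)$ on $\ket{y,0^{m+1},-}$, and so the measurement probability is $p_y=\abs{\hat{g}(-y)}^2=\abs{\hat{g}(y)}^2$ by the reality of $g$.

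To conclude, I would compare $\abs{\hat{g}}^2$ with $\abs{\hat{F}}^2$, where $F(x):=(-1)^{f(x)}$ is the $\pm 1$-valued encoding of $f$ that the phase-kickback construction natively sees (and whose Fourier coefficients the statement of the lemma identifies with $\hat{f}$). Since $\abs{1-\alpha_x}\leq 2\epsilon$ uniformly in $x$, the triangle inequality gives $\abs{\hat{g}(y)-\hat{F}(y)}\leq 2\epsilon$, and combining this with the trivial bound $\abs{\hat{g}(y)},\abs{\hat{F}(y)}\leq 1$, the factorisation $a^2-b^2=(a-b)(a+b)$ applied to $a=\abs{\hat{F}(y)}$, $b=\abs{\hat{g}(y)}$ yields $\abs{\abs{\hat{f}(y)}^2-p_y}\leq 4\epsilon$, as required. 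The main subtle point is the $U_{f}^{\dagger}$ step: a priori the interference between the success and failure components of $\ket{\phi_x''}$ could depend delicately on the unknown workspace states $\ket{\mathrm{w}_0(x)}$ and $\ket{\mathrm{w}_1(x)}$, and what makes the argument go through cleanly is precisely the exact orthogonality of the output-register states $\ket{f(x)}$ and $\ket{\overline{f(x)}}$, which forces those workspace cross-terms in $\ip{\phi_x|\phi_x''}$ to drop out and leaves only the simple real overlap $\alpha_x$. Once this collapse is secured, everything downstream is elementary Fourier analysis.
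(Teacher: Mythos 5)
Your proof is correct and follows essentially the same approach as the paper: both trace the circuit, rely on the exact orthogonality $\langle f(x)|\overline{f(x)}\rangle = 0$ to collapse the workspace cross-terms in the overlap with $U_f^\dagger\,\cdot\,$, reduce the amplitude to $\frac{1}{N}\sum_x \omega^{\pm x\cdot y}(-1)^{f(x)}(\abs{\beta^x_{\mathrm{succ}}}^2-\abs{\beta^x_{\mathrm{fail}}}^2)$, and then bound $\abs{\abs{\hat f(y)}^2 - p_y}$ via the factorisation $a^2-b^2=(a-b)(a+b)$. Your presentation is in one respect slightly cleaner than the paper's: you make explicit the implicit identification of $\hat f$ with the Fourier coefficients of the $\pm 1$-encoding $F=(-1)^f$, and you replace the paper's Cauchy--Schwarz step with a simpler uniform triangle-inequality bound $\abs{1-\alpha_x}\leq 2\epsilon$, both yielding the same $2\epsilon$.
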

Note that the number $m$ of ancillary qubits in the workspace register need not have any relation to $n$.
\begin{remark}
\label{rem:qGoldLev}
    In our application, we actually have a weaker assumption on the input $U_f$, which in fact is the indicator function of \cref{sec:indicator}. The guarantee we obtain on the Fourier sampler is then $$\abs{|\hat{f}(y)|^2 - p_y} \leq 4\epsilon + 4\rho_W,$$ where $\rho_W:=|W|/|\F|^n$ is the density of the intermediate set $W$.
\end{remark}
\begin{proof}
    The input state undergoes the following transformations through the circuit $\mathsf{C_{GL}}$ : 
    \begin{align}
    \label{eq:noisyBVfirstHalf}
        \ket{0}\ket{0}\ket{0}&\ket{0} \xrightarrow{\qft_p^{\otimes n}\otimes\mathbbm{1}\otimes\mathbbm{1}\otimes X}\frac{1}{\sqrt{N}}\sum_{x\in\F^n}\ket{x}\ket{0}\ket{0}\ket{1}  \\
            &\xrightarrow{~~~U_{f}\otimes H~~~}\frac{1}{\sqrt{N}}\sum_{x\in\F^n}\ket{x}\biggl(\beta^x_{\mathrm{succ}}\ket{\mathrm{w}_0(x)}\ket{f(x)} + \beta^x_{\mathrm{fail}}\ket{\mathrm{w}_1(x)}\ket{\overline{f(x)}}\biggr)\ket{-} \nonumber\\
            &\xrightarrow{\mathbbm{1}\otimes\mathbbm{1}\otimes\mathsf{CNOT}}\frac{1}{\sqrt{N}}\sum_{x\in\F^n}\ket{x}\biggl((-1)^{f(x)}\beta^x_{\mathrm{succ}}\ket{\mathrm{w}_0(x)}\ket{f(x)} + (-1)^{\overline{f(x)}}\beta^x_{\mathrm{fail}}\ket{\mathrm{w}_1(x)}\ket{\overline{f(x)}}\biggr)\ket{-}\nonumber\\
            &=\frac{1}{\sqrt{N}}\sum_{x\in\F^n}(-1)^{f(x)}\ket{x}\biggl(\beta^x_{\mathrm{succ}}\ket{\mathrm{w}_0(x)}\ket{f(x)} - \beta^x_{\mathrm{fail}}\ket{\mathrm{w}_1(x)}\ket{\overline{f(x)}}\biggr)\ket{-},\nonumber
    \end{align}
    where $N=p^n$. In order to establish that the output is $y\in\F^n$ with probability $p_y$ such that $|p_y-\hat{f}(y)^2|<4\epsilon$ on measuring the first register after executing the entire circuit, we will compute the inner product 
    \begin{equation}
    \label{eq:noisyBVamps}
        \beta_y:=\biggl\langle y,0,0,-\biggl|\mathrm{C}\biggr|0,0,0,0\biggr\rangle,
    \end{equation}
    since $p_y=\abs{\beta_y}^2$. Notice that we also have
    \begin{align}
    \label{eq:noisyBVsecondHalf}
        \ket{y,0,0,-}&\xrightarrow{\qft_p^{\otimes n}\otimes\mathbbm{1}\otimes\mathbbm{1}\otimes \mathbbm{1}}\frac{1}{\sqrt{N}}\sum_{v\in\F^n}\omega^{y\cdot v}\ket{v,0,0,-}\\
            &\xrightarrow{U_{f}\otimes\mathbbm{1}}\frac{1}{\sqrt{N}}\sum_{v\in\F^n}\omega^{y\cdot v}\ket{v}\biggl(\beta^v_{\mathrm{succ}}\ket{\mathrm{w}_0(v)}\ket{f(v)} + \beta^v_{\mathrm{fail}}\ket{\mathrm{w}_1(v)}\ket{\overline{f(v)}}\biggr)\ket{-}\nonumber,
    \end{align}
    and we can compute the amplitude $\beta_y$ in \cref{eq:noisyBVamps} by taking the inner product between the states on the last lines of \cref{eq:noisyBVfirstHalf,eq:noisyBVsecondHalf}. Since $f(x)\in\F_2$ we have that $\ip{f(x)|\overline{f(x)}}=0$. Similarly, $\ip{v|x}=\delta_{vx}$, and while the states of the workspace register may not be orthogonal, they are normalised. Hence we have
    \begin{align}
        \beta_y &= \frac1N \sum_{v\in\F^n} \omega^{y\cdot v}(-1)^{f(v)} \biggl(\abs{\beta^v_{\mathrm{succ}}}^2 - \abs{\beta^v_{\mathrm{fail}}}^2\biggr).
    \end{align}
    Since for every $v\in\F^n$, $\abs{\beta^v_{\mathrm{succ}}}^2\geq 1-\epsilon$ and $\abs{\beta^v_{\mathrm{succ}}}^2+\abs{\beta^v_{\mathrm{succ}}}^2=1$, it always holds that $\abs{\beta^v_{\mathrm{succ}}}^2-\abs{\beta^v_{\mathrm{succ}}}^2\geq 1-2\epsilon$. Unlike the Boolean case (i.e.\ $\F=\F_2$), the Fourier coefficients of $f$ are no longer real numbers, and so need some additional care. Nevertheless, using their definition from \cref{eq:fourier-coefts}, and the Cauchy-Schwarz inequality, we have
    \begin{align}
    \label{eq:nosyBV-end}
       \abs{\hat{f}(y)-\beta_y} &= \frac1N \abs{\sum_{v\in\F^n} \omega^{y\cdot v}(-1)^{f(v)} \biggl(1-\biggl(\abs{\beta^v_{\mathrm{succ}}}^2 - \abs{\beta^v_{\mathrm{fail}}}^2\biggr)\biggr)}\nonumber\\
            &\leq \frac1N \abs{\sum_{v\in\F^n} \omega^{2y\cdot v}(-1)^{2f(v)}}^{1/2} \abs{\sum_{v\in\F^n}\biggl(1-\biggl(\abs{\beta^v_{\mathrm{succ}}}^2 - \abs{\beta^v_{\mathrm{fail}}}^2\biggr)\biggr)^2}^{1/2}\nonumber\\
            &\leq \frac1N\cdot\sqrt{N}\cdot 2\epsilon\sqrt{N}\nonumber\\
            &\leq 2\epsilon.
    \end{align}
    As $|\hat{f}(y)|,\abs{\beta_y}\leq 1$, and since $\abs{\abs{x}^2-\abs{y}^2}\leq\abs{x^*+y^*}\abs{x-y}\leq 2\abs{x-y}$ when $0\leq |x|,|y| \leq 1$ and $x^*$ denotes complex conjugation, we finally have that 
    \begin{align}
        \abs{|\hat{f}(y)|^2-p_y} &\leq 4\epsilon,
    \end{align}
    showing that the circuit $\mathsf{C_{GL}}$ can sample approximately from the Fourier spectrum of $f$. 
\end{proof}
The quantum Fourier transform $\qft_p$ over finite fields $\F_p$ can be implemented efficiently with $O(\log^2 |\F|)$ elementary gates and in depth $O(\log |\F|)$ \cite{Beals97QFT,Hoyer97QFT,Moore06QFT}. 

\paragraph{Learning subspace coefficients of $\mathsf{1_X}$:} Our interest is in the indicator function $\mathsf{1}_{X}:\F^n\to\{0,1\}$, which takes value $1$ on the set of vectors on which $\ALG$ succeeds with appreciable probability $|\beta_{\mathrm{succ}}^v|^2\geq \alpha$. 
Using the unitary $U_q$ of \cref{sec:indicator} almost satisfies the condition required for \cref{lem:qGoldLev}: $\abs{\beta_{\mathrm{succ}}}^2\geq 1-\epsilon$ and $\abs{\beta_{\mathrm{fail}}}^2 \leq \epsilon$, where $\epsilon$ can be chosen to be $o(1)$. To see the bound noted in \cref{rem:qGoldLev}, suppose the densities of the sets $X^g_t,X^b_t$ and $W_t$ are $\rho_g,\rho_b$ and $\rho_W$ respectively and let $\pi_v:=\biggl(1-\biggl(\abs{\beta^v_{\mathrm{succ}}}^2 - \abs{\beta^v_{\mathrm{fail}}}^2\biggr)\biggr)^2$. Then in \cref{eq:nosyBV-end} we have
\begin{align}
       \abs{\hat{f}(y)-\beta_y} &\leq \frac1N \abs{\sum_{v\in\F^n} \omega^{2y\cdot v}(-1)^{2f(v)}}^{1/2} \abs{\sum_{v\in\F^n}\biggl(1-\biggl(\abs{\beta^v_{\mathrm{succ}}}^2 - \abs{\beta^v_{\mathrm{fail}}}^2\biggr)\biggr)^2}^{1/2}\nonumber\\
        &\leq \frac1N\cdot\sqrt{N}\cdot \abs{\sum_{v\in X^g_t}\pi_v + \sum_{v\in X^b_t}\pi_v + \sum_{v\in W_t}\pi_v}^{1/2}\nonumber\\
        &\leq \frac{1}{\sqrt{N}}\cdot \abs{2\epsilon\rho_g N + 2\epsilon\rho_b N + 2\rho_W N}^{1/2}\nonumber\\
            &\leq 2\epsilon + 2\rho_W,
\end{align}
where on the third line we used the definition of $U_q$ to see that the quantity $\pi_v\leq 2\epsilon$ on both $X^g_t$ and $X^b_t$, while we only have the guarantee that $\pi_v\leq 2$ on the set $W_t$, and that $\rho_g+\rho_b+\rho_W=1$. Hence we have the following corollary.    

\begin{corollary}
    \label{lem:q-fourier-sampling}
    Given an average-case quantum algorithm $\ALG^M$ as described in \cref{eq:BEQLA} and the oracle $U_M$ for a matrix $M\in\F^{n\times n}$, we can learn heavy Fourier characters $\chi_y$ with $\widehat{f}(y)\geq c$ of the indicator function on the set $X$ of inputs on which $\ALG$ succeeds with high probability, using $\ALG$ $O(1/c)$ times, $q=O\left(\frac{1}{c}\cdot n^{3/2}\right)$ queries to $U_M$ and $U_M^{\dagger}$, $O(q \cdot\log n\cdot \poly\log |\F|)$ additional one-qubit and two-qubit gates, and $O(n\log|\F|)$ ancillary qubits.
\end{corollary}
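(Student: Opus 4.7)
The plan is to chain together the three subroutines we have already developed. First, I would invoke the construction from Section sec:indicator to obtain the noisy quantum oracle $U_q$ that implements an approximation to the indicator function $\mathsf{1}_X$. Concretely, starting from $\ALG_{\mathsf{verified}}$ (Lemma lem:qVerification), I would apply quantum singular value threshold projection at a threshold $t=\Theta(\alpha)$, with precision parameter $\epsilon=O(t^2)$, yielding a unitary $U_q$ which for every $v\in X^g_t$ outputs flag $1$ with probability at least $1-2\epsilon$ and for every $v\in X^b_t$ outputs flag $0$ with probability at least $1-2\epsilon$. After an $X$ gate on the flag qubit, $U_q$ fits the template for the noisy quantum oracle required by Lemma lem:qGoldLev (up to the wasteland slice $W_t$, acknowledged in Remark rem:qGoldLev). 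The query and gate costs of preparing $U_q$ are inherited from $\ALG_{\mathsf{verified}}$: namely, $O(1)$ uses of $\ALG$, $O(n^{3/2})$ queries to $U_M$ and $U_M^{\dagger}$, $\widetilde{O}(n^{3/2})$ elementary gates, and $O(n\log|\F|)$ ancillas per invocation.

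Next, I would feed $U_q$ as input to the Goldreich--Levin Fourier sampling circuit $\mathsf{C_{GL}}$ from Lemma lem:qGoldLev. A single execution of $\mathsf{C_{GL}}$ uses $U_q$ and $U_q^\dagger$ once each, along with quantum Fourier transforms over $\F$ (implementable with $O(\log^2|\F|)$ gates) and a constant number of additional operations. By the guarantee of that lemma combined with Remark rem:qGoldLev, measuring the output produces each $y\in\F^n$ with probability $p_y$ satisfying $|p_y - |\hat{\mathsf{1}_X}(y)|^2| \leq 4\epsilon + 4\rho_W$. Choosing $\epsilon$ and $\rho_W$ both to be $o(c)$, every character $y$ with $|\hat{\mathsf{1}_X}(y)|^2 \geq c$ will be output with probability at least $c/2$ in each trial.

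Finally, I would repeat the entire Fourier sampling procedure $O(1/c)$ times and collect the multiset of observed characters. Since each heavy coefficient is sampled with probability $\Omega(c)$ per trial, a straightforward union/Chernoff argument shows that after $O(\log(1/\delta)/c)$ trials, every heavy Fourier character is observed with probability at least $1-\delta$, while distinguishing them from spurious outputs can be done by re-estimating $p_y$ via additional sampling (or by deferring verification to the downstream Bogolyubov subspace construction). The overall cost is $O(1/c)$ invocations of $U_q$, which translates to $O(1/c)$ calls of $\ALG$, $O(n^{3/2}/c)$ queries to $U_M$ and $U_M^\dagger$, $O(n^{3/2}c^{-1}\log n\cdot\poly\log|\F|)$ additional one- and two-qubit gates, and $O(n\log|\F|)$ ancillary qubits, matching the stated bounds.

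The main obstacle I anticipate is controlling the effect of the wasteland set $W_t$, whose density $\rho_W$ can a priori be as large as $(1-\alpha)/(1-\alpha/2)$ and thus dominate the error term $4\epsilon + 4\rho_W$ in Remark rem:qGoldLev. As flagged in Section sec:overview (paragraph on random thresholds), this must be addressed by selecting the singular value threshold $t$ uniformly at random from an appropriate sub-interval, which with constant probability ensures $\rho_W = O(\alpha^{3/2})$ and hence $\rho_W \ll c$ when $c$ is sufficiently large relative to $\alpha^{3/2}$. This randomisation does not affect the stated asymptotic complexity, since we only pay a constant-factor overhead to condition on a favourable threshold, but it is the key quantitative point that makes the Fourier sampler usable for learning the Bogolyubov subspace.
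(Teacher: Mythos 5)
Your construction of $U_q$ from $\ALG_{\mathsf{verified}}$ and the use of $\mathsf{C_{GL}}$ follow the paper exactly, but your final step — replacing the paper's \emph{quantum amplitude estimation} with classical repetition — introduces a quantitative gap. The paper's proof is a single line: measure $\mathsf{C_{GL}}$ once to obtain a candidate $y$, then run amplitude estimation to estimate the amplitude (i.e.\ $\sqrt{p_y}\approx|\hat f(y)|$) to additive precision $c/100$, which costs $O(1/c)$ invocations of $\mathsf{C_{GL}}$. This $O(1/c)$ comes from the \emph{quadratic speedup} of amplitude estimation over classical sampling; it is not a sampling bound.

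Your argument instead asserts that every heavy character ``is sampled with probability $\Omega(c)$ per trial,'' and deduces that $O(\log(1/\delta)/c)$ independent trials observe all of them. This silently replaces the statement's condition $\hat f(y)\ge c$ with $|\hat f(y)|^2\ge c$. Under the actual hypothesis $\hat f(y)\ge c$, Lemma~\ref{lem:qGoldLev} gives $p_y\approx|\hat f(y)|^2\ge c^2$ (up to the $4\epsilon+4\rho_W$ error), so classical repetition needs $\Omega(1/c^2)$ trials, not $O(1/c)$; and the ``re-estimate $p_y$ by additional sampling'' fallback you offer would likewise cost $\Omega(1/c^2)$ to reach precision $c$. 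Amplitude estimation is precisely what lets the paper achieve $O(1/c)$ rather than $O(1/c^2)$, and omitting it means your stated complexity does not follow from your argument.

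One smaller comment: your last paragraph correctly identifies that the wasteland density $\rho_W$ must be controlled by a random choice of threshold, but this is not part of the present corollary's proof obligation — the paper defers that to Claim~\ref{clm:close-sets} and Corollary~\ref{cor:close-fourier-coeffs} inside the main reduction, where Remark~\ref{rem:qGoldLev}'s $4\rho_W$ term is actually absorbed. It is fine to flag the issue, but it would be cleaner to scope this corollary as the paper does: produce a noisy Fourier sampler with an error term depending on $\epsilon$ and $\rho_W$, and leave the random-threshold argument to the caller.
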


The procedure is to simply measure the state output by $\mathsf{C_{GL}}$ to obtain a character $y\in\F^n$, and then run standard quantum amplitude estimation to estimate the value of the corresponding Fourier coefficient to additive precision $c/100$, which will use $\mathsf{C_{GL}}$ a total of $O(1/c)$ times. 

\section{Robust quantum local correction via additive combinatorics}
\label{sec:AC}

In this section, we prove our main technical tool: a robust quantum local correction lemma for linear problems. Towards this end, we first prove a noise-robust generalisation of Bogolyubov's lemma from additive combinatorics.

\subsection{Robust probabilistic Bogolyubov lemma}

 Recall that Bogolyubov's lemma states that for any subset $A \subseteq \F_2^n$ of density $|A|/2^n \geq \alpha$, there exists a subspace $V \subseteq 4A$ of dimension at least $n - \alpha^{-2}$.
 
 We would like to use Bogolyubov's lemma to locally correcting faulty inputs by \emph{explicitly} computing a decomposition into a linear combination of \textit{good inputs}, shifted by a sparse vector (see \cref{tech:qcorrection}). Computing the sparse shift vector requires learning the (significant) Fourier spectrum of the subspace implied by Bogolyubov's lemma. The caveat is that our quantum algorithm for learning the subspace encoded in a noisy quantum state (see \cref{sec:quantumtools-fourier}) can only obtain an approximation of the spectrum.
 
 Hence, we need to strengthen Bogolyubov's lemma to obtain the following structural properties:
 \begin{enumerate}
     \item \emph{Robustness}, in the sense that the linear constraints that define the Bogolyubov subspace can lie within a range of Fourier thresholds; and
     \item \emph{Density}, in the sense that each element of the Bogolyubov subspace admits many decompositions into valid inputs, and in turn can be sampled probabilistically using our quantum sampling procedure (see \cref{sec:quantum-sampling}).
 \end{enumerate}

% \begin{restatable}[Probabilistic quasi-polynomial Bogolyubov-Ruzsa lemma]{lemma}{sandersThm}\label{thm:bogolyubov-quasipoly}
%     Let $\F = \F_p$ be a prime field, and let $A \seq \F^n$ be a set of size $\abs{A} = \alpha \cdot \abs{\F}^n$, for some $\alpha \in (0,1]$. Then, there exists a subspace $V \seq \F^n$ of dimension $\dim(V) \geq n - O(\log^4(1/\alpha))$ such that for all $v \in V$ it holds that
%     \begin{equation*}
%         \Pr_{a_1,a_2,a_3 \in \F^n}[a_1, a_2 \in A, a_3, a_4 \in -A] \geq \Omega(\alpha^5)
%         \enspace,
%     \end{equation*}
%     where $a_4 = v-a_1-a_2-a_3$.
%     Furthermore, given a query access to the set $A$, there is an algorithm
%     that runs in time $\exp(\log^4(1/\alpha)) \cdot \poly\log(1/\delta) \cdot \poly(n)$ and with probability $1-\delta$ computes a set of vectors $R \seq \F^n$ such that $V = \{v \in \F^n : \ip{v,r} = 0 \; \forall r \in R\}$.
% \end{restatable}

Next, we prove a generalisation of Bogolyubov's lemma, which achieves the aforementioned structural properties. In the following, given a set $X \seq \F^n$ define $\Spec_X(\gamma) = \{r \in \F^n \setminus \{0\} : \abs{\hat{1}_X(r)} \geq \gamma\}$.

\begin{lemma}[Robust Bogolyubov lemma]\label{lem:apx-bogolyubov}
    Let $\F = \F_p$ be a prime field, and 
    let $X \seq \F^n$ be a set of size $\abs{X} = \alpha \cdot \abs{\F}^n$ for some $\alpha \in (0,1]$.
    Let $R \seq \F^n$ be a set such that
    $\Spec_X(\alpha^{3/2}) \seq R \seq \Spec_X(\frac{\alpha^{3/2}}{2})$.
    
    Let $V = \{v \in \F^n : \ip{v,r} = 0 \ \forall r \in R\}$.
    Then $\dim(V) \geq n-4/\alpha^2$, and for all $v \in V$ it holds that
    \begin{equation*}
        \Pr_{x_1,x_2,x_3 \in \F^n}[x_1,x_2,x_3,v - x_1 - x_2 - x_3 \in X] \geq \alpha^5 \;,
    \end{equation*}
    or equivalently
    \begin{equation*}
        \Pr_{x_1,x_2,x_3 \in X}[v - x_1 - x_2 - x_3 \in X] \geq \alpha^2 \;.
        % \text{we can probably get this oracle in time } O(1/\alpha^{3/2}).
    \end{equation*}
\end{lemma}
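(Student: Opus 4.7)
My plan is to split the proof into two Fourier-analytic parts: a dimension bound on $V$ coming from Parseval, and a density lower bound on the number of four-fold representations of each $v \in V$ as a sum of elements of $X$.

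For the dimension bound, I would invoke Parseval's identity in the averaging convention, giving $\sum_{r \in \F^n} |\widehat{1_X}(r)|^2 = \mu(X) = \alpha$. Since every $r \in R$ lies in $\Spec_X(\alpha^{3/2}/2)$, we have $|\widehat{1_X}(r)|^2 \geq \alpha^3/4$ for each such $r$, so $|R| \cdot \alpha^3/4 \leq \alpha$ and hence $|R| \leq 4/\alpha^2$. Since $V$ is cut out by the $|R|$ linear equations $\{\langle v, r\rangle = 0 : r \in R\}$, we conclude $\dim V \geq n - |R| \geq n - 4/\alpha^2$.

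For the density bound, let $N(v) := |\{(x_1, x_2, x_3, x_4) \in X^4 : x_1 + x_2 + x_3 + x_4 = v\}| = (1_X^{*4})(v)$. The convolution theorem and Fourier inversion give
\[
N(v) \;=\; p^{3n} \sum_{r \in \F^n} \omega^{-v \cdot r}\,\widehat{1_X}(r)^4.
\]
I would split the sum into three pieces: the $r = 0$ main term, contributing exactly $\alpha^4 p^{3n}$; the $r \in R$ terms, for which $v \in V$ forces the phase $\omega^{-v \cdot r} = 1$ and which I expect to yield a non-negative boost; and the tail over $r \notin R \cup \{0\}$. For the tail, every such $r$ lies outside $\Spec_X(\alpha^{3/2})$, so $|\widehat{1_X}(r)|^2 < \alpha^3$, and combining this with Parseval restricted to nonzero frequencies yields
\[
\sum_{r \notin R \cup \{0\}} |\widehat{1_X}(r)|^4 \;\leq\; \alpha^3 \sum_{r \neq 0} |\widehat{1_X}(r)|^2 \;\leq\; \alpha^3(\alpha - \alpha^2) \;\leq\; \alpha^4 - \alpha^5.
\]
Subtracting this tail bound from the main term gives $N(v) \geq \alpha^5 p^{3n}$, which is exactly the probability bound $\Pr_{x_1, x_2, x_3 \in X}[v - x_1 - x_2 - x_3 \in X] \geq \alpha^2$.

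The main technical obstacle is ensuring that the $r \in R$ block really contributes a non-negative real boost rather than an arbitrary complex shift. The argument above goes through transparently whenever the fourth-power Fourier coefficients are non-negative reals, which is automatic for the \emph{signed} four-fold convolution $1_X \ast 1_X \ast 1_{-X} \ast 1_{-X}$, whose coefficients are $|\widehat{1_X}(r)|^4 \geq 0$ and which counts solutions to $x_1 + x_2 - x_3 - x_4 = v$. Over $\F_2$ the signed and unsigned sumsets coincide because $-x = x$, so no extra work is needed; for odd $p$ the bare coefficients $\widehat{1_X}(r)^4$ are genuinely complex and the $R$-block must be controlled more carefully. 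I would handle this by exploiting the reflection symmetry $\widehat{1_X}(-r) = \overline{\widehat{1_X}(r)}$ (which makes $R$ automatically symmetric under $r \mapsto -r$) together with the closure of $V$ under scalar multiplication by $\F_p^*$, in order to pair conjugate Fourier contributions and average out their phases; equivalently, one proves the statement for the signed sumset, which still supplies the decomposition required by the downstream quantum local-correction lemma.
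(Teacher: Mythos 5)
Your decomposition (Parseval for the dimension bound, Fourier expansion of the fourfold convolution split into $r=0$, $r\in R$, and the tail) is exactly the paper's argument, and the concern you flag at the end is not a side issue -- it is a genuine gap, and one that the paper itself glosses over. The paper writes the convolution's Fourier coefficient as $(\widehat{1_X}(r))^4$ and then, in the very next line of the display, silently replaces it with $\abs{\widehat{1_X}(r)}^4$; these agree only when $\widehat{1_X}(r)^4$ is a non-negative real, which over $\F_2$ is automatic but over odd $p$ is simply false. In fact the lemma as stated fails over $\F_5$: take $n=1$, $X=\{1\}$, so $\alpha=1/5$ and $\abs{\widehat{1_X}(r)}=1/5>\alpha^{3/2}$ for all $r\neq 0$, which forces $R=\{1,2,3,4\}$ and $V=\{0\}$. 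But $1+1+1+1\equiv 4\neq 0\pmod 5$, so $\Pr[x_1,x_2,x_3,-x_1-x_2-x_3\in X]=0<\alpha^5$. Your diagnosis and your proposal to pass to the signed convolution $1_X * 1_X * 1_{-X} * 1_{-X}$, whose coefficients are $\abs{\widehat{1_X}(r)}^4\geq 0$ and which counts $x_1+x_2-x_3-x_4=v$, is the right repair; it changes the decomposition to $y=s+x_1+x_2-x_3-x_4$, and the downstream local-correction step (\cref{lem:quantum-local-correction} and Step~7 of Algorithm~1) goes through verbatim with two signs flipped when assembling $b$.

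However, your alternative "conjugate pairing'' fix does not work as sketched, and you should discard it. Pairing $r$ with $-r$ yields $\widehat{1_X}(r)^4 + \widehat{1_X}(-r)^4 = 2\,\mathrm{Re}\bigl(\widehat{1_X}(r)^4\bigr)$, which is real but can still be negative (e.g.\ when $\arg\widehat{1_X}(r)$ is near $\pi/4$), so the $R$-block is not guaranteed to be a non-negative boost even after pairing. Closure of $V$ under $\F_p^*$-scaling does not rescue this either: $R$ is sandwiched between two level sets of $\abs{\widehat{1_X}}$, which are reflection-symmetric but not dilation-invariant, and $\widehat{1_X}(cr)$ for $c\in\F_p^*$ bears no useful relation to $\widehat{1_X}(r)$, so averaging over dilates of $r$ has no leverage. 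Stick with the signed-sumset formulation; it is both correct and exactly what the application needs.
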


\begin{proof}
    Note first that by Parseval's identity we have
    \begin{equation*}
            \alpha = \ip{1_X,1_X} = \norm{1_X}_2^2 = \sum_{r} \abs{\hat{1}_X(r)}^2 \;.
    \end{equation*}
    In~particular,
    \begin{equation*}
        \frac{\alpha^3}{4} \cdot |R| \leq \sum_{r \in R} \abs{\hat{1}_X(r)}^2 \leq \sum_{r} \abs{\hat{1}_X(r)}^2 = \alpha;,
    \end{equation*}
    and hence $|R| \leq \frac{4}{\alpha^2}$. In particular, $\dim(V) \geq n - |R| \geq n - 4/\alpha^2$.
    
    Furthermore, we have
    \begin{equation*}
      \sum_{r \in \F^n \setminus (R \cup \{0\})} \abs{\hat{1}_X(r)}^4
      \leq \alpha^3 \cdot \sum_{r \in \F^n \setminus (R \cup \{0\})} \abs{\hat{1}_X(r)}^2
      \leq \alpha^3 (\alpha - \alpha^2) \leq \alpha^4 - \alpha^5
      \enspace,
    \end{equation*}
    where the second inequality uses that $\sum_r \abs{\hat{1}_X(r)}^2=\alpha$, and $\abs{\hat{1}_X(0)}^2=\alpha^2$.
    Noting that for every $v \in V$ we have $\chi_r(v) = \omega^{\ip{v,r}} = \omega^0 = 1$ for all $r \in R$,
    it follows that
    \begin{eqnarray*}
        \Pr_{x_1,x_2,x_3 \in \F^n}[x_1,x_2,x_3,v-x_1-x_2-x_3 \in V]
        & = & (1_X * 1_X * 1_X * 1_X)(v) \\
        & = & \sum_{r \in \F^n} (\hat{1}_X(r))^4 \chi_r(v) \\
        & = & \abs{\hat{1}_X(0)}^4 \chi_0(v) + \sum_{r \in R} \abs{\hat{1}_X(r)}^4 \chi_r(v) \\
        && + \sum_{r \in \F^n \setminus (R \cup \{0\})} \abs{\hat{1}_X(r)}^4 \chi_r(v) \\
        & \geq & \alpha^4 +  (\alpha^{3/2}/2)^4 \cdot \abs{R} - (\alpha^4 - \alpha^5) \\
        & \geq & \alpha^5
        \enspace,
    \end{eqnarray*}
    as required.
\end{proof}

\subsection{Quantum local correction lemma}

Equipped with the noise-robust Bogolyubov lemma, we can now employ the four quantum procedures we showed in \cref{sec:quantumtools} and prove the following quantum local correction lemma for linear problems, which lies at the heart of our worst-case to average-case reductions.

\begin{lemma}[Quantum local correction]\label{lem:quantum-local-correction}
    Let $\F = \F_p$ be a prime field, and let $X \seq \F^n$ be a set of size $\abs{X} = \alpha \cdot \abs{\F}^n$, for some $\alpha \in (0,1]$. Then, there exists a non-negative integer $t \leq 4/\alpha^2$, a~collection of $t$ vectors $B = \{b_1,\dots,b_t \in \F^n\}$, and $t$ indices $k_1,\dots, k_t \in [n]$ satisfying the following:
    \begin{description}
      \item Given a vector $y \in \F^n$, define $s = \sum_{j=1}^t \ip{y, b_j} \cdot \vec{e}_{k_j}$
            where $(\vec{e}_i)_{i \in [n]}$ is the standard basis.
            Then
            \begin{equation*}
                \Pr_{x_1,x_2,x_3 \in \F^n}[x_1,x_2,x_3,x_4 \in X] \geq \alpha^5
                \enspace,
            \end{equation*}
    \end{description}
    where $x_4=y-x_1-x_2-x_3-s$.
    Equivalently, we have
            \begin{equation*}
                \Pr_{x_1,x_2,x_3 \in X}[x_4  \in X] \geq \alpha^2
                \enspace.
            \end{equation*}
    
    Furthermore, suppose we have a quantum membership oracle $\widetilde{O}_{X^*}$ (that we can query in superposition) for a set $X^*$ satisfying the following conditions
    \begin{itemize}
        \item $\abs{\hat{1}_{X^*}(r) - \hat{1}_{X}(r)} \leq \frac{\alpha^{3/2}}{8}$
        for all $r \in \F^n$;
        \item for every input $x \in \F^n$, $\widetilde{O}_{X^*}$ computes the indicator $1_{X^*}(x)$ correctly with probability at least $1-\alpha^{3/2}/10$.
    \end{itemize}
    Then, there exists a quantum algorithm that with probability at least $1-\delta$ returns vectors $b_1,\dots,b_t \in \F^n$ and indices $k_1,\dots, k_t \in [n]$ as described above. This algorithm makes $O( \alpha^{-7/2})$ blackbox queries to $\widetilde{O}_{X^*}$, uses an additional number
    \begin{align*}
        O\biggl(n\alpha^{-7/2}\cdot\log n\cdot\poly\log|\F|\biggr) \enspace
    \end{align*}
    of one-qubit and two-qubit gates, and $O(n\alpha^{-2}\log|\F|)$ ancillary qubits.
\end{lemma}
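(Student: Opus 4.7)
The plan is to split the proof into two parts mirroring the statement: a purely combinatorial existence statement for $(b_j, k_j)$, followed by a quantum algorithm that produces them. For the combinatorial part, I would fix any $R \seq \F^n$ with $\Spec_X(\alpha^{3/2}) \seq R \seq \Spec_X(\alpha^{3/2}/2)$ (concretely, $R = \Spec_X(\alpha^{3/2})$ will do). The Robust Bogolyubov Lemma (Lemma~\ref{lem:apx-bogolyubov}) then gives $|R| \leq 4/\alpha^2$ and a subspace $V = \{v : \langle v, r \rangle = 0 \ \forall r \in R\}$ on which the desired four-term decomposition probability is at least $\alpha^5$. I would take $(b_1, \dots, b_t)$ to be the rows of the reduced row echelon form of the matrix whose rows are the elements of $R$, and $k_1, \dots, k_t$ to be the corresponding pivot columns. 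The defining RREF property $(b_i)_{k_j} = \delta_{ij}$ implies, for any $y \in \F^n$ and $s = \sum_j \langle y, b_j\rangle \vec{e}_{k_j}$, that $\langle s, b_i\rangle = \sum_j \langle y, b_j\rangle (b_i)_{k_j} = \langle y, b_i\rangle$, so $y-s$ is orthogonal to $\sp(\{b_j\}) = \sp(R)$; hence $y - s \in V$, and applying Bogolyubov to $v = y-s$ with $x_4 := v - x_1 - x_2 - x_3$ delivers the claimed probability bound.

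For the algorithmic part, the task is to construct such an $R$ from the noisy oracle $\widetilde O_{X^*}$ and then run the RREF classically. I would feed $\widetilde O_{X^*}$ into the quantum Goldreich-Levin sampler $\mathsf{C_{GL}}$ of Lemma~\ref{lem:qGoldLev} with error parameter $\epsilon = \alpha^{3/2}/10$, thereby obtaining a sample $y$ with probability within $O(\alpha^{3/2})$ of $|\hat{1}_{X^*}(y)|^2$. By Parseval there are at most $O(\alpha^{-2})$ characters of $1_{X^*}$ (or $1_X$) with Fourier weight above $\alpha^{3/2}/2$, and each heavy character is sampled with probability $\Omega(\alpha^3)$, so a coupon-collector argument shows that $\widetilde O(\alpha^{-3})$ invocations of $\mathsf{C_{GL}}$ suffice to see every heavy $y$ with probability at least $1-\delta/2$. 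For each distinct candidate $y$ I would then run amplitude estimation on the amplitude producing $y$ in $\mathsf{C_{GL}}$ to additive precision $\alpha^{3/2}/16$; this costs $O(\alpha^{-3/2})$ queries per candidate, and hence $O(\alpha^{-2}\cdot\alpha^{-3/2}) = O(\alpha^{-7/2})$ queries in total over the at most $O(\alpha^{-2})$ candidates, matching the stated query budget. Keeping exactly those $y$ whose estimate exceeds the midpoint $3\alpha^{3/2}/4$, the Fourier-closeness assumption $|\hat{1}_{X^*}(r) - \hat{1}_X(r)| \leq \alpha^{3/2}/8$ combined with the estimation slack guarantees that the resulting $R$ satisfies $\Spec_X(\alpha^{3/2}) \seq R \seq \Spec_X(\alpha^{3/2}/2)$, and a classical reduced-row-echelon computation on $R$ extracts the required $(b_j, k_j)$.

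The main obstacles I expect are (i) balancing the number of Fourier samples against the per-candidate amplitude-estimation precision so that the two contributions combine exactly to the stated $O(\alpha^{-7/2})$ query complexity (rather than an extra $\alpha^{-1/2}$ from coupon collector or amplification); and (ii) carefully propagating the two independent sources of error---the $\alpha^{3/2}/10$ bounded-error noise of $\widetilde O_{X^*}$ and the $\alpha^{3/2}/8$ Fourier-level slack between $X^*$ and $X$---through the guarantee of Lemma~\ref{lem:qGoldLev} and through amplitude estimation, with enough threshold slack to correctly classify every one of the $O(\alpha^{-2})$ candidate characters under a union bound of total failure probability at most $\delta$.
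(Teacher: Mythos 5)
Your overall plan matches the paper's own proof closely: for the combinatorial part, the paper also applies the Robust Bogolyubov lemma (\cref{lem:apx-bogolyubov}) to $R$, row-reduces $R$ to extract the $b_j$'s with pivot indices $k_j$ (so that $b_j[k_j]=1$ and $b_j[k_{j'}]=0$ for $j'\neq j$), and verifies $y-s\in V$ by the same pivot computation; for the algorithmic part, the paper also learns $R$ by running the quantum Goldreich--Levin sampler of \cref{lem:qGoldLev} through the amplitude-estimation wrapper of \cref{lem:q-fourier-sampling} and thresholding at $\tfrac34\alpha^{3/2}$, then invokes the closeness assumption to pass from $\Spec_{X^*}$ to $\Spec_X$.

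There are two soft spots in your algorithmic account, and you are right to flag (i) as an obstacle. First, you invoke the \emph{probability-level} guarantee $\abs{|\hat 1_{X^*}(y)|^2 - p_y|}\le O(\alpha^{3/2})$ from \cref{lem:qGoldLev}. For the $y$'s you care about, the signal $|\hat 1_{X^*}(y)|^2=\Theta(\alpha^3)$ is swamped by that $\Theta(\alpha^{3/2})$ slack, so it does not by itself give the $\Omega(\alpha^3)$ sampling probability your coupon-collector argument needs. What is actually needed (and what \cref{lem:q-fourier-sampling} implicitly exploits by estimating the \emph{amplitude} to precision $c/100$) is the amplitude-level bound $|\hat f(y)-\beta_y|\le 2\epsilon$ established inside the proof of \cref{lem:qGoldLev}: then $|\beta_y|\ge \tfrac34\alpha^{3/2}-\tfrac15\alpha^{3/2}$ and $p_y=|\beta_y|^2=\Omega(\alpha^3)$. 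Second, the assertion ``at most $O(\alpha^{-2})$ candidates'' is stated but not justified. After $\widetilde O(\alpha^{-3})$ single-shot $\mathsf{C_{GL}}$ samples you may see up to $\widetilde O(\alpha^{-3})$ distinct $y$'s, and the trivial bound $\sum_y p_y\le 1$ only bounds the number of $y$'s with $p_y=\Omega(\alpha^3)$ by $O(\alpha^{-3})$, which blows the query budget by a factor $\alpha^{-1}$. To get $O(\alpha^{-2})$ one needs the tighter $\ell_2$ control on the noise (from the Parseval step inside \cref{lem:qGoldLev}'s proof): $\sum_y |\hat f(y)-\beta_y|^2\le 4\epsilon^2$, hence $\sum_y p_y\le 2\mu(X^*)+8\epsilon^2=O(\alpha)$, hence the number of $y$'s with $p_y\ge \Omega(\alpha^3)$ is $O(\alpha^{-2})$. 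With that argument your two-phase decomposition (cheap sampling, then amplitude estimation on the surviving candidates) does balance to $O(\alpha^{-7/2})$; the paper's shorter phrasing of ``repeat $O(\alpha^{-2})$ times'' implicitly leans on the same count.
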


\begin{proof}
    Fix a set $X \seq \F^n$ of size $\abs{X} = \alpha \cdot \abs{\F}^n$ for some $\alpha \in (0,1]$.
    By applying \cref{lem:apx-bogolyubov}, we obtain a subspace $V \seq \F^n$ of dimension $\dim(V) = n - t$ for $t = 4/\alpha^2$.
    Let $R \subseteq \F_2^n \setminus \{0\}$ be a set of vectors in $\F^n$ of size $t$ such that $V = \{v \in \F_2^n : \ip{v,r} = 0 \; \forall r \in R\}$. Indeed, we can let $R$ be a set of $t$ linearly independent vectors in $V^\perp$.
    
    By writing the vectors of $R$ in a matrix and diagonalizing the matrix,
    we obtain: (1) a set of vectors $B = \{b_1,\dots, b_t \in \F_2^n\}$ such that $\sp(B) = \sp(R)$,
    and (2) the corresponding pivot indices $k_1,\dots,k_t \in [n]$
    such that $b_j[k_j] = 1$ and $b_j[k_{j'}] = 0$ for all $j \neq j'$.

    Given a vector $y \in \F^n$, define $s = \sum_{j=1}^t \ip{y, b_j} \cdot \vec{e}_{k_j}$, where $(\vec{e}_i)_{i \in [n]}$ is the standard basis,
    and let $v = y-s$.
    It is straightforward to verify that $v \in V$.
    Then for any $j \in [t]$ we have
    \begin{equation*}
        \ip{v, b_j}
        = \ip{y, b_j}  - \sum_{j=1}^t c_j \cdot \ip{\vec{e}_{k_j},b_j}
        \overset{\text{(*)}}{=} \ip{y, b_j}  - c_j \cdot \ip{\vec{e}_{k_j},b_j}
        \overset{\text{(**)}}{=} \ip{y, b_j}  - \ip{y, b_j} = 0
        \enspace,
    \end{equation*}
    where (*) is because $\ip{\vec{e}_{k_{j'}},b_j} = b_j[i_{j'}] = 0$ for $j \neq j'$,
    and (**) is because $\ip{\vec{e}_{k_j},b_j} = b_j[i_{j}] = 1$.
    
    Now, since $v \in V$, by the guarantees of \cref{lem:apx-bogolyubov} it follows that 
    \begin{equation*}
        \Pr_{x_1,x_2,x_3 \in \F^n}[ x_1 \in X, x_2 \in X, x_3 \in X,v - x_1 - x_2 - x_3 \in X] \geq \alpha^5
        \enspace,
    \end{equation*}
    which is equivalent to 
    \begin{equation*}
        \Pr_{x_1,x_2,x_3 \in X}[v - x_1 - x_2 - x_3 \in X] \geq \alpha^2
        \enspace.
    \end{equation*}
    
    For the furthermore part, consider the oracle $\widetilde{O}_{X^*}(x)$.
    By the closeness between the Fourier coefficients of $1_X$ and $1_{X^*}$, we may apply \cref{lem:apx-bogolyubov}. Indeed, letting $R = \Spec_{X^*}(\frac34\alpha^{3/2})$ we have a set of Fourier coefficients satisfying the requirements of \cref{lem:apx-bogolyubov}, namely that $\Spec_{X^*}(\alpha^{3/2}) \seq R \seq \Spec_{X^*}(\alpha^{3/2}/2)$.
    
    By definition, for every input $x \in \F^n$, $\widetilde{O}_{X^*}(x)$ computes the indicator $1_{X^*}(x)$ correctly with probability at least $1-\alpha^{3/2}/10$. Now, we can apply \cref{lem:q-fourier-sampling} to find a $y\in \F^n$ such that
    $\hat{1}_{X^*}(y) \geq 3\alpha^{3/2}/4$ with $O(\alpha^{-3/2})$ blackbox queries to $\widetilde{O}_{X^*}$, and $O(n\alpha^{-3/2}\cdot\poly\log |\F|)$ additional one-qubit and two-qubit gates. 
    By the closeness between the Fourier coefficients of $1_X$ and $1_{X^*}$, these $y$'s also satisfy $\hat{1}_X(y) \geq \alpha^{3/2}/2$.
    Since there are at most $t\leq 4/\alpha^2$ such coefficients, to find all of them with probability $1-\delta$, we need to repeat the sampling procedure above $O(\alpha^{-2})$ times. Hence, the total query complexity is $O(\alpha^{-7/2})$ blackbox queries to $\widetilde{O}_{X^*}$, the total number of additional gates is
    \begin{align*}
        O\biggl(n\alpha^{-7/2}\cdot\log n\cdot\poly\log|\F|\biggr) \enspace,
    \end{align*}
    and the number of ancillary qubits is $O(n\cdot\alpha^{-2}\log|\F|)$.
\end{proof}

\section{Reductions for linear problems}
\label{sec:reduction}
In this section, we prove \cref{lem:main}. Namely, we will present an algorithm $\ALG'$ whose complexity is as required by the lemma statement. In order to prove the correctness of the algorithm, we will consider $M\in\F^{n\times n}$ such that $\Pr_{v,\ALG}[\ALG^M(v) = Mv]\geq\alpha$, and we will prove for this $M$ and every vector $v\in\F^n$, that  $\Pr_{\ALG'}[(\ALG')^M(v) = Mv]\geq 1-\delta$. To this end we fix a matrix $M$ satisfying the premise: 
\begin{align}\label{eq:goodM}
\Pr_{v,\ALG}[\ALG^M(v) = Mv]\geq\alpha \;.
\end{align}

Before describing the proof of \cref{lem:main} we introduce the following notation. For each $v \in \F^n$ let $p_v = \Pr_{\ALG}[\ALG(v) = Mv]$ be the probability that $\ALG$ computes correctly the output on input $v$, where the probability is taken  only over the quantum randomness of $\ALG$.

Before proceeding with the proof, we need the following definition of threshold sets.
    \begin{definition}\label{def:set-X}
        For an algorithm $\ALG^M$ and a matrix $M$ satisfying~\cref{eq:goodM}, we define the set of its \textit{good} inputs, i.e., the inputs $v\in\F^n$ such that $\ALG^M(v)=M\cdot v$ with a non-negligible probability as follows. 
        \begin{align*}
            X_{\kappa}\colon= \left\{v \in \F^n \colon p_v \geq {\kappa} \right\} ;.
        \end{align*}
    \end{definition}
    
\subsection{Properties of threshold sets}
\label{subsec:thresholds}
Below we prove several claims regarding the threshold sets $X_\kappa$.
    \begin{claim}\label{clm:density-of-X}
    For $\kappa \leq {\alpha}/2$, the density of $X_{\kappa}$ is at least $\alpha/2$, i.e., $\abs{X_{\kappa}} \geq \frac{\alpha}{2}\abs{\F}^n$.
    \end{claim}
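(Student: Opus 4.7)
The plan is to prove this by a standard averaging (Markov-type) argument applied to the random variable $p_v$ over uniformly chosen $v \in \F^n$. The key observation is that the hypothesis in \cref{eq:goodM} can be rewritten as a statement about the expectation of $p_v$, namely
\begin{equation*}
    \E_{v \in \F^n}[p_v] = \Pr_{v,\ALG}[\ALG^M(v) = Mv] \geq \alpha \;,
\end{equation*}
simply by decomposing the joint probability over $v$ and the internal quantum randomness of $\ALG$.

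First, since $X_\kappa \supseteq X_{\alpha/2}$ whenever $\kappa \leq \alpha/2$, it suffices to establish the bound for $\kappa = \alpha/2$. Let $q := \Pr_{v}[p_v \geq \alpha/2] = |X_{\alpha/2}|/|\F|^n$. Splitting the expectation according to whether $p_v \geq \alpha/2$ or not, and using the trivial bounds $p_v \leq 1$ in the former case and $p_v < \alpha/2$ in the latter, I would write
\begin{equation*}
    \alpha \;\leq\; \E_{v}[p_v] \;\leq\; q \cdot 1 \;+\; (1-q)\cdot \frac{\alpha}{2} \;\leq\; q + \frac{\alpha}{2} \;,
\end{equation*}
which rearranges to $q \geq \alpha/2$, giving $|X_{\alpha/2}| \geq (\alpha/2) \cdot |\F|^n$ and hence the claim for all $\kappa \leq \alpha/2$.

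There is no real obstacle here; the only thing to be careful about is the clean identification of $\Pr_{v,\ALG}[\ALG(v) = Mv]$ with $\E_v[p_v]$, which follows immediately from the tower property once one recalls that $p_v$ is defined as the conditional probability $\Pr_{\ALG}[\ALG(v) = Mv]$ given the input $v$. The claim is essentially a one-line Markov inequality applied to $1 - p_v$ in the form $\Pr_v[1 - p_v > 1 - \alpha/2] \leq (1-\alpha)/(1-\alpha/2)$, but the direct computation above is cleaner and matches the style used in the technical overview (\cref{tech:reduction}) where an analogous averaging argument is carried out for matrices.
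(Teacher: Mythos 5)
Your proof is correct and follows essentially the same averaging argument as the paper: both bound $\alpha \leq \E_v[p_v] \leq \Pr_v[p_v \geq \alpha/2] + \alpha/2$ and then use monotonicity of $X_\kappa$ in $\kappa$. The only cosmetic difference is that you reduce to $\kappa = \alpha/2$ upfront via $X_\kappa \supseteq X_{\alpha/2}$, whereas the paper folds that step into the final inequality $\Pr_v[p_v \geq \alpha/2] \leq \Pr_v[p_v \geq \kappa]$.
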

    \begin{proof}
    By the assumption of the lemma we have $\E_{v\in \F^n}[p_v] \geq \alpha$, and hence
    \begin{eqnarray*}
        \alpha  \leq  \E_v[p_v]
        \leq 1 \cdot \Pr_v[p_v \geq \alpha/2] + \alpha/2 \cdot \Pr_v[p_v < \alpha/2] 
        \leq  \Pr_v[p_v \geq \kappa] + \alpha/2 \;.
    \end{eqnarray*}
    Therefore, for all $\kappa \leq \alpha/2$, we have that 
    $\Pr_v[p_v \geq \kappa] \geq \alpha/2$, as required.
    \end{proof}
    
    Next, we prove that if we choose a random $\tau \in [\alpha/4, \alpha/2]$ and $\tau' = \tau - 1/K$ for a sufficiently large $K$,
    then, with high probability over the random choices of $\tau$ the sets $X_{\tau}$ and $X_{\tau'}$ will have almost the same density.

    \begin{claim}\label{clm:close-sets}
        For a parameter $K$
        let $r \in \{1,\dots, K\}$ be chosen uniformly at random.
        Let $\tau = (1+r/K) \frac{\alpha}{4}\in [\frac{\alpha}{4}+\frac{\alpha}{4K}, \frac{\alpha}{2}]$, and $\tau' = \tau - \frac{\alpha}{4K}$.
        Then, $\Pr[\frac{\abs{X_{\tau'}}}{\abs{\F^n}} - \frac{\abs{X_{\tau}}}{\abs{\F^n}} \leq \frac{2}{K}] > 1/2$.
    \end{claim}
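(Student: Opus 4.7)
The plan is a telescoping argument combined with Markov's inequality. For each $r \in \{1, \dots, K\}$, set $\tau_r = (1 + r/K)\tfrac{\alpha}{4}$ and $\tau_r' = \tau_r - \tfrac{\alpha}{4K}$, and define the ``gap'' random variable
\[
    d_r \;:=\; \frac{|X_{\tau_r'}|}{|\F|^n} - \frac{|X_{\tau_r}|}{|\F|^n}.
\]
Since by \cref{def:set-X} the sets $X_\kappa$ are monotonically nested in the threshold ($X_\kappa \supseteq X_{\kappa'}$ whenever $\kappa \leq \kappa'$), each $d_r$ is non-negative. The goal reduces to upper bounding $\Pr_r[d_r > 2/K]$.

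The key observation is that the half-open intervals $(\tau_r', \tau_r]$ are pairwise disjoint and tile the whole window $(\alpha/4, \alpha/2]$, because $\tau_{r+1}' = \tau_r$. Consequently the sum telescopes:
\[
    \sum_{r=1}^K d_r \;=\; \frac{|X_{\alpha/4}|}{|\F|^n} - \frac{|X_{\alpha/2}|}{|\F|^n} \;\leq\; 1.
\]
Therefore the expectation over a uniformly random $r$ satisfies $\E_r[d_r] \leq 1/K$, and Markov's inequality yields
\[
    \Pr_r\!\left[d_r > \tfrac{2}{K}\right] \;\leq\; \frac{\E_r[d_r]}{2/K} \;\leq\; \frac{1}{2},
\]
so that $\Pr_r[d_r \leq 2/K] \geq 1/2$.

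To obtain the strict inequality stated in the claim, I would invoke \cref{clm:density-of-X} with $\kappa = \alpha/2$, which gives $|X_{\alpha/2}|/|\F|^n \geq \alpha/2 > 0$. Substituting into the telescoping bound yields $\sum_r d_r \leq 1 - \alpha/2 < 1$, which tightens the Markov estimate to $\Pr_r[d_r > 2/K] < 1/2$ strictly. There is no real obstacle here; the argument is essentially a textbook averaging, and the only subtlety is making sure the endpoints of the intervals are chosen so that the telescoping is exact, which is already baked into the definitions of $\tau_r$ and $\tau_r'$.
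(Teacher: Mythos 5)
Your proposal is correct and follows essentially the same route as the paper: both bound $\E_r[d_r]\leq 1/K$ via the telescoping/tiling observation and then apply Markov's inequality, with you simply spelling out the telescoping sum explicitly. The only genuine addition is your use of \cref{clm:density-of-X} to shave the bound to $\sum_r d_r \leq 1-\alpha/2 < 1$, which cleanly upgrades Markov's $\geq 1/2$ to the strict $>1/2$ claimed; the paper's one-line proof glosses over this distinction.
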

    \begin{proof}
        Note that the interval $[\tau,\tau']$ is sampled by dividing $[\alpha/4,\alpha/2]$ into $K$ intervals, and taking one of them uniformly at random, it follows that
        $\E\left[\frac{\abs{X_{\tau'}}}{\abs{\F^n}} - \frac{\abs{X_{\tau}}}{\abs{\F^n}}\right] \leq 1/K$.
        The claim follows by Markov's inequality.
    \end{proof}
    
    In particular \cref{clm:close-sets} implies the following corollary.
    \begin{corollary}\label{cor:close-fourier-coeffs}
        For a parameter $K$
        let $r \in \{1,\dots, K\}$ be chosen uniformly at random.
        Let $\tau = (1+r/K) \frac{\alpha}{4}\in [\frac{\alpha}{4}+\frac{\alpha}{4K}, \frac{\alpha}{2}]$, and $\tau' = \tau - \frac{\alpha}{4K}$.
        
        Suppose that $X^* \seq \F^n$ is an arbitrary set such that $X_{\tau} \seq X^* \seq X_{\tau'}$. Then with probability at least $1/2$ (over the choice of $\tau$) it holds that 
        \begin{equation*}
            \abs{\hat{1}_{X^*}(r) - \hat{1}_{X_{\tau}}(r)} \leq \frac{2}{K}
        \end{equation*}
        for all $r \in \F^n$.
    \end{corollary}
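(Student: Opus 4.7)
\medskip

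\noindent\textbf{Proof proposal.} The plan is to reduce the bound on the Fourier coefficients directly to the bound on symmetric difference densities supplied by \cref{clm:close-sets}. The key observation is that since the Fourier transform is a linear operation, for every $r \in \F^n$ we have
\begin{equation*}
    \hat{1}_{X^*}(r) - \hat{1}_{X_\tau}(r)
    \;=\; \frac{1}{|\F|^n} \sum_{x \in \F^n} \omega^{x \cdot r}\bigl(1_{X^*}(x) - 1_{X_\tau}(x)\bigr).
\end{equation*}
Because $X_\tau \subseteq X^*$, the difference $1_{X^*}(x) - 1_{X_\tau}(x)$ is an indicator of $X^* \setminus X_\tau$ (and in particular takes values in $\{0,1\}$), so the triangle inequality gives
\begin{equation*}
    \bigl|\hat{1}_{X^*}(r) - \hat{1}_{X_\tau}(r)\bigr| \;\leq\; \frac{|X^* \setminus X_\tau|}{|\F|^n} \;=\; \frac{|X^*|}{|\F|^n} - \frac{|X_\tau|}{|\F|^n}.
\end{equation*}

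Next I would use the assumption $X^* \subseteq X_{\tau'}$ to upper bound the right-hand side by $\frac{|X_{\tau'}|}{|\F|^n} - \frac{|X_\tau|}{|\F|^n}$. The crucial point is that this bound is uniform in $r$: whatever the character $\chi_r$ is, the magnitude of the discrepancy is controlled by the single quantity $\frac{|X_{\tau'}| - |X_\tau|}{|\F|^n}$.

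Finally I would invoke \cref{clm:close-sets}, which asserts that for the random threshold $\tau = (1+r/K)\alpha/4$ and $\tau' = \tau - \alpha/(4K)$ described in the statement, with probability at least $1/2$ over the choice of $\tau$ we have $\frac{|X_{\tau'}|}{|\F|^n} - \frac{|X_\tau|}{|\F|^n} \leq 2/K$. Combining this with the bound from the previous paragraph yields $|\hat{1}_{X^*}(r) - \hat{1}_{X_\tau}(r)| \leq 2/K$ simultaneously for every $r \in \F^n$, which is the desired conclusion. There is no real obstacle here; the statement is essentially a repackaging of \cref{clm:close-sets} through the elementary fact that the $L^\infty$ norm of the Fourier transform is bounded by the $L^1$ norm of the original function.
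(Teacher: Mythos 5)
Your argument is correct and follows essentially the same route as the paper: both proofs reduce the Fourier-coefficient bound to the density gap $\frac{\abs{X^*}}{\abs{\F}^n} - \frac{\abs{X_\tau}}{\abs{\F}^n}$ via the elementary $L^\infty$-by-$L^1$ bound applied to the indicator of $X^* \setminus X_\tau$, then conclude using the nesting $X^* \subseteq X_{\tau'}$ together with \cref{clm:close-sets}. Your write-up is slightly more explicit about why the uniform-in-$r$ bound holds, but there is no substantive difference.
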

    
    \begin{proof}
        By \cref{clm:close-sets} we have $\Pr[\frac{\abs{X_{\tau'}}}{\abs{\F^n}} - \frac{\abs{X_{\tau}}}{\abs{\F^n}} \leq \frac{2}{K}] > 1/2$.
        Suppose this event happens and $\frac{\abs{X_{\tau'}}}{\abs{\F^n}} - \frac{\abs{X_{\tau}}}{\abs{\F^n}} \leq \frac{2}{K}$.
        Since $X_{\tau} \seq X^* \seq X_{\tau'}$, it follows that
        $\frac{\abs{X^*}}{\abs{\F^n}} - \frac{\abs{X_{\tau}}}{\abs{\F^n}} \leq \frac{2}{K}$.
        We observe that 
        \begin{equation*}
            \abs{\hat{1}_{X^*}(r) - \hat{1}_{X_{\tau}}(r)} \leq \abs{\frac{\abs{X_{\tau}}}{\abs{\F^n}} - \frac{\abs{X^*}}{\abs{\F^n}}} \leq \frac{2}{K} \;,
        \end{equation*}
        which finishes the proof of the corollary.
    \end{proof}

\subsection{Proof of Lemma~\ref{lem:main}}

In this section, we prove \cref{lem:main}, which we restate below for convenience.

\lemmain*

Let $\ALG^M$ be the average-case quantum algorithm from the lemma statement that has query access to the matrix $M$. 
Note that given the algorithm $\ALG^M$ we can define an algorithm $\ALG^M_{boost}$ that given an input $v$ makes $O(1/\alpha)$ calls to $\ALG^M(v)$ independently, and each time verifies the result using \cref{lem:qVerification}. Therefore, for every constant $\delta$, we may assume that there is at least an $\alpha/2 $ fraction of inputs on which $\ALG^M_{boost}$ outputs the correct answer with probability at least $1-\delta/8$. 

We define $(\ALG')^M$ as follows.

\begin{tcolorbox}[title=Algorithm~{1}: Reduction for linear problems] 
\paragraph{Input: $\ALG^M$, $v \in \F^{n}$, $\alpha \in (0,1)$}
\paragraph{Output: $M \cdot v$}
\begin{enumerate}
    \item Let $K = \frac{4}{\alpha^{3/2}}$, and let $r \in \{1,\dots, K\}$ be chosen uniformly at random.

    \item  \textsf{Setting a random threshold:} Let $\tau = (1+r/K) \frac{\alpha}{4}\in [\frac{\alpha}{4}, \frac{\alpha}{2}-\frac{1}{K}]$, and $\tau' = \tau - \frac{1}{K}$.

    \item \label{item:create-oracle-X*}\textsf{Membership Oracle for $X^*$:} Construct a quantum noisy membership oracle $\widetilde{O}_{X^*}$ using \cref{sec:indicator} such that with high probability $X_{\tau} \seq X^* \seq X_{\tau'}$.
    % \todo{add the dependence on $1/K$}\sathya{K is of order $\alpha^{3/2}$ and this goes in through \cref{sec:indicator}}

    \item \label{item:learning-X*}\textsf{Learning $X_\tau$:} Having $\widetilde{O}_{X^*}$ defined above, apply the quantum local correction lemma (\cref{lem:quantum-local-correction}) with error parameter $\delta=1/6$ to compute a collection of $t = O(1/\alpha^2)$ vectors $B = \{b_1,\dots,b_t \in \F^n\}$ and indices $k_1,k_2,...,k_t$.
    Indeed, by \cref{cor:close-fourier-coeffs}, we may apply \cref{lem:quantum-local-correction}.

    \item  \label{item:sample-from-X*} \textsf{Efficient sampling from $X_\tau$:} Sample $x_1, x_2, x_3$ from $X_{\tau}$ using \cref{cor:quantum-sampling} with $\kappa=\tau$. 
    
    % \item  \label{item:sample-from-X*} \textsf{Efficient sampling from $X_\tau$:} Sample $x_1, x_2, x_3$ from $X^*$ using \cref{cor:quantum-sampling} with $c=\alpha/4$. By \cref{clm:density-of-X} and \cref{clm:close-sets} with constant probability we have $X_{\tau} \seq X^*$, and  $\abs{X_\tau} \leq \abs{X^*} \leq (1+O(1/K))\abs{X_\tau}$. Therefore, a uniform sampling from $X^*$,  with high probability induces a uniform sampling from $X_\tau$.
    
    \item \label{item:self-correcting-X*}\textsf{Self-correcting $v$:} Using the set $B$ and indices $k_i$ computed in Step~3, let $s$ be the $t$-sparse vector defined as in \cref{lem:quantum-local-correction}:
    \begin{align*}s &= \sum_{j=1}^t \ip{v, b_j} \cdot \vec{e}_{k_j}\;, \text{ and}\\
    x_4 &= v - x_1 - x_2 - x_3 - s \;.
    \end{align*}

    \item \label{item:computing-Mv}\textsf{Computing $M\cdot v$:} Let $b = \ALG_{boost}^M(x_1) + \ALG^M_{boost}(x_2) + \ALG^M_{boost}(x_3)
    + \ALG^M_{boost}(x_4) + M\cdot s$, where $M\cdot s$ is computed by multiplying $M$ by all~$t$ non-zero entries of~$s$.
    
    \item \label{item:verify}\textsf{Verification:} Using \cref{lem:qVerify-simple} with $\eps=O(\alpha^2)$ verify whether $b = M \cdot v$. If the verification accepts, return $b$.
    
    \item \textsf{Probability amplification:} Repeat the steps~\ref{item:sample-from-X*}--\ref{item:verify} $O(1/\alpha^2)$ times.
\end{enumerate}
\end{tcolorbox}

    \paragraph{Correctness:} By our choice of the parameters and the justifications in the algorithm, we see that with constant probability we simultaneously have that
    (i) the quantum oracle $\widetilde{O}_{X^*}$ from \cref{lem:qVerification} for every $x \in \F^n$, outputs $\widetilde{O}_{X^*}(x)=1_{X^*}(x)$ in Step~\ref{item:create-oracle-X*}; (ii) The set $B$ and indices $k_i$ are computed correctly by \cref{lem:quantum-local-correction} in Step~\ref{item:learning-X*};
    (iii) Step~\ref{item:sample-from-X*} produces uniformly random samples $x_1,x_2,x_3 \in X_\tau$.
    
    Assuming successful executions of the previous steps, by \cref{lem:quantum-local-correction} for $x_4$ computed in Step~\ref{item:self-correcting-X*} we have that
    \begin{align*}
        \Pr_{x_1,x_2,x_3 \in X_{\tau}}[x_4 \in X_{\tau}] \geq \alpha^2 \;.
    \end{align*}
    Observing that for $x_1, x_2, x_3,x_4 \in X^*$ it holds that $\Pr[\ALG^M_{boost}(x_i) = Mx_i \;\forall i=1,2,3,4]\geq 1-\delta/2$.
    Therefore, with probability $\Omega(\alpha^2)$, the result computed in Step~\ref{item:computing-Mv} is the correct output $b=M\cdot v$.
    
    Given the representation of $X_\tau$ from Step~\ref{item:learning-X*}
    we repeat steps~\ref{item:sample-from-X*}--\ref{item:verify} $O(1/\alpha^2)$ times to amplify the probability of success to $2/3$ for every input vector~$v$.

    \paragraph{Query complexity:} 
    First, we bound the number of queries made by our algorithm in each iteration (Steps 3--8). This amounts to bounding the number of queries required to learn the heavy Fourier coefficients of $1_{X^*}$ in Step~\ref{item:learning-X*} (using the oracle from Step~\ref{item:create-oracle-X*}), queries in Step~\ref{item:sample-from-X*} to sample from $X_M$, queries in the four executions of $\ALG^M_{boost}$ in Step~\ref{item:computing-Mv} (each corresponding to $O(1/\alpha)$ calls to $\ALG^M$ and $O(1/\alpha)$ times verifying the computation), queries required to compute $M \cdot s$ in Step~\ref{item:computing-Mv}, and queries required for verification in Step~\ref{item:verify}. 
    
    By \cref{lem:quantum-local-correction}, in Step~\ref{item:learning-X*} we need at most $O(\alpha^{-7/2})$ queries to $\ALG^M$, and $O(n^{3/2}\cdot \alpha^{-7/2})$ queries to $U_M$ and $U_M^{\dagger}$
    to find the set $B$ and indices $k_1,k_2,...,k_t$. Note that the algorithm does not repeat Step~\ref{item:learning-X*} $O(\alpha^{-2})$ times.
    
    In step~\ref{item:sample-from-X*}, by \cref{cor:quantum-sampling}, we make $O(1)$ uses of the noisy unitary implementation of the approximate indicator function in \cref{sec:indicator}, which translates to $O(n^{3/2}\cdot\alpha^{-3/2})$ queries to $U_M$ and $U_M^{\dagger}$ and $O(\alpha^{-3/2})$ queries to $\ALG^M$. 
    In Step~\ref{item:computing-Mv}, each application of $\ALG^M_{boost}$ uses $O(1/\alpha)$ calls to $\ALG^M$ and  $O(1/\alpha)$ verification steps.
    In addition $s$ is $t$-sparse for $t=4/\alpha^2$, we compute $M \cdot s$ by making at most $O(n/\alpha^2)$ queries.
    Step~\ref{item:verify} makes $O(n^{3/2})$ queries to $U_M$ and $U_M^{\dagger}$.
    % Repeating this procedure for $O(1/\alpha^2)$ times in Step~8 leads to the bound of $O(n^{3/2}\cdot (1/\alpha^{4.5}))$ on the query complexity of the constructed algorithm.
    Repeating steps~\ref{item:sample-from-X*}--\ref{item:verify} for $O(1/\alpha^2)$ times leads to the bound of $O(n^{3/2}\cdot\alpha^{-7/2})$ on the query complexity of the constructed algorithm.

    \paragraph{Efficiency:} Now we count the additional gates used by our algorithm. The membership oracle from \cref{lem:qVerification} in Step~\ref{item:create-oracle-X*}, Fourier sampling circuit from \cref{lem:quantum-local-correction} in Step~\ref{item:learning-X*}, and efficient sampling circuit from \cref{cor:quantum-sampling} in Step~\ref{item:sample-from-X*} all use $O(n^{3/2}\log{n}\cdot \alpha^{-3/2} \cdot \poly(\log\abs{\F}))$ additional gates. Additionally, we need $O(n)$ gates to represent vectors $s$ and $v_i$ for $i \in \{1,2,3,4\}$.
    Thus, in total our algorithm uses $O(n^{3/2}\log{n}\cdot \alpha^{-7/2} \cdot \poly(\log\abs{\F}))$ additional one-qubit and two-qubit gates, and use $O(\alpha^{-2} \cdot n\log{n})$ ancillary qubits.

\paragraph{Large fields:}
For the case where the field size is large (say, $\abs{\F} \geq 10/\alpha$), there is a simpler worst case to average case reduction. Note that since we can efficiently verify a candidate solution using \cref{lem:qVerify-simple}, we can sample a line $\ell$ passing through the input vector $v$ and a uniformly random point $x \in \F^n$. It is easy to show that with probability at least $\Omega(\alpha)$, at least an $\Omega(\alpha)$-fraction of the points on $\ell$ are in the good set $X$, i.e., the set of points on which $\ALG^M$ outputs the correct answer with probability at least $\Omega(\alpha)$. Thus, with probability $\Omega(\alpha^3)$, we  sample two random points $a, b$ on $\ell$ that both belong to $X$, and then we can interpolate the value of $M \cdot v$ from $\ALG^M(a)$ and $\ALG^M(b)$. We note that this interpolation technique inherently requires large fields, while the proof presented above works over all finite fields.

\paragraph{Making $\ALG'$ a unitary:}
Finally, we note that it is straightforward to turn the algorithm $\ALG'$, which is produced by our reduction, into a unitary quantum algorithm. Yet, it is important to do this carefully so as to keep the overheads in gate complexity of the resulting unitary within the desired bound of $O(n^{3/2})$. We sketch how to do this below.

Suppose we have performed Step~\ref{item:learning-X*} and obtained a classical description of the set of vectors $B$ and the corresponding indices $k_1,\ldots,k_t$. We now need to perform Steps~5-9 in a unitary manner. In lieu of Step~ \ref{item:sample-from-X*}, we maintain three copies of the output state of the unitary $\mathsf{Q_{samp}}$ of \cref{cor:quantum-sampling} in three quantum registers, one each for the vectors $x_1,x_2$ and $x_3$. Step~\ref{item:self-correcting-X*} only involves arithmetic operations over $\F$, and we perform the required computation by a unitary circuit that computes the sparse shift vector in the form of a state $\ket{s}$, using $B$ and $k_1,\ldots,k_t$. In an ancillary register, we then compute a superposition of vectors corresponding to $v-x_1-x_2-x_3-s$, using the states representing the uniform samples $x_1,x_2$ and $x_3$.

By using amplitude amplification, we can construct a unitary implementation corresponding to $\ALG_{boost}^M$ with $O(\frac{1}{\sqrt{\alpha}}\log\frac{1}{\delta})$ uses of $\ALG_{\mathrm{verified}}$. Next, we apply this unitary to each of the four registers corresponding to $x_1,\ldots,x_4$, and also compute $M\cdot s$ using the circuit described in \cref{sec:quantumtools-verification}. Finally, we perform more arithmetic to compute a state corresponding to the vector $b$ of Step~\ref{item:computing-Mv} in an ancillary register. Applying $\mathsf{Q_{verify}}$ from \cref{lem:qVerify-simple} to the registers containing $v$ and $b$, we obtain a version of $\ALG'$ which attaches a flag indicating success to its output register. This can then be boosted using $O(1/\alpha)$ rounds of fixed-point amplitude amplification to obtain the desired unitary quantum algorithm.

\printbibliography

\appendix
\newpage
\section{Quantum singular value transformation techniques}
\label{app:qsvt}

In this appendix, we provide self-contained statements of techniques that make use of the quantum singular value transformations literature.

\subsection{Fixed-point amplitude amplification}
An issue that we frequently face while using standard amplitude amplification is the inability to \textit{a priori} estimate the number of iterations to perform without overshooting the target state. When the goal is to output a classical solution to a search problem, one can sidestep this issue by iterating over the number of rounds of amplification with a multiplicative step size (often called exponential search). But if the goal of amplification is to obtain a unitary subroutine that can later be composed with another unitary, this problem can be addressed by the technique of fixed-point amplitude amplification, which generalises the notion of fixed-point search \cite{Grover2005FixedPointSearch} and converges monotonically towards the marked state as the number of iterations is increased. We use this machinery primarily in the proof of \cref{lem:qVerify-simple}.

% We don't really need oblivious in the end, although we could use it just to ramp up the machinery
% Fixed-point oblivious amplitude amplification \cite{Guerreschi19fpoaa,Gilyen2019} is a method that combines the features of both oblivious amplitude amplification  and fixed-point quantum search \cite{Grover2005FixedPointSearch}. 

The precise version of fixed-point amplitude amplification that we need is expressed in the framework of quantum singular value transformations \cite[Theorem~27, arXiv version]{Gilyen2019}, as follows.

\begin{theorem}[Fixed-point amplitude amplification]
\label{thm:fpaa}
    Let $U$ be a unitary acting on $k+1$ qubits and $\Pi$ an orthogonal projector. If for an input state $\ket{\phi_{\mathrm{in}}}$ and $p>\delta>0$ we have $\Pi U\ket{\phi_{\mathrm{in}}}=p\ket{\phi_{\mathrm{tar}}}$, then for every $\epsilon>0$ there is a unitary $U_q$ such that $\norm{\ket{\phi_{\mathrm{tar}}}-U_q\ket{\phi_{\mathrm{in}}}}\leq \epsilon$. This $U_q$ requires $q=O(\frac{1}{\delta}\log\frac{1}{\epsilon})$ uses of $U$ and $U^{\dagger}$, one ancillary qubit, and $O(q)$ additional one- and two-qubit gates.
\end{theorem}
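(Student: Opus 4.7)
The plan is to derive fixed-point amplitude amplification as a direct instance of quantum singular value transformation (QSVT). First I would set up the block-encoding structure implicit in the hypothesis. Given $\Pi U\ket{\phi_{\mathrm{in}}}=p\ket{\phi_{\mathrm{tar}}}$, the operator $\Pi\, U\, \ketbra{\phi_{\mathrm{in}}}{\phi_{\mathrm{in}}}$ is rank-one with unique nonzero singular value $p$, right singular vector $\ket{\phi_{\mathrm{in}}}$, and left singular vector $\ket{\phi_{\mathrm{tar}}}$. Taken together with its complement $(I-\Pi)U$, this is precisely a one-sided block encoding of $p$ in the relevant two-dimensional invariant subspace --- the Grover plane --- where $U$ acts as a rotation by angle $\theta$ with $\sin\theta=p$, and all amplification takes place inside this plane.

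Second, I would invoke the QSVT compilation theorem. For any odd real polynomial $P$ of degree at most $q$ satisfying $\abs{P(x)}\le 1$ on $[-1,1]$, the framework produces a unitary that uses $U$ and $U^{\dagger}$ a total of $q$ times, interleaved with projector-controlled phase rotations that require one ancillary qubit and $O(1)$ additional one- and two-qubit gates per layer, for a total of $O(q)$ extra gates. The induced action on the invariant subspace sends $\ket{\phi_{\mathrm{in}}}\mapsto P(p)\ket{\phi_{\mathrm{tar}}}+\sqrt{1-P(p)^2}\ket{\phi_{\mathrm{tar}}^{\perp}}$, which matches the resource bounds promised by the theorem.

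Third, I would plug in the Low--Yoder--Chuang polynomial approximation to the sign function: for every $0<\delta<1$ and $\epsilon>0$ there exists an odd polynomial $P_{\delta,\epsilon}$ of degree $q=O(\delta^{-1}\log(1/\epsilon))$ with $\abs{P_{\delta,\epsilon}(x)}\le 1$ on $[-1,1]$ and $\abs{P_{\delta,\epsilon}(x)-1}\le\epsilon$ for every $x\in[\delta,1]$. Since $p>\delta$, composing this polynomial with the block encoding above via QSVT yields a unitary $U_q$ satisfying
\begin{equation*}
\norm{U_q\ket{\phi_{\mathrm{in}}}-\ket{\phi_{\mathrm{tar}}}}^2 = \abs{P_{\delta,\epsilon}(p)-1}^2 + \bigl(1-P_{\delta,\epsilon}(p)^2\bigr) \le O(\epsilon),
\end{equation*}
which after rescaling $\epsilon$ by an absolute constant gives the claim. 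Monotone convergence in $q$, the defining ``fixed-point'' feature, follows because increasing the polynomial degree only shrinks the tail $\abs{1-P_{\delta,\epsilon}(p)}$ throughout $[\delta,1]$, eliminating the oscillatory over/under-shoot of vanilla amplitude amplification.

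The main obstacle is the approximation-theoretic input: constructing a bounded odd polynomial that approximates $\mathrm{sign}$ to accuracy $\epsilon$ outside a neighbourhood of $0$ of radius $\delta$, while maintaining $\abs{P(x)}\le 1$ on the full interval $[-1,1]$, using only degree $O(\delta^{-1}\log(1/\epsilon))$. Once such a polynomial is in hand, the QSVT-to-circuit compilation is mechanical. The standard construction proceeds by truncating and smoothing an error-function approximation of $\mathrm{sign}$ via a Chebyshev expansion, with uniform boundedness on $[-1,1]$ achieved by a careful rescaling; this is where essentially all of the nontrivial work sits, while the quantum-circuit side reduces to a black-box application of QSVT.
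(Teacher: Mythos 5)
The paper does not prove this statement at all: Theorem~\ref{thm:fpaa} is quoted verbatim from Gilyén, Su, Low, and Wiebe (Theorem~27 of the arXiv version of ``Quantum singular value transformation and beyond''), and the appendix explicitly defers the circuit construction to that reference. Your proposal is a reasonable sketch of the argument that appears there --- block-encoding the rank-one projected unitary $\Pi U \ketbra{\phi_{\mathrm{in}}}{\phi_{\mathrm{in}}}$, compiling an odd bounded polynomial through QSVT with one ancilla and $O(1)$ extra gates per phase layer, and substituting the degree-$O(\delta^{-1}\log(1/\epsilon))$ sign-function approximant --- so there is no in-paper proof for you to diverge from; you have essentially reconstructed the cited derivation.

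Two minor imprecisions are worth flagging. First, the claim ``after rescaling $\epsilon$ by an absolute constant'' undersells what your displayed bound actually gives: from $\abs{P_{\delta,\epsilon}(p)-1}\le\epsilon$ and $\abs{P_{\delta,\epsilon}(p)}\le 1$ one has $1-P_{\delta,\epsilon}(p)^2=(1-P_{\delta,\epsilon}(p))(1+P_{\delta,\epsilon}(p))\le 2\epsilon$, so the \emph{squared} norm is at most $\epsilon^2+2\epsilon=O(\epsilon)$ and the norm itself is $O(\sqrt{\epsilon})$. To meet the stated guarantee you must instantiate the sign polynomial with accuracy $\Theta(\epsilon^2)$, which preserves $q=O(\delta^{-1}\log(1/\epsilon))$ because of the logarithm, but is a quadratic rescaling, not a constant one. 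Second, the ``one ancillary qubit, $O(q)$ additional gates'' tally silently assumes the reflections about $\mathrm{range}(\Pi)$ and about $\sp{\ket{\phi_{\mathrm{in}}}}$ each cost $O(1)$ gates; this holds in the paper's application (where $\Pi$ projects onto a flag qubit and $\ket{\phi_{\mathrm{in}}}=\ket{0^{k+1}}$) and is likewise implicit in the cited theorem, but it is worth making the assumption explicit if you want a self-contained proof. Neither point is a real gap.
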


\subsection{Singular value threshold projections} To obtain our unitary implementation of the (noisy) indicator function in \cref{sec:indicator}, we need two conditions to be simultaneously satisfied: (1) the singular values that are larger than a threshold $t$ are boosted close to unity, \textit{and} (2) the singular values that are below the threshold are suppressed close to zero. In particular, fixed-point amplitude amplification only guarantees the former condition and is not the right tool for the task. We hence invoke a more sophisticated tool, namely, quantum singular value threshold projection. We quote the following result that we use~\cite[Theorem~31, arXiv version]{Gilyen2019}.

\begin{theorem}[Singular value threshold projections]
\label{thm:qsv-projections}
    Let $U$ be a unitary acting on $k+1$ qubits and $\Pi,\widetilde{\Pi}$ be orthogonal projectors. Suppose 
    \[
        \widetilde{\Pi}U\Pi = \sum_{j=1}^m \zeta_j\ketbra{w_j}{v_j}
    \]
    for $\zeta_j\in(0,1)$ and $\{w_j\}$ and $\{v_j\}$ two sets of orthonormal vectors. Then for any threshold $t\in(0,1)$ and $\epsilon,\delta>0$, there exists a unitary $U_q$ that makes $q=O\left(\frac{1}{\delta}\log\frac{1}{\epsilon}\right)$ queries to $U$ and $U^{\dagger}$, and uses $O(q)$ additional one- and two-qubit gates, such that 
    \begin{align}
        \widetilde{\Pi}U_q\Pi = \sum_{j=1}^m \zeta'_j\ketbra{w_j}{v_j},~~\text{ s.t. }\forall j\in [m],~~\begin{cases}
            \zeta'_j \geq 1-\epsilon & \zeta_j \in [t+\delta/2,1]\\
        |\zeta'_j| \leq 1 \hphantom{1-\epsilon}\text{ if }& \zeta_j \in (t-\delta/2, t+\delta/2)\\
        \zeta'_j \leq \epsilon & \zeta_j \in [0,t-\delta/2]
        \end{cases}
    \end{align}
    % where for each $j\in [m]$,
    % \begin{align}
    %     \zeta'_j &\geq 1-\epsilon & \zeta_j \in [t+\nicefrac{\delta}{2},1]\nonumber\\
    %     |\zeta'_j| &\leq 1 & \zeta_j \in (t-\nicefrac{\delta}{2}, t+\nicefrac{\delta}{2})\nonumber\\
    %     \zeta'_j &\leq \epsilon & \zeta_j \in [0,t-\nicefrac{\delta}{2}]
    % \end{align}
    % 
\end{theorem}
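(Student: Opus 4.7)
The plan is to invoke the quantum singular value transformation (QSVT) framework with a polynomial that approximates a shifted step function at threshold $t$. Under the projectors $\Pi, \widetilde{\Pi}$, the unitary $U$ acts as a block encoding of the matrix $A = \widetilde{\Pi} U \Pi = \sum_j \zeta_j \ketbra{w_j}{v_j}$ whose singular values $\zeta_j$ lie in $[0,1]$. QSVT guarantees that for any polynomial $P$ of degree $d$ with definite parity and $\sup_{x \in [-1,1]} |P(x)| \leq 1$, there is a unitary $U_P$ using $d$ queries to $U$ and $U^{\dagger}$, interleaved with controlled reflections about $\Pi$ and $\widetilde{\Pi}$, such that $\widetilde{\Pi} U_P \Pi = \sum_j P(\zeta_j) \ketbra{w_j}{v_j}$. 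The overhead per query is only $O(1)$ one- and two-qubit gates from the reflections, giving $O(d)$ additional gates in total, and the output singular values $\zeta'_j = P(\zeta_j)$ have precisely the structure demanded by the theorem.

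The task therefore reduces to constructing a polynomial $P$ of degree $d = O(\delta^{-1} \log(1/\epsilon))$ such that $P(x) \geq 1-\epsilon$ on $[t + \delta/2, 1]$, $P(x) \leq \epsilon$ on $[0, t-\delta/2]$, and $|P(x)| \leq 1$ everywhere on $[-1,1]$. The standard route is to first build a degree-$O(\delta^{-1} \log(1/\epsilon))$ polynomial approximation of $\mathrm{sign}(x - t)$ on $[-1,1] \setminus (t - \delta/2, t + \delta/2)$, obtained either via a truncated Chebyshev expansion of the error function $\mathrm{erf}(\kappa(x-t))$ with $\kappa = \Theta(\delta^{-1} \log(1/\epsilon))$, or via smoothed partial sums of shifted Chebyshev polynomials. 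Setting $P(x) \approx \tfrac{1}{2}(1 + \widetilde{\mathrm{sgn}}(x-t))$ and rescaling slightly so $P$ maps into $[0,1]$ gives the desired approximating polynomial, and the parity requirement of QSVT is handled by splitting $P$ into its even and odd components and combining them via a linear combination of block encodings (at the cost of a single extra ancillary qubit and constant factor in the degree).

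The main obstacle is the polynomial approximation step: one must simultaneously control the sup-norm of $P$ on $[-1,1]$ (needed for the QSVT boundedness hypothesis), the two-sided pointwise error outside the wasteland interval $(t-\delta/2, t+\delta/2)$, and the degree. The tight scaling $d = O(\delta^{-1} \log(1/\epsilon))$ comes from balancing the Gaussian-like tail decay (which forces $\kappa = \Omega(\delta^{-1} \log(1/\epsilon))$) against the Chebyshev truncation error. Once this polynomial and its QSVT phase sequence are in hand, the rest of the proof is a direct application of the QSVT block-encoding theorem: alternate $d$ copies of $U$ and $U^{\dagger}$ with $d+1$ reflection phase gates, measure nothing, and read off the claimed relation $\widetilde{\Pi} U_q \Pi = \sum_j \zeta'_j \ketbra{w_j}{v_j}$ from the action on each singular subspace, which yields the three-case guarantee in the theorem statement.
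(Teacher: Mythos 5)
The paper does not prove this theorem: it quotes it verbatim as Theorem~31 of Gilyén, Su, Low and Wiebe (arXiv version of their QSVT paper) and defers entirely to that reference, so there is no in-paper argument to compare against. Your sketch is nevertheless in the right spirit and tracks the argument in that reference: view $\widetilde{\Pi}U\Pi$ as a block encoding of $A$, apply QSVT with a degree-$O(\delta^{-1}\log(1/\epsilon))$ polynomial approximating a step function at threshold $t$, and read off the transformed singular values.

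There is, however, a genuine gap in the parity handling. The target has the form $\widetilde{\Pi}U_q\Pi = \sum_j \zeta'_j\ketbra{w_j}{v_j}$, with \emph{distinct} projectors on the two sides; in QSVT this block structure is produced only by an \emph{odd} polynomial. An even polynomial instead yields the transform sandwiched as $\Pi U_\Phi \Pi = \sum_j P(\zeta_j)\ketbra{v_j}{v_j}$, landing in the range of $\Pi$ rather than $\widetilde{\Pi}$, and the cross-term $\widetilde{\Pi}U_\Phi\Pi$ for an even-parity circuit has no controlled singular-value form at all. Consequently, your plan of splitting $P(x) \approx \tfrac12\bigl(1 + \widetilde{\mathrm{sgn}}(x-t)\bigr)$ into its even and odd parts and recombining by LCU does not recombine into the desired $\ketbra{w_j}{v_j}$ structure: the even summand contributes uncontrolled $\widetilde{\Pi}\,(\cdot)\,\Pi$ blocks. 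The fix (which is what Gilyén et al.\ actually do) is to exploit that all singular values $\zeta_j$ are nonnegative and approximate instead the \emph{odd extension} $x \mapsto \mathrm{sgn}(x)\cdot\mathbbm{1}[\,|x|\geq t\,]$; an odd polynomial approximation of this function of degree $O(\delta^{-1}\log(1/\epsilon))$ exists (e.g.\ built from two shifted sign approximants), directly yields the three-case guarantee, and also spares you the extra ancilla and the LCU overhead. A minor secondary nitpick: the Gaussian tail $e^{-\kappa^2\delta^2}$ forces $\kappa = \Theta(\delta^{-1}\sqrt{\log(1/\epsilon)})$, not $\Theta(\delta^{-1}\log(1/\epsilon))$; the remaining $\sqrt{\log(1/\epsilon)}$ factor in the degree is picked up in the Chebyshev truncation step, so the final bound $O(\delta^{-1}\log(1/\epsilon))$ is unaffected.
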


\paragraph{Computing the circuits for $U_q$.} The two theorems stated above are in fact constructive and provide explicit circuits for the unitaries $U_q$. In the interest of space, we simply note here that there is a classical runtime overhead of $O(q^3\poly\log(q/\epsilon))$, and refer the interested reader to \cite{Gilyen2019,Haah2019} for details. For our applications, this overhead is of order $\widetilde{O}(n^{3/2})$.

\end{document}